\documentclass[12pt]{article}

\usepackage{ae,aecompl}

\usepackage{amsmath}
\usepackage{amssymb} 
\usepackage{amsthm} 
\usepackage{bm} 
\usepackage{commath} 
\usepackage{geometry}
\usepackage{graphicx} 
\usepackage{mathrsfs} 
\usepackage{mleftright} 
\mleftright 
\usepackage[authoryear]{natbib}
\usepackage{paralist} 
\usepackage{setspace}
\usepackage{verbatim} 
\usepackage{enumitem}
\usepackage{float}

\usepackage[backref=page, hyperindex=true]{hyperref}
\hypersetup{%
    breaklinks=true,
    colorlinks=true,
    linkcolor=magenta,
    urlcolor=blue,
    citecolor=blue,
    pdfstartview={FitH},
    unicode=true
 }

\let \backreforig \backref
\renewcommand*{\backref}[1]{[\backreforig{#1}]}

\theoremstyle{plain}
\newtheorem{assumption}{Assumption}
\numberwithin{assumption}{section}
\theoremstyle{plain}
\newtheorem{lem}{Lemma}
\numberwithin{lem}{section}
\theoremstyle{plain}
\newtheorem{thm}{Theorem}
\numberwithin{thm}{section}
\theoremstyle{definition}
\newtheorem{rem}{Remark}
\numberwithin{rem}{section}

\def\E{\mathrm{E}}

\def\R{\mathbb{R}}
\def\max{\mathrm{max}}
\def\tr{\mathrm{tr}}
\def\var{\mathrm{var}}

\def\bzero{\boldsymbol{0}}
\def\bone{\boldsymbol{1}}

\def\bm{\boldsymbol{m}}

\def\bx{\boldsymbol{x}}
\def\bX{\boldsymbol{X}}

\def\bY{\boldsymbol{Y}}

\def\ba{\boldsymbol{a}}

\def\bZ{\boldsymbol{Z}}

\def\bbA{\mathbf{A}}
\def\bbB{\mathbf{B}}
\def\bI{\mathbf{I}}

\def\cN{\mathcal{N}}

\def\bh{\boldsymbol{h}}
\def\bmu{\boldsymbol{\mu}}
\def\bet{\boldsymbol{\eta}}

\def\btheta{\boldsymbol{\theta}}
\def\bgamma{\boldsymbol{\gamma}}
\def\bphi{\boldsymbol{\phi}}
\def\bpsi{\boldsymbol{\psi}}

\def\bSigma{\boldsymbol{\Sigma}}
\def\bOmega{\boldsymbol{\Omega}}
\def\bTheta{\boldsymbol{\Theta}}

\newcommand*{\argmin}{\operatornamewithlimits{argmin}\limits}
\def\vec{\mathrm{vec}}

\newlist{inlinenum}{enumerate*}{1}
\setlist[inlinenum]{label=(\arabic*), ref=\arabic*}

\usepackage[affil-it]{authblk} 

\begin{document}

\title{Triple/Double-Debiased Lasso\thanks{We thank Jin Hahn, Whitney Newey, and Andres Santos for their helpful comments. S{\o}rensen is grateful for research support from the Aarhus Center for
Econometrics (ACE) funded by the Danish National Research Foundation grant
number DNRF186.
}}

\author[1]{Denis Chetverikov}
\author[2,4]{Jesper R.-V.~S{\o}rensen}
\author[3]{Aleh Tsyvinski}
\affil[1]{\small{University of California, Los Angeles}}
\affil[2]{\small{University of Copenhagen}}
\affil[3]{\small{Yale University}}
\affil[4]{\small{Aarhus Center for Econometrics (ACE)}}

\date{\today}

\maketitle

\begin{abstract}
In this paper, we propose a triple (or double-debiased) Lasso estimator for inference on a low-dimensional parameter in high-dimensional linear regression models. The estimator is based on a moment function that satisfies not only first- but also second-order Neyman orthogonality conditions, thereby eliminating both the leading bias and the second-order bias induced by regularization. We derive an asymptotic linear representation for the proposed estimator and show that its remainder terms are never larger and are often smaller in order than those in the corresponding asymptotic linear representation for the standard double Lasso estimator. Because of this improvement, the triple Lasso estimator often yields more accurate finite-sample inference and confidence intervals with better coverage. Monte Carlo simulations confirm these gains. In addition, we provide a general recursive formula for constructing higher-order Neyman orthogonal moment functions in Z-estimation problems, which underlies the proposed estimator as a special case.
\end{abstract}


\section{Introduction}

Inference on a low-dimensional target parameter in the presence of many controls has become a central problem in econometrics because making a conditional exogeneity assumption plausible in empirical applications often requires using a rich set of covariates \citep{BCFH17}. In such settings, in addition to the target parameter, we also have a high-dimensional nuisance parameter that has to be estimated via machine learning methods such as the Lasso estimator. Unfortunately, regularization underlying these machine learning methods generates non-regular first-stage bias that typically invalidates naive plug-in inference procedures for the target parameter. A major insight of the double/debiased machine learning literature is that this problem can often be overcome by replacing the original moment equation with the one based on a moment function whose first derivative with respect to the nuisance parameter vanishes at the true parameter values (first-order Neyman orthogonality condition), so that sufficiently accurate regularized estimators of the nuisance parameter matter only through second-order remainder terms and the target parameter can be estimated at the usual $1/\sqrt{n}$ rate \citep{CHS15}. In this paper we show that, in the high-dimensional linear regression model, one can go one step further and construct an estimator that removes not only the first-order effect of nuisance parameter estimation but also the second-order effect. Because of the steps in the construction of our estimator, we refer to it as a \emph{triple Lasso} estimator, or, equivalently, a \emph{double-debiased Lasso} estimator.

Our starting point is the standard high-dimensional linear regression model in which the outcome is regressed on a scalar treatment variable of interest and a potentially high-dimensional vector of controls. In this model, the conventional double Lasso estimator \citep{BCH14} is based on a moment function that satisfies the first-order Neyman orthogonality condition and is often remarkably successful even when the underlying Lasso estimators are far from being $\sqrt{n}$-consistent. However, the moment function behind the double Lasso estimator does not satisfy the {\em second-order} Neyman orthogonality condition, and this means that the double Lasso estimator may perform poorly if second-order remainder terms are large enough to matter in empirically relevant samples, which happens when the first-stage Lasso estimators converge rather slowly. Our first main contribution is thus to construct a moment function that satisfies not only the first-order but also the second-order Neyman orthogonality condition. Our moment function is obtained from the double Lasso moment function by subtracting a carefully chosen quadratic correction term, weighted by the inverse Gram matrix of the controls, so that the second-order bias term is cancelled by construction. In turn, our second main contribution is to propose the triple Lasso estimator that solves a moment condition based on this moment function, with all underlying nuisance parameters being estimated by the Lasso-type methods.

We derive an asymptotic linear representation for our triple Lasso estimator and compare its remainder terms with those of the usual double Lasso estimator. We show that whenever the double Lasso estimator is already asymptotically normal, the remainder terms for the triple Lasso estimator are never larger and are often strictly smaller in order than the corresponding terms for the double Lasso estimator. As a result, the distribution of the triple Lasso estimator is often closer to the properly centered normal distribution than that of the double Lasso estimator. Moreover, in the special case of exact sparsity and consistent Lasso screening, the triple Lasso estimator can remain asymptotically normal along sequences of data-generating processes for which the double Lasso estimator fails to be asymptotically normal. 

Our Monte Carlo simulations confirm that the implications of the asymptotic theory remain relevant in empirically meaningful finite-sample settings. Across designs that vary the sample size, the correlation structure of the controls, and the sparsity of the nuisance parameters, although the variance of the triple Lasso estimator is typically somewhat larger than that of the double Lasso estimator, the former estimator often has substantially smaller squared bias and the mean squared error, especially in the harder designs. More importantly, however, the triple Lasso estimator often leads to more precise inference. For example, in one of the designs, the empirical coverage of nominal $95\%$ confidence intervals rises from about $67\%$ under the double Lasso estimator to about $92\%$ under the triple Lasso estimator. This happens despite the fact that the triple Lasso confidence intervals are only slightly longer on average.

As our third main contribution, we extend the construction of the second-order Neyman orthogonal moment function to a general Z-estimation problem. In fact, we provide a general recursive formula that can be used to construct moment functions in the Z-estimation problem that satisfy the Neyman orthogonality condition to {\em any} desired order. We further illustrate this construction in an M-estimation problem with a single-index structure, delivering the results for the high-dimensional linear regression model as a special case.


Our paper is related to several strands of the literature. First, it builds directly on the foundational work on double/debiased Lasso and post-selection inference in high-dimensional linear regression models, including \cite{BCH14}, \cite{JM14}, \cite{GBRD14}, and \cite{ZZ14}, all of which established that valid inference on low-dimensional components is possible after regularization when the moment function is appropriately debiased. Closely related is the double machine learning literature of \cite{CHS15} and \cite{CCDDHNR18}, from which we adopt the emphasis on orthogonal moment functions, while extending that perspective from first-order to higher-order orthogonality. An alternative approach to inference in high-dimensional linear regression models is developed by \cite{AKK20}, who propose explicitly accounting for the bias induced by regularization rather than eliminating it through orthogonalization; an advantage of this approach is that it achieves validity under weaker structural conditions, although it requires the researcher to specify a bound on the magnitude of the nuisance coefficients. Second, our paper is connected to the broader literature on approximately sparse high-dimensional models, including \cite{BC11}, \cite{belloni_sparse_2012}, and \cite{G16}, from which we borrow technical tools, in particular the convergence rates for Lasso estimators in linear regression models and for node-wise Lasso estimators of inverse Gram matrices. Third, our paper is related to the emerging literature on higher-order orthogonality, including \cite{MSZ18} and \cite{BJW24}. In particular, \cite{MSZ18} construct higher-order Neyman orthogonal moment functions for the partially linear model under {\em non}-Gaussian first-stage errors, while \cite{BJW24} develop a general algorithm for constructing higher-order Neyman orthogonal moment functions in likelihood-based models. These results are not directly applicable in our setting, as we focus on regression-based estimation rather than likelihood models and do not restrict the distribution of noise. Also, our results do not contradict the negative result in \cite{MSZ18} on the existence of higher-order Neyman orthogonal moment functions in the partially linear model with Gaussian first-stage errors, because our moment function depends not only on a vector of observable random variables but also on an independent copy of this vector, and, moreover, our orthogonality requirement is formulated in terms of ordinary derivatives rather than the stronger functional-derivative notion studied there, reflecting the parametric nature of our regression model. Finally, our paper is also related to earlier work on higher-order influence functions and U-statistic-based estimators in semiparametric and nonparametric models, including \cite{RLTV08}, \cite{RLMTV17}, \cite{NR18}, and \cite{LMNR23}, which develop higher-order expansions of functionals and corresponding estimators that balance bias and variance in settings where first-order influence function methods are insufficient. This literature shows that higher-order corrections can be used to relax smoothness requirements and/or improve rates in semiparametric and nonparametric models.

The rest of the paper is organized as follows. Section \ref{sec: triple lasso} introduces the triple Lasso moment function and the corresponding estimator in the high-dimensional linear regression model, proves second-order Neyman orthogonality of the proposed moment function, and derives the asymptotic linear representation and asymptotic normality result for the triple Lasso estimator. Section \ref{sec: extensions} develops the general recursive formula for constructing higher-order Neyman orthogonal moment functions for the Z-estimation problem and applies it to M-estimators with a single-index structure, thereby showing that the logic behind the triple Lasso estimator is a part of a broader higher-order orthogonalization principle. Section \ref{sec: simulations} reports Monte Carlo results showing that the triple Lasso estimator can substantially reduce bias and deliver more reliable inference in comparison with the double Lasso estimator under certain data-generating processes.

\subsection*{Notation}
For any $p\in\mathbb N$, we denote $[p]:=\{1,\dots,p\}$ and $\bzero_p := (0,\dots,0)^\top \in\R^p$. Also, we let $\bzero_{p\times p}$ and $\bI_p$ be the zero and identity matrices in $\R^{p\times p}$, respectively. In addition, for any finite set of positive integers $I$, we use $\E_I[f(\bX_i)]$ to denote the average value of $f(\bX_i)$ as $i$ varies over $I$, namely $\E_I[f(\bX_i)] = |I|^{-1}\sum_{i\in I}f(\bX_i)$. For numbers $a_n$ and $b_n$, $n\in\mathbb N$, we write $a_n\lesssim b_n$ if there exists a constant $C>0$ such that $|a_n|\leq C|b_n|$ for all $n\in \mathbb N$. For random variables $V_n$ and $R_n$, $n\in\mathbb N$, we write $V_n\lesssim_P R_n$ if $V_n = Y_n R_n$ for some $Y_n = O_P(1)$. Finally, for any matrix $\bbA = (A_{j,k})_{j,k=1}^p$, we denote $\|\bbA\|_{\max} := \max_{1\leq j,k\leq p}|A_{j,k}|$ and $\|\bbA\|_{\infty} := \max_{1\leq i\leq p}\sum_{j=1}^p |A_{j,k}|$.

\section{Triple Lasso Estimator}\label{sec: triple lasso}
In this section, we propose a moment function for the high-dimensional linear regression model that satisfies the second-order Neyman orthogonality condition and introduce the corresponding triple Lasso estimator. In addition, we derive the $\sqrt n$-consistency and asymptotic normality result for our triple Lasso estimator and compare the remainder terms of the corresponding asymptotic linear representation to those underlying the asymptotic linear representation for the double Lasso estimator.
\subsection{Model, Second-Order Neyman Orthogonality, and Estimation}
Consider the linear regression model
\begin{equation}\label{eq: model}
Y = D\beta_0 + \bX^\top \btheta_0 + \varepsilon,\quad \E\left[\varepsilon\middle| D,\bX\right]=0,
\end{equation}
where $Y\in\R$ is the outcome, $D\in\R$ is a regressor of interest, $\bX = (X_1,\dots,X_p)^\top\in\R^p$ is a vector of controls, $\beta_0\in\R$ is the parameter of interest, and $\btheta_0 = (\theta_{0,1},\dots,\theta_{0,p})^\top\in\R^p$ is a vector of nuisance parameters. Let $(\bX_i,D_i,Y_i)$, $i=1,\dots,n$, be a random sample from the distribution of $(\bX,D,Y)$, where we denote $\bX_i = (X_{i,1},\dots,X_{i, p})^\top$ for all $i\in[n]$. In this paper, we primarily focus on the high-dimensional setting in which the number of controls $p$ may be comparable to or larger than the sample size $n$.

A conventional way to estimate $\beta_0$ in model \eqref{eq: model} when $p$ is large is via the double Lasso estimator.\footnote{Various versions of the double/debiased Lasso estimator appeared in \cite{BCH14}, \cite{JM14}, \cite{GBRD14}, and \cite{ZZ14}.} To describe it, let the projection of $D$ on $\bX$ be
\begin{equation}\label{eq:first-stage}
D = \bX^\top \bgamma_0 + \nu, 
\quad \E[\nu\bX] = \bzero_p ,
\end{equation}
where $\bgamma_0 = (\gamma_{0,1},\dots,\gamma_{0,p})^\top\in\R^p$ is a vector of parameters, and let the reduced-form regression of $Y$ on $\bX$ be
\begin{equation}\label{eq:reduced-form}
Y = \bX^\top \bphi_0 + e,
\quad \E[e\bX] = \bzero_p ,
\end{equation}
where $\bphi_0 = (\phi_{0,1},\dots,\phi_{0,p})^\top := \btheta_0 + \beta_0 \bgamma_0$ and $e := \varepsilon + \beta_0\nu$.
One version of the double Lasso estimator is constructed as follows. 
First, estimate $\bphi_0$ by running the Lasso regression of $Y$ on $\bX$, yielding $\widehat\bphi$. 
Second, estimate $\bgamma_0$ by running the Lasso regression of $D$ on $\bX$, yielding $\widehat\bgamma$. 
Third, estimate $\beta_0$ by running the OLS regression of $Y-\bX^\top\widehat\bphi$ on $D-\bX^\top\widehat\bgamma$, yielding the double Lasso estimator $\widehat\beta^{DL}$. Under certain conditions, the double Lasso estimator is $\sqrt n$-consistent and asymptotically normal, despite the fact that the underlying Lasso estimators $\widehat\bphi$ and $\widehat\bgamma$ are not $\sqrt n$-consistent.

To understand why this happens, introduce the moment function $\psi^{DL}\colon \R\times \R^p\times \R^p$ by
\[
\psi^{DL}(\beta, \bgamma, \bphi) := \E[(Y - \bX^\top\bphi - \beta(D - \bX^\top\bgamma))(D - \bX^\top\bgamma)]
\]
and observe that the true value $\beta_0$ satisfies the moment equation
\begin{equation}\label{eq: moment condition}
\psi^{DL}(\beta,\bgamma_0,\bphi_0) = 0.
\end{equation}
The double Lasso estimator $\widehat\beta^{DL}$ solves an empirical analogue of this equation obtained by replacing the population expectation with the empirical expectation and the true values $\bgamma_0$ and $\bphi_0$ with their estimators $\widehat\bgamma$ and $\widehat\bphi$.
The key fact underlying the $\sqrt n$-consistency and asymptotic normality of the double Lasso estimator is that there is no first-order effect of replacing $\bgamma_0$ and $\bphi_0$ by $\widehat\bgamma$ and $\widehat\bphi$ on the moment function $\psi^{DL}$:
\begin{equation}\label{eq: neyman orthogonality}
\frac{\partial\psi^{DL}(\beta_0,\bgamma_0,\bphi_0)}{\partial\bgamma} = \bzero_p
\quad\text{and}\quad
\frac{\partial\psi^{DL}(\beta_0,\bgamma_0,\bphi_0)}{\partial\bphi} = \bzero_p.
\end{equation}
Indeed, provided that the estimators $\widehat\bgamma$ and $\widehat\bphi$ are sufficiently precise so that the second-order effects are asymptotically negligible, condition \eqref{eq: neyman orthogonality} ensures that the estimator $\widehat\beta^{DL}$ is asymptotically equivalent to the infeasible estimator $\widetilde\beta^{DL}$ obtained by solving the empirical analogue of \eqref{eq: moment condition} in which only the population expectation is replaced by the empirical expectation. The latter estimator is in turn $\sqrt n$-consistent and asymptotically normal by the classic $M$-estimation theory.

Because of \eqref{eq: neyman orthogonality}, the moment function $\psi^{DL}$ is said to satisfy the (first-order) {\em Neyman orthogonality} condition. To express this condition more compactly, let $\bet := (\bgamma^\top,\bphi^\top)^\top$ and $\bet_0 := (\bgamma_0^\top,\bphi_0^\top)^\top$. Then \eqref{eq: neyman orthogonality} can be written equivalently as
\[
\frac{\partial \psi^{DL}(\beta_0,\bet_0) }{\partial \bet}= \bzero_{2p}.
\]
Similarly, one can define the second-order Neyman orthogonality. Following \cite{MSZ18} and \cite{BJW24}, we say that a moment function $\psi\colon\R\times\R^q \to\R$ satisfies the {\em second-order Neyman orthogonality} condition if
\[
\frac{\partial \psi(\beta_0,\bet_0)}{\partial\bet} =\bzero_q
\quad\text{and}\quad
\frac{\partial^2 \psi(\beta_0,\bet_0)}{\partial\bet\partial\bet^\top}=\bzero_{q\times q},
\]
where $\bet_0\in\R^q$ is a vector of nuisance parameters. Intuitively, using estimators based on moment functions satisfying the second-order Neyman orthogonality condition is beneficial because such moment functions eliminate not only the first-order effects of replacing the true values of the nuisance parameters by the corresponding estimators but also the second-order effects. Unfortunately, however, the moment function $\psi^{DL}$ underlying the double Lasso estimator does not have this property. Indeed, it is straightforward to verify that
\[
\frac{\partial^2\psi^{DL}(\beta_0,\bgamma_0,\bphi_0)}{\partial\bgamma\partial\bphi}
=
\E[\bX\bX^\top]
\neq
\bzero_{p\times p}.
\]
It therefore follows that convergence of the double Lasso estimator to the properly centered normal distribution may fail or be slow if the Lasso estimators $\widehat\bgamma$ and $\widehat\bphi$ are not sufficiently precise and the second-order effects are non-negligible.

Motivated by this observation, we propose a novel moment function for estimating $\beta_0$ in model \eqref{eq: model} that satisfies not only the first-order Neyman orthogonality condition but also the second-order Neyman orthogonality condition. We then define a {\em triple Lasso} estimator that solves an empirical analogue of the moment condition implied by this function and derive its asymptotic properties.

In order to introduce our moment function, let $\bSigma_0 := \E[\bX\bX^\top]$ denote the population Gram matrix of the controls $\bX$, and let $\bTheta_0 := \bSigma_0^{-1}$ denote its inverse. Define the function $\psi^{ADJ} \colon \R\times \R^p\times \R^p \times \R^{p\times p}$ by
\[
\psi^{ADJ}(\beta, \bgamma, \bphi, \bTheta) := \E[(Y - \bX^\top\bphi - \beta(D - \bX^\top\bgamma))\bX^\top \bTheta \tilde\bX(\tilde D-\tilde \bX^\top\bgamma)],
\]
where $(\tilde\bX,\tilde D)$ is a copy of $(\bX,D)$ that is independent of $(\bX,D,Y)$. Our moment function $\psi^{TL}\colon \R\times\R^{2p+p^2}\to\R$ for estimating $\beta_0$ in model \eqref{eq: model} is then defined as
\begin{equation}\label{eq: double orthogonal score}
\psi^{TL}(\beta, \bet) := \psi^{DL}(\beta, \bgamma, \bphi) - \psi^{ADJ}(\beta, \bgamma, \bphi, \bTheta),\quad\bet: = (\bgamma^\top,\bphi^\top,\vec(\bTheta)^\top)^\top.
\end{equation}
In the following lemma, we show that the true value $\beta_0$  solves the moment equation
\begin{equation}\label{eq: moment condition triple lasso}
\psi^{TL}(\beta, \bet_0) = 0,
\end{equation}
where $\bet_0 := (\bgamma^\top_0,\bphi^\top_0,\vec(\bTheta_0)^\top)^\top$ and that the moment function $\psi^{TL}$ satisfies the second-order Neyman orthogonality condition. The proof of this lemma can be found in the Appendix.

\begin{lem}\label{lem: novel score}
As long as the moments $\E[Y^2]$, $\E[D^2]$, and $\E[\|\bX\|_2^2]$ are finite and the matrix $\E[\bX\bX^\top]$ is invertible, we have $\psi^{TL}(\beta_0, \bet_0) = 0$ and all first- and second-order derivatives of the function $\bet\mapsto \psi^{TL}(\beta_0, \bet)$ at $\bet = \bet_0$ are zero.
\end{lem}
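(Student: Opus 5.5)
The plan is to compute $\psi^{TL}(\beta_0,\bet)$ in closed form as a polynomial in the deviation $\bet-\bet_0$. Once it is written this way, the claim reduces to showing that this polynomial has no constant, linear, or quadratic terms.

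\textbf{Step 1 (factorize the adjustment term).} Set $U(\beta,\bgamma,\bphi) := Y-\bX^\top\bphi-\beta(D-\bX^\top\bgamma)$ and $V(\bgamma):=D-\bX^\top\bgamma$. The copy $(\tilde\bX,\tilde D)$ is independent of $(\bX,D,Y)$, and the moment assumptions together with Cauchy--Schwarz make all products integrable. Hence
\[
\psi^{ADJ}(\beta,\bgamma,\bphi,\bTheta)=\E[U\bX]^\top\,\bTheta\,\E[\bX V].
\]

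\textbf{Step 2 (evaluate at the truth).} Fix $\beta=\beta_0$ and write $\bphi=\bphi_0+\bh_\phi$, $\bgamma=\bgamma_0+\bh_\gamma$, $\bTheta=\bTheta_0+\bm{H}$. Put $\bm\delta:=\bh_\phi-\beta_0\bh_\gamma$. Using \eqref{eq:first-stage}, \eqref{eq:reduced-form} and $e=\varepsilon+\beta_0\nu$, we get
\[
U=\varepsilon-\bX^\top\bm\delta,\qquad V=\nu-\bX^\top\bh_\gamma .
\]
The facts needed are $\E[\varepsilon\bX]=\bzero_p$ and $\E[\varepsilon\nu]=0$, both from $\E[\varepsilon\mid D,\bX]=0$ since $\nu$ is a function of $(D,\bX)$, together with $\E[\nu\bX]=\bzero_p$. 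These give
\[
\psi^{DL}=\bm\delta^\top\bSigma_0\bh_\gamma,\qquad \E[U\bX]=-\bSigma_0\bm\delta,\qquad \E[\bX V]=-\bSigma_0\bh_\gamma .
\]

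\textbf{Step 3 (combine).} Substituting into the factorized adjustment term gives
\[
\psi^{ADJ}=\bm\delta^\top\bSigma_0(\bTheta_0+\bm H)\bSigma_0\bh_\gamma=\bm\delta^\top\bSigma_0\bh_\gamma+\bm\delta^\top\bSigma_0\bm H\bSigma_0\bh_\gamma,
\]
because $\bSigma_0\bTheta_0=\bI_p$. Therefore
\[
\psi^{TL}(\beta_0,\bet)=-(\bh_\phi-\beta_0\bh_\gamma)^\top\bSigma_0\bm H\bSigma_0\bh_\gamma .
\]
This is a homogeneous cubic polynomial in $(\bh_\gamma,\bh_\phi,\vec(\bm H))$. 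It vanishes at $\bet=\bet_0$, and all of its first and second partial derivatives at the origin are zero, since every monomial has degree three. This proves Lemma~\ref{lem: novel score}.

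\textbf{Main obstacle.} There is no real obstacle here. The only points needing care are the justification for factorizing $\psi^{ADJ}$ through independence, which requires integrability of $\bX U$ and $\bX V$ from the finite second moments, and verifying $\E[\varepsilon\nu]=0$ from the conditional mean restriction.
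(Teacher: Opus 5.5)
Your proof is correct, but it takes a different route from the paper's.

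The paper works derivative by derivative. It computes the gradient and Hessian blocks of $\psi^{DL}$ and of $\psi^{ADJ}$ separately at $\bet_0$, handling the $\vec(\bTheta)$ blocks only ``by similar calculations''. It then observes that the two sets of blocks coincide: for example, both pieces have $\partial^2/\partial\bgamma\partial\bgamma^\top=-2\beta_0\E[\bX\bX^\top]$ and $\partial^2/\partial\bgamma\partial\bphi^\top=\E[\bX\bX^\top]$. Hence their difference has vanishing first and second derivatives.

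You instead factor $\psi^{ADJ}$ through independence and evaluate $\psi^{TL}(\beta_0,\cdot)$ in closed form. This gives the global identity $\psi^{TL}(\beta_0,\bet)=-(\btheta-\btheta_0)^\top\bSigma_0(\bTheta-\bTheta_0)\bSigma_0(\bgamma-\bgamma_0)$, with $\btheta:=\bphi-\beta_0\bgamma$. Your route buys more than the paper's:
\begin{itemize}
\item It delivers $\psi^{TL}(\beta_0,\bet_0)=0$ and every derivative block, including those in $\vec(\bTheta)$, at once. The paper takes the value at $\bet_0$ for granted, including the needed fact $\E[\varepsilon\nu]=0$.
\item It shows the remainder is exactly trilinear. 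For instance, once $\bTheta=\bTheta_0$ the moment is insensitive to $(\bgamma,\bphi)$ to all orders.
\item The identity is precisely the population counterpart of the combination $N_4+N_8$ that the proof of Theorem~\ref{thm: asymptotic normality} has to control.
\end{itemize}

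The paper's blockwise computation, on the other hand, tracks the definition of second-order orthogonality directly. That style of argument carries over to nonlinear problems such as the single-index M-estimation setting of Section~\ref{sec: extensions}, where no polynomial closed form is available.
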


\begin{rem}
It is interesting to compare Lemma \ref{lem: novel score} with a negative result on the existence of higher-order Neyman orthogonal moment functions in \cite{MSZ18}, abbreviated as MSZ below. They consider the closely related partially linear model
\begin{align*}
Y &= D\beta_0 + f_0(\bX) + \varepsilon, \qquad \E\left[\varepsilon\middle| D,\bX\right]=0,\\
D &= g_0(\bX) + \nu, \qquad \E\left[\nu\middle| \bX\right]=0,
\end{align*}
and show that if the conditional distribution of $\nu$ given $\bX$ is Gaussian, then any twice differentiable score function $m$ that is second-order Neyman orthogonal with respect to $f_0$ and $g_0$ must satisfy
\[
\frac{\partial \E[m(\bZ,\beta_0,\eta_0)]}{\partial\beta} = 0,
\]
where $\bZ:=(\bX,D,Y)$ and $\eta_0$ includes the nuisance functions $f_0$ and $g_0$. The latter in turn essentially rules out $\sqrt n$-consistent estimation of $\beta_0$ based on the moment condition $\E[m(\bZ,\beta_0,\eta_0)]=0$.
In contrast, for our high-dimensional linear regression model, we do have
\[
\frac{\partial\psi^{TL}(\beta_0,\bet_0)}{\partial\beta}
=
- \E[(D-\bX^\top\bgamma_0)^2]
\neq 0.
\]
To resolve this apparent contradiction, we make two observations. First, the score function underlying our moment function $\psi^{TL}$ depends not only on $(\bX,D,Y)$ but also on an independent copy $(\tilde\bX,\tilde D)$ of $(\bX,D)$. Second, our second-order Neyman orthogonality condition is weaker than that imposed in MSZ: we only require certain {\em ordinary} derivatives to vanish, whereas MSZ require certain {\em functional} derivatives to vanish, reflecting the parametric nature of our model.
\qed
\end{rem}

\begin{rem}
It is also interesting to compare the moment function $\psi^{TL}$ with what can be obtained from the algorithm of \cite{BJW24}, abbreviated as BJW below. Since their algorithm applies only to likelihood models and cannot, in general, be used directly in regression settings, we embed our model \eqref{eq: model}-\eqref{eq:first-stage} into a likelihood framework by imposing additional distributional assumptions on the vector of controls $\bX$ and the disturbances $(\varepsilon,\nu)$. In particular, for the purposes of this comparison, we assume that the distribution of $\bX$ is known and that the conditional distribution of $(\varepsilon,\nu)$ given $\bX$ is standard normal. Under these assumptions, our regression model reduces to a likelihood model, and applying the BJW algorithm yields the moment function $\psi^{BKW}\colon\R\times\R^{2p}\to\R$ defined by
$$
\psi^{BKW}(\beta,\bet):= \E[(1 - \E[\bm^\top](\E[\bm\bm^{\top}])^{-1}\bm)(Y - D\beta - \bX^\top\btheta)(D-\bX^\top\bgamma)],\ \bet:=(\bgamma^\top,\btheta^\top)^\top,
$$
where $\bm:=\mathrm{vech}(\bX\bX^\top)$, where the operator $\mathrm{vech}$ stacks the elements of the lower triangular part, including the diagonal, of the matrix into a vector. By construction, this moment function satisfies the second-order Neyman orthogonality condition under these distributional assumptions. However, if we relax these assumptions and only maintain $\E[\varepsilon\mid\bX] = 0$ and $\E[\nu\bX]=\bzero_p$ as in model \eqref{eq: model}-\eqref{eq:first-stage}, then the derivative $\partial\psi^{BKW}(\beta_0,\bet_0)/\partial\btheta$ need not vanish, so that $\psi^{BKW}$ may fail to satisfy even the first-order Neyman orthogonality condition. 

On the other hand, and perhaps more interestingly, if we strengthen the assumptions to $\E[\varepsilon\mid\bX] = 0$ and $\E[\nu\mid\bX] = 0$, then the moment function $\psi^{BKW}$ satisfies the second-order Neyman orthogonality condition not only with respect to $\bet=(\bgamma^\top,\btheta^\top)^\top$, but also with respect to the extended vector 
\[
\tilde\bet:=(\bgamma^\top,\btheta^\top,\E[\bm^\top],\mathrm{vech}(\E[\bm\bm^\top])^\top)^{\top},
\]
which is a desirable feature. However, using this moment function in practice requires estimating and inverting the $(p(p+1)/2) \times (p(p+1)/2)$ matrix $\E[\bm\bm^\top]$, which may be hard and may require non-standard conditions. In contrast, using our moment function $\psi^{TL}$ only requires estimating and inverting the much smaller $p\times p$ matrix $\E[\bX\bX^\top]$.
\qed
\end{rem}

We next define the triple Lasso estimator as a solution to an empirical analogue of equation \eqref{eq: moment condition triple lasso}, where $\bgamma_0$ and $\bphi_0$ are replaced by the corresponding Lasso estimators and $\bTheta_0$ is replaced by the node-wise Lasso estimator. For the reasons that will become clear shortly, we in fact use row-sparsified version of the node-wise Lasso estimator of $\bTheta_0$. In addition, since it is useful for the asymptotic normality result, we also combine the estimator with cross-fitting, which is standard in the literature on double machine learning; see \cite{CCDDHNR18}. 

To formally describe the resulting estimator, split at random the full sample into $K$ subsamples of approximately the same size for some small $K$ and let $I(1),\dots,I(K)$ be the sets of indices from $1$ to $n$ corresponding to the observations in these subsamples. Also, for each $k\in[K]$, let $I(-k) := [n]\setminus I(k)$. Then for each $k\in[K]$, let $\widehat\bgamma_k = (\widehat\gamma_{k,1},\dots,\widehat\gamma_{k,p})^\top$, $\widehat\phi_k = (\widehat\phi_{k,1},\dots,\widehat\phi_{k,p})^\top$, and $\widetilde\bTheta_k = (\widetilde\Theta_{k, j, l})_{j,l\in[p]}$ be the Lasso estimator of $\bgamma_0$, the Lasso estimator of $\bphi_0$, and the node-wise Lasso estimator of $\bTheta_0$, respectively, all defined on the subsample of observations with indices in $I(-k)$. Here, we use the row-wise version of the node-wise Lasso estimator. Specifically, to define the first row $\widetilde\bTheta_{k,1}$ of the matrix $\widetilde\bTheta_k$, we run the Lasso regression of $X_1$ on $\bX_{-1} := (X_2,\dots,X_p)^\top$ and obtain the vector of slope coefficients, say $\widehat\bpsi_1$. Then we set $\widehat\sigma_1^2 := \E_{I(-k)}[X_{i,1}(X_{i,1} - \bX_{i,-1}^\top\widehat\bpsi_1)]$ and $\widetilde\bTheta_{k,1} := (1, - \widehat\bpsi_1^\top)/\widehat\sigma_1^2$, where $\bX_{i,-1}:=(X_{i,2},\dots,X_{i,p})^\top$. All other rows of $\widetilde\bTheta_k$ are defined analogously. Next, let $\widehat T_{k,0} := \{j=1,\dots,p\colon \widehat\gamma_{k,j} \neq 0\}$ be the set of indices from $1$ to $p$ corresponding to the non-zero components of the vector $\widehat\bgamma_k$ and let $\widehat\bTheta_{k}$ be the $\R^{p\times p}$ matrix obtained from $\widetilde\bTheta_k$ by setting to zero all its rows whose indices are not in $\widehat T_k:=\widehat T_{k,0} \cup \widehat T_{k,1}$, where $\widehat T_{k,1}$ is an additional set of indices corresponding to the extra variables that the researcher might want to use as we discuss below. Throughout most of the paper, we assume that $\widehat T_{k,1} = \emptyset$. To define $\widehat\beta_k$, we now solve an empirical version of the equation \eqref{eq: moment condition triple lasso} on $I_k$ for $\beta$, namely we set
$$
\widehat\beta_k := \frac{\E_{I(k)}[(Y_i - \bX_i^\top \widehat\bphi_k)(D_i - \bX_i^\top\widehat\bgamma_k)] - \E_{I(k)}[(Y_i - \bX_i^\top\widehat\bphi_k)\bX_i^\top]\widehat\bTheta_{k}\E_{I(k)}[\bX_i(D_i - \bX_i^\top\widehat\bgamma_k)]}{\E_{I(k)}[(D_i - \bX_i^\top\widehat\bgamma_k)^2] - \E_{I(k)}[(D_i - \bX_i^\top\widehat\bgamma_k)\bX_i^\top]\widehat\bTheta_{k} \E_{I(k)}[\bX_i(D_i - \bX_i^\top\widehat\bgamma_k)]}.
$$
Finally, we aggregate the subsample estimators $\widehat\beta_k$ to obtain the triple (or double-debiased) Lasso estimator:
$$
\widehat\beta^{TL} := \frac{1}{K}\sum_{k=1}^K \widehat\beta_k.
$$
In the next subsection, we will derive the asymptotic theory for the triple Lasso estimator $\widehat\beta^{TL}$ and compare it with that for the double Lasso estimator $\widehat\beta^{DL}$.

\begin{rem}
We now explain why we use sparsified versions $\widehat\bTheta_k$ of the node-wise Lasso estimators $\widetilde\bTheta_k$ of $\bTheta_0$. Consider an idealized case where $\E[\bX\bX^\top]$ is equal to the identity matrix $I_p$ and this fact is known to the researcher. In this case, we could replace $\widehat\bTheta_k$ in the expression for $\widehat\beta_k$ above by $\bI_p$. Doing so would lead to the following term in the numerator of $\widehat\beta_k$:
$$
\E_{I_k}[(Y_i - \bX_i^\top\widehat\bphi_k)\bX_i^\top]\E_{I_k}[\bX_i(D_i - \bX_i^\top\widehat\bgamma_k)].
$$
Substituting here the expressions for $D_i$ and $Y_i$ from \eqref{eq:first-stage} and \eqref{eq:reduced-form}, respectively, yields terms of the form $\E_{I_k}[\varepsilon_i\bX_i^\top]\E_{I_k}[\bX_i\nu_i]$. In turn, such terms are difficult to control when $p$ is comparable to or larger than $n$. In particular, it is straightforward to verify that they are typically of order $\sqrt p/n$ in probability, which is too large for our purposes. By introducing sparsification, we substantially reduce the variance of these terms at the expense of introducing some bias.
\qed
\end{rem}

\begin{rem}
Although we focus on Lasso-type methods throughout this section, which lead to the triple Lasso estimator, the moment function $\psi^{TL}$ can in principle be combined with other machine learning methods suitable for estimating the nuisance parameters $\bgamma_0$, $\bphi_0$, and $\bTheta_0$. For instance, $\bgamma_0$ and $\bphi_0$ can be estimated using the Dantzig selector, while $\bTheta_0$ can be estimated via a Dantzig-type procedure as discussed, for example, in \cite{JM14}. In fact, the asymptotic theory developed in the next subsection is agnostic to the specific choice of estimators, provided they achieve the required level of precision as specified in our assumptions.
 \qed
\end{rem}

\subsection{Asymptotic Theory}
In this subsection, we establish the $\sqrt n$-consistency and asymptotic normality result for the triple Lasso estimator $\widehat\beta^{TL}$. To this end, we impose several regularity conditions. These conditions control the moments of the disturbances $(\varepsilon,\nu)$, the sample splitting scheme used in the construction of the estimator, the behavior of the Lasso estimators appearing in the algorithm, and the sparsity of the model. 

\begin{assumption}\label{as: moments}
\begin{inlinenum}
\item The random variables $\varepsilon$ and $\nu$ have bounded fourth moments: $\E[|\varepsilon|^4] \lesssim 1$ and $\E[|\nu|^4]\lesssim 1$; 
\item the second moment of $\nu$ is bounded below from zero: $(\E[\nu^2])^{-1}\lesssim 1$.
\end{inlinenum}
\end{assumption}

\begin{assumption}\label{as: bounded}
There exists a constant $C\in(0,\infty)$ such that $\E[\varepsilon^2\mid \bX]\leq C$, $\E[\nu^2\mid \bX]\leq C$, and $\|X\|_{\infty}\leq C$ almost surely.
\end{assumption}

\begin{assumption}\label{as: subsample}
For all $k\in[K]$, the subsample $I(k)$ contains a non-vanishing fraction of the full sample: $|I(k)|^{-1}\lesssim n^{-1}$.
\end{assumption}

Assumptions \ref{as: moments}--\ref{as: subsample} impose basic moment and regularity conditions on the disturbances, regressors, and the sample splitting scheme. Assumption \ref{as: moments} requires the disturbances $\varepsilon$ and $\nu$ to have bounded fourth moments and the variance of $\nu$ to be bounded away from zero. The latter serves as an identification condition ensuring that the leading term in the asymptotic expansion of the triple Lasso estimator has finite variance. Assumption \ref{as: bounded} imposes mild boundedness conditions on the conditional second moments of $\varepsilon$ and $\nu$ and requires the regressors to be uniformly bounded. These conditions simplify the derivations but we note that our results can be extended to allow for unbounded controls. We work with bounded controls to avoid technicalities and to keep the arguments transparent. Finally, Assumption \ref{as: subsample} formalizes the requirement that the subsamples used in cross-fitting contain a non-vanishing fraction of the observations in the full sample. Since the number of folds $K$ is fixed, this condition guarantees that the number of observations in each subsample is of order $n$.

\begin{assumption}\label{as: sparsity}
There are non-random sequences $s_{\bgamma} := s_{\bgamma,n}$ and $s_{\btheta} := s_{\btheta,n}$ of integers in $[1,\infty)$ and non-random sequences $\bar\bgamma_0:=\bar\bgamma_{0,n}$ and $\bar\btheta_0:=\bar\btheta_{0,n}$ of vectors in $\R^p$ such that:
\begin{enumerate}[label=(\arabic*), ref=\arabic*]
 \item the vectors $\bar\bgamma_0$ and $\bar\btheta_0$ are sparse: $\|\bar\bgamma_0\|_0\leq s_{\bgamma}$ and $\|\bar\btheta_0\|_0\leq s_{\btheta}$; \label{as: sparse vectors}
 
 \item the vectors $\bar\bgamma_0$ and $\bar\btheta_0$ provide good approximation to the parameter vectors $\bgamma_0$ and $\btheta_0$: $\| \bgamma_0 - \bar\bgamma_0 \|_2^2 \lesssim s_{\bgamma}/n$, $\| \bgamma_0 - \bar\bgamma_0 \|_1^2 \lesssim s_{\bgamma}^2/n$, $\E[\bX^\top(\bgamma_0 - \bar\bgamma_0)|^2] \lesssim s_{\bgamma}/n$, $\| \btheta_0 - \bar\btheta_0 \|_2^2 \lesssim s_{\btheta}/n$, $\| \btheta_0 - \bar\btheta_0 \|_1^2 \lesssim s_{\btheta}^2/n$, and $\E[\bX^\top(\btheta_0 - \bar\btheta_0)|^2] \lesssim s_{\btheta}/n$;\label{as: sparse approximations}

 \item for all $k\in[K]$, the estimators $\widehat\bgamma_k$ and $\widehat\bphi_k$ are sufficiently accurate for $\bar\bgamma_0$ and $\bar\bphi_0:=\bar\btheta_0 + \beta_0\bar\bgamma_0$: $\| \widehat\bgamma_k - \bar\bgamma_0 \|_2^2\lesssim_P s_{\bgamma}\log p / n$, $\| \widehat\bgamma_k - \bar\bgamma_0 \|_1^2\lesssim_P s_{\bgamma}^2\log p / n$, $\| \widehat\bphi_k - \bar\bphi_0 \|_2^2\lesssim_P s_{\bphi}\log p / n$, and $\| \widehat\bphi_k - \bar\bphi_0 \|_1^2\lesssim_P s_{\bphi}^2\log p / n$, where we denoted $s_{\bphi} := s_{\btheta} + s_{\bgamma}$;\label{as: lasso estimation}

 \item for all $k\in[K]$, the estimators $\widehat\bgamma_k$ and $\widehat\bphi_k$ are sufficiently sparse: $\|\widehat\bgamma_k\|_0 \lesssim_P s_{\bgamma}$ and $\|\widehat\bphi_k\|_0 \lesssim_P s_{\bphi}$; \label{as: sparse estimators}
 
 \item the matrix $\E[\bX\bX^\top]$ satisfies a sparse eigenvalue condition: $\lambda_{\max,s_{\bphi}\ell_n}(\E[\bX\bX^\top]) \lesssim 1$ for some $\ell_n\to\infty$ as $n\to\infty$; \label{as: sparse eigenvalue}
 
 \item for all $k\in[K]$, the sets $\widehat T_k$ are sufficiently sparse: $|\widehat T_k| \lesssim_P s_{\bgamma}$. \label{as: sparse choice}
 \end{enumerate}
\end{assumption}

Assumption \ref{as: sparsity} imposes approximate sparsity and regularity conditions that are standard in the analysis of Lasso estimators. Parts \ref{as: sparse vectors} and \ref{as: sparse approximations} require that the parameters $\bgamma_0$ and $\btheta_0$ admit sparse approximations $\bar\bgamma_0$ and $\bar\btheta_0$ with sparsity indices $s_{\bgamma}$ and $s_{\btheta}$ and formalize the quality of these approximations by requiring the approximation errors to be small in both $\ell_2$ and $\ell_1$ norms as well as in the prediction norm induced by the controls $\bX$. Comparable conditions appear, for example, in \cite{BCH14} and \cite{CHS15}. Parts \ref{as: lasso estimation} and \ref{as: sparse estimators} impose the usual rate and sparsity properties of the Lasso estimators $\widehat\bgamma_k$ and $\widehat\bphi_k$ obtained in the auxiliary regressions. These conditions are satisfied under typically used assumptions on the design matrix, as explained for example in \cite{BC11}, and ensure that the preliminary estimators are sufficiently accurate for the sparse approximations $\bar\bgamma_0$ and $\bar\bphi_0$. Part \ref{as: sparse eigenvalue} imposes a classical sparse eigenvalue condition on the population Gram matrix $\E[\bX\bX^\top]$, meaning that sparse subsets of the regressors are not excessively collinear. Finally, Part \ref{as: sparse choice} requires the sets $\widehat T_k$ used in the construction of the triple Lasso estimator to be sufficiently sparse. In light of Part \ref{as: sparse estimators}, this condition holds trivially if $\widehat T_{k,1} = \emptyset$. 

\begin{assumption}\label{as: nodewise lasso}
For all $k\in[K]$, the node-wise Lasso estimators $\widetilde\bTheta_k$ are such that: 
\begin{inlinenum}
\item $\allowbreak \|\widetilde\bTheta_k\E_{I(-k)}[\bX_i\bX_i^\top] - \bI_p\|_{\max}^2 \lesssim_P \log p / n$; \label{as: max approximation nodewise}
\item $\|\widetilde\bTheta_k\|_{\infty} \lesssim_P 1$.\label{as: bounded row nodewise}
\end{inlinenum}
\end{assumption}

\begin{assumption}\label{as: growth conditions}
We have 
\begin{inlinenum}
\item $\sqrt{s_{\bgamma}}s_{\bphi}\log p / n = o(1)$;
\item $s_{\bgamma}^3s_{\bphi}(\log p)^3 / n^2 = o(1)$.
\end{inlinenum}
\end{assumption}

Assumption \ref{as: nodewise lasso} imposes regularity conditions in terms of convergence rates on the node-wise Lasso estimators used to approximate the inverse Gram matrix $\bTheta_0 = \E[\bX\bX^\top]^{-1}$. These convergence rates are well-established in the literature on estimation of inverse Gram matrices; see, for example, \cite{G16}. Part \ref{as: max approximation nodewise} requires that the estimators $\widetilde\bTheta_k$ provide sufficiently accurate approximations to the inverse of the empirical Gram matrix computed on the subsample $I(-k)$. Part \ref{as: bounded row nodewise} ensures that each row of the estimator $\widetilde\bTheta_k$ remains bounded in the $\ell_1$ norm. Assumption \ref{as: growth conditions} restricts the growth of the sparsity indices relative to the sample size. In particular, it requires that the sparsity levels $s_{\bgamma}$ and $s_{\bphi}$ do not grow too quickly compared to $n$ and $\log p$. Note also that Assumption \ref{as: growth conditions} is weaker than the corresponding conditions used to derive the $\sqrt n$-consistency and asymptotic normality result for the double Lasso estimator; see \cite{BCH14} for the case of the original double Lasso estimator and \cite{CCDDHNR18} for the case of the double Lasso estimator combined with cross-fitting.

We are now ready to establish the main result of this paper, which shows that the triple Lasso estimator $\widehat\beta^{TL}$ is $\sqrt n$-consistent and asymptotically normal under our assumptions. The proof of this theorem can be found in the Appendix.

\begin{thm}\label{thm: asymptotic normality}
Under Assumptions \ref{as: moments}--\ref{as: growth conditions},
\begin{equation}\label{eq: triple lasso linear approximation}
\sqrt n(\widehat\beta^{TL} - \beta_0)
=
\frac{1}{\sqrt n}\sum_{i=1}^n
\frac{\varepsilon_i\nu_i}{\E[\nu^2]}
+
O_P(R_1)+O_P(R_2)+O_P(R_3),
\end{equation}
where
\begin{align}
R_1 & :=
\frac{s_{\bgamma}^{1/4}\sqrt{s_{\bphi}\log p}}{\sqrt n},
\qquad
R_2 :=
\frac{s_{\bgamma}^{3/2}\sqrt{s_{\bphi}}(\log p)^{3/2}}{n},
\label{eq: r1 and r2}
\\
R_3 &:=
\sqrt{s_{\bphi}\log p}
\left(
\frac{1}{K}\sum_{k=1}^K
\|
\bar\bgamma_0-\bar\bgamma_{0,\widehat T_k}
\|_2
+
\sqrt{\E[|\bX^\top(\bgamma_0-\bar\bgamma_0)|^2]}
+
\|
\bgamma_0-\bar\bgamma_0
\|_2
\right).
\label{eq: r3}
\end{align}
Therefore,
\[
\sqrt n(\widehat\beta^{TL}-\beta_0)
\to_d
N(0,V),\quad V:=\frac{\E[(\varepsilon\nu)^2]}{(\E[\nu^2])^2},
\]
as long as $R_3=o_P(1)$.
\end{thm}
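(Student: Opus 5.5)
We work fold by fold: it suffices to show that for each $k\in[K]$,
\[
\sqrt{|I(k)|}\,(\widehat\beta_k-\beta_0)=\frac{1}{\sqrt{|I(k)|}}\sum_{i\in I(k)}\frac{\varepsilon_i\nu_i}{\E[\nu^2]}+O_P(R_1+R_2+R_3),
\]
then average over $k$ using Assumption \ref{as: subsample}. The asymptotic normality claim then follows from the Lindeberg CLT, which applies because of the fourth moments in Assumption \ref{as: moments}.

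Conditional on the data in $I(-k)$, the estimators $\widehat\bgamma_k,\widehat\bphi_k,\widehat\bTheta_k$ are fixed. Write $\widehat\nu_i:=D_i-\bX_i^\top\widehat\bgamma_k=\nu_i-\bX_i^\top\bdelta_\gamma$ and $\widehat e_i:=Y_i-\bX_i^\top\widehat\bphi_k=e_i-\bX_i^\top\bdelta_\phi$, where $\bdelta_\gamma:=\widehat\bgamma_k-\bgamma_0$ and $\bdelta_\phi:=\widehat\bphi_k-\bphi_0$. Since $e_i-\beta_0\nu_i=\varepsilon_i$, we have
\[
\widehat\beta_k-\beta_0=\frac{N_k}{M_k},\qquad
N_k:=\E_{I(k)}\big[(\widehat e_i-\beta_0\widehat\nu_i)\widehat\nu_i\big]-\E_{I(k)}\big[(\widehat e_i-\beta_0\widehat\nu_i)\bX_i^\top\big]\widehat\bTheta_k\,\E_{I(k)}[\bX_i\widehat\nu_i],
\]
where $M_k$ is the denominator of $\widehat\beta_k$. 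The residual satisfies $\widehat e_i-\beta_0\widehat\nu_i=\varepsilon_i-\bX_i^\top\bdelta_\theta$ with $\bdelta_\theta:=\bdelta_\phi-\beta_0\bdelta_\gamma$. Expanding $N_k$ gives
\[
N_k=\E_{I(k)}[\varepsilon_i\nu_i]-\E_{I(k)}[\varepsilon_i\bX_i^\top]\,\bdelta_\gamma-\bdelta_\theta^\top\E_{I(k)}[\bX_i\nu_i]+\bdelta_\theta^\top\widehat\bSigma\,\bdelta_\gamma-\big(\E_{I(k)}[\varepsilon_i\bX_i^\top]-\bdelta_\theta^\top\widehat\bSigma\big)\widehat\bTheta_k\big(\E_{I(k)}[\bX_i\nu_i]-\widehat\bSigma\,\bdelta_\gamma\big),
\]
where $\widehat\bSigma:=\E_{I(k)}[\bX_i\bX_i^\top]$. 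This is the sample version of Lemma \ref{lem: novel score}.

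The quadratic term $\bdelta_\theta^\top\widehat\bSigma\bdelta_\gamma$ is the double Lasso second-order bias. It is cancelled by the piece $-\bdelta_\theta^\top\widehat\bSigma\widehat\bTheta_k\widehat\bSigma\bdelta_\gamma$ of the correction, up to the error
\[
\bdelta_\theta^\top\widehat\bSigma\big(\bI_p-\widehat\bTheta_k\widehat\bSigma\big)\bdelta_\gamma .
\]
Controlling this error is the main obstacle. Three facts make it work.

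First, $\widehat\bTheta_k$ keeps only the rows in $\widehat T_k$. Because $\widehat\bgamma_k$ is supported on $\widehat T_{k,0}$, the vector $\bdelta_\gamma$ splits into $(\widehat\bgamma_k-\bar\bgamma_{0,\widehat T_k})$, which is supported in $\widehat T_k$, plus $(\bar\bgamma_{0,\widehat T_k}-\bar\bgamma_0)$ plus $(\bar\bgamma_0-\bgamma_0)$. The last two pieces produce exactly the terms of $R_3$, after multiplying by $\|\widehat\bSigma^{1/2}\bdelta_\theta\|_2\lesssim_P\sqrt{s_{\bphi}\log p/n}$. That bound follows from Assumption \ref{as: sparsity}(\ref{as: lasso estimation}), (\ref{as: sparse estimators}), (\ref{as: sparse eigenvalue}) and a sparse-eigenvalue concentration for $\widehat\bSigma$.

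Second, on the part supported in $\widehat T_k$, the factor $(\bI_p-\widehat\bTheta_k\widehat\bSigma)$ restricted to the rows and columns in $\widehat T_k$ has max-norm $\lesssim_P\sqrt{\log p/n}$. This uses Assumption \ref{as: nodewise lasso}(\ref{as: max approximation nodewise}) together with the difference between $\widehat\bSigma$ and $\E_{I(-k)}[\bX_i\bX_i^\top]$, which is $O_P(\sqrt{\log p/n})$ in max norm by bounded $\bX$ and Hoeffding, combined with $\|\widetilde\bTheta_k\|_\infty\lesssim_P1$.

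Third, a Hölder-type bound with $\ell_1$ norms then gives a contribution of order $\sqrt{\log p/n}\,\|\bdelta_\theta\|_1\|\bdelta_\gamma\|_1$, or via a sparse-support $\ell_2$ argument $s_{\bgamma}\sqrt{\log p/n}\,\|\bdelta_\theta\|\,\|\bdelta_\gamma\|$. After scaling by $\sqrt n$, this yields $R_2=s_{\bgamma}^{3/2}\sqrt{s_{\bphi}}(\log p)^{3/2}/n$.

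The remaining terms are handled by conditional variance calculations, using independence across folds and Assumption \ref{as: bounded}:
\begin{itemize}
\item The first-order terms $\E_{I(k)}[\varepsilon_i\bX_i^\top]\bdelta_\gamma$ and $\bdelta_\theta^\top\E_{I(k)}[\bX_i\nu_i]$ are mean zero given $I(-k)$. Their cancellation against the corresponding pieces $\E_{I(k)}[\varepsilon_i\bX_i^\top]\widehat\bTheta_k\widehat\bSigma\bdelta_\gamma$ and $\bdelta_\theta^\top\widehat\bSigma\widehat\bTheta_k\E_{I(k)}[\bX_i\nu_i]$ leaves errors of the form $n^{-1/2}$ times a conditional standard deviation.
\item The pure noise product $\E_{I(k)}[\varepsilon_i\bX_i^\top]\widehat\bTheta_k\E_{I(k)}[\bX_i\nu_i]$ has only $|\widehat T_k|\lesssim_P s_{\bgamma}$ nonzero rows. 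This is the reason for sparsification, and it makes the term $O_P(\sqrt{s_{\bgamma}}/n)$ up to logarithmic factors, with the diagonal mean part absorbed into $R_1$.
\end{itemize}
The point is to use $\E[\varepsilon\mid\bX]=0$ and a second-moment computation conditional on $I(-k)$ rather than crude sup-norm bounds. Combining the residual cancellation errors, which are of order $n^{-1/2}\sqrt{\log p}\,\|\bdelta\|_1\sqrt{\cdot}$, with the noise product term should produce the $s_{\bgamma}^{1/4}\sqrt{s_{\bphi}\log p}/\sqrt n$ rate of $R_1$. I expect $R_1$ to arise from a geometric mean of the $\sqrt{s_{\bgamma}}$ noise-product bound and the $\sqrt{s_{\bphi}\log p}$ factor via Cauchy–Schwarz.

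Finally, we show $M_k=\E[\nu^2]+o_P(1)$. The correction in $M_k$ is $O_P(s_{\bgamma}\log p/n)+o_P(1)$ by the same sparse-row argument, and $\E_{I(k)}[\widehat\nu_i^2]\to_P\E[\nu^2]$ because $\|\bdelta_\gamma\|$ is small. Assumption \ref{as: moments}(2) bounds $M_k^{-1}$. Since $N_k=O_P(n^{-1/2})$, the error in $M_k$ only multiplies the leading term and contributes at most $O_P(R_1)$ after scaling, using Assumption \ref{as: growth conditions} to make all remainders $o(1)$.
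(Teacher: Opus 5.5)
\textbf{Overall.} Your expansion of $\widehat\beta_k-\beta_0$ is the paper's eight-term decomposition ($N_1,\dots,N_8$ over the denominator). Your route to $R_2$ and $R_3$ is also the paper's. You write $N_4+N_8=\boldsymbol{\delta}_\theta^\top\widehat\bSigma(\bI_p-\bI_{\widehat T_k})\boldsymbol{\delta}_\gamma+\boldsymbol{\delta}_\theta^\top\widehat\bSigma(\bI_{\widehat T_k}-\widehat\bTheta_k\widehat\bSigma)\boldsymbol{\delta}_\gamma$. The first piece gives $R_3$ (the paper's $N_4+N_{8,2}$), and the second gives $R_2$ (the paper's $N_{8,1}$, Lemma \ref{lem: rudelson 2}). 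That step needs two corrections.
\begin{itemize}
\item The off-support pieces $\bar\bgamma_{0,\widehat T_k}-\bar\bgamma_0$ and $\bar\bgamma_0-\bgamma_0$ do not produce only $R_3$. The matrix $\bI_{\widehat T_k}-\widehat\bTheta_k\widehat\bSigma$ also acts on them through its $\widehat T_k\times\widehat T_k^c$ block. The fix is to apply the full max-norm bound of Lemma \ref{lem: auxiliary inequalities} to all of $\boldsymbol{\delta}_\gamma$, together with $\|\boldsymbol{\delta}_\gamma\|_1\lesssim_P s_{\bgamma}\sqrt{\log p/n}$. This holds because $\|\bar\bgamma_{0,\widehat T_k^c}\|_1\le\|\widehat\bgamma_k-\bar\bgamma_0\|_1$.
\item Your first, $\ell_1$-H\"older option drops the factor $\widehat\bSigma$. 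Even ignoring that, it gives $s_{\bgamma}s_{\bphi}(\log p)^{3/2}/n$ after scaling, which exceeds $R_2$ by a factor $\sqrt{s_{\bphi}/s_{\bgamma}}$. Only your sparse-support $\ell_2$ option yields $R_2$.
\end{itemize}

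\textbf{The gap: $R_1$ is not derived.} You do not derive $R_1$, and the mechanism you conjecture does not deliver it. In the paper, $R_1$ comes from $N_6=\boldsymbol{\delta}_\theta^\top\widehat\bSigma\widehat\bTheta_k\E_{I(k)}[\bX_i\nu_i]$.
\begin{itemize}
\item This term does not usefully cancel against $N_2=-\boldsymbol{\delta}_\theta^\top\E_{I(k)}[\bX_i\nu_i]$. With row-sparsified $\widehat\bTheta_k$, the matrix $\widehat\bSigma\widehat\bTheta_k$ (unlike $\widehat\bTheta_k\widehat\bSigma$) is in general close to neither $\bI_p$ nor $\bI_{\widehat T_k}$, so the two terms must be bounded separately.
\item Cauchy--Schwarz gives $|N_6|\le\|(\widehat\bSigma\boldsymbol{\delta}_\theta)_{\widehat T_k}\|_2\|\widehat\bTheta_k\E_{I(k)}[\bX_i\nu_i]\|_2\lesssim_P\sqrt{s_{\bphi}\log p/n}\,\sqrt{s_{\bgamma}/n}$ by Lemma \ref{lem: another l2 bound}. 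After scaling this is only $\sqrt{s_{\bgamma}s_{\bphi}\log p/n}$, a factor $s_{\bgamma}^{1/4}$ above $R_1$.
\end{itemize}

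\textbf{What is missing: a second-moment computation for $N_6$ itself.}
\begin{enumerate}
\item Only $\E[\nu\bX]=\bzero_p$ holds, so you cannot condition on the fold-$k$ regressors. You must first replace $\widehat\bSigma$ by $\bSigma_0$. The difference is controlled by an operator-norm concentration bound on sparse blocks of $\widehat\bSigma-\bSigma_0$ (the paper's $N_{6,1}$, via Lemma 6.2 of \cite{BCCH15}).
\item For $N_{6,2}:=\boldsymbol{\delta}_\theta^\top\bSigma_0\widehat\bTheta_k\E_{I(k)}[\bX_i\nu_i]$, conditioning on the data $\mathbb D_k$ in $I(-k)$ gives $\E[N_{6,2}^2\mid\mathbb D_k]\lesssim n^{-1}\boldsymbol{\delta}_\theta^\top\bSigma_0\widehat\bTheta_k\bSigma_0\widehat\bTheta_k^\top\bSigma_0\boldsymbol{\delta}_\theta$.
\item Write $\widehat\bTheta_k\bSigma_0=\bI_{\widehat T_k}+(\widehat\bTheta_k\bSigma_0-\bI_{\widehat T_k})$. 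The second part is negligible by Lemma \ref{lem: auxiliary inequalities}.
\item The first part equals $n^{-1}\boldsymbol{u}^\top\mathbf{M}^\top\boldsymbol{u}$. Here $\boldsymbol{u}:=(\bSigma_0\boldsymbol{\delta}_\theta)_{\widehat T_k}$ satisfies $\|\boldsymbol{u}\|_2\lesssim_P\sqrt{s_{\bphi}\log p/n}$, and $\mathbf{M}$ is the $\widehat T_k\times\widehat T_k$ block of $\widehat\bTheta_k$.
\item Because $\widehat\bTheta_k$ is not symmetric, the only available bound is $\|\mathbf{M}\|_2\le(\|\mathbf{M}\|_1\|\mathbf{M}\|_\infty)^{1/2}\lesssim_P\sqrt{s_{\bgamma}}$, with $\|\mathbf{M}\|_1$ the maximal absolute column sum. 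Row sums are bounded by Assumption \ref{as: nodewise lasso}, and column sums can be as large as $|\widehat T_k|\lesssim_P s_{\bgamma}$.
\end{enumerate}
The conditional variance is therefore $\lesssim_P\sqrt{s_{\bgamma}}\,s_{\bphi}\log p/n^2$. Its square root is exactly $R_1/\sqrt n$, and this is where the factor $s_{\bgamma}^{1/4}$ comes from.
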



The result in Theorem \ref{thm: asymptotic normality} for the triple Lasso
estimator can be compared with the corresponding result for the double Lasso
estimator. In particular, it follows from the proof of Theorem \ref{thm:
asymptotic normality} that the cross-fitted double Lasso estimator
$\widehat\beta^{DL}$ satisfies
\begin{equation}\label{eq: double lasso linear approximation}
\sqrt n(\widehat\beta^{DL}-\beta_0)
=
\frac{1}{\sqrt n}\sum_{i=1}^n
\frac{\varepsilon_i\nu_i}{\E[\nu^2]}
+
O_P(R_4),
\qquad
R_4:=
\frac{\sqrt{s_{\bgamma}s_{\bphi}}\log p}{\sqrt n},
\end{equation}
which is essentially a well-known result; e.g., see \cite{CCDDHNR18}. Both
estimators are therefore $\sqrt n$-consistent and properly centered
asymptotically normal if the corresponding remainder terms in asymptotic linear
representations \eqref{eq: triple lasso linear approximation} and \eqref{eq:
double lasso linear approximation} are asymptotically vanishing.

In turn, to compare the remainder terms, suppose first that $R_4=o(1)$ so that the double Lasso estimator is $\sqrt n$-consistent and asymptotically normal. We claim that then the remainder terms $R_1$, $R_2$, and $R_3$ for the triple Lasso estimator are {\em always at most of the same order} as the remainder term $R_4$ for the double Lasso estimator and {\em often of smaller order}. Indeed, 
$$
\frac{R_1}{R_4} = \frac{1}{s_{\bgamma}^{1/4}\sqrt{\log p}} = o(1)
$$
whenever either $p\to\infty$ or $s_{\bgamma}\to\infty$, and
$$
\frac{R_2}{R_4} = \frac{s_{\bgamma}\sqrt{\log p}}{\sqrt n} \leq R_4 = o(1)
$$
since $s_{\bphi}\ge s_{\bgamma}$ by the definition of $s_{\bphi}$. Also, in the {\em worst} case, the remainder term $R_3$ is of the same order as $R_4$:
\[
|R_3|
\le
\sqrt{s_{\bphi}\log p}
\left(
\frac{1}{K}\sum_{k=1}^K
\|
\bar\bgamma_0-\widehat\bgamma_k
\|_2
+
\sqrt{\E[|\bX^\top(\bgamma_0-\bar\bgamma_0)|^2]}
+
\|
\bgamma_0-\bar\bgamma_0
\|_2
\right)
\lesssim_P
R_4,
\]
since $\widehat T_k$ contains the support of $\widehat\bgamma_k$, but {\em often} converges to zero faster than $R_4$. Indeed, the latter occurs whenever the approximation errors $\E[|\bX^\top(\bgamma_0-\bar\bgamma_0)|^2]$ and $\|\bgamma_0-\bar\bgamma_0\|_2^2$
vanish faster than the upper bound $s_{\bgamma}/n$ imposed in Assumption \ref{as: sparsity}.\ref{as: sparse approximations} and when most coefficients in the approximating vector $\bar\bgamma_0$ are not approaching zero too quickly. In the extreme case where $\bgamma_0$ is exactly sparse, so that $\bgamma_0-\bar\bgamma_0=\bzero_p$, and all nonzero coefficients in $\bgamma_0$ are sufficiently separated away from zero, so that the Lasso estimators $\widehat\bgamma_k$ perform consistent screening, the remainder term $R_3$ is actually {\em equal} to zero. This explains why the remainder terms in the asymptotic linear representation for the triple Lasso estimator often vanish faster than that of the double Lasso estimator, leading to more accurate inference. In Section \ref{sec: simulations}, we illustrate this phenomenon in Monte Carlo simulations, where we demonstrate that the triple Lasso estimator yields confidence intervals with a better coverage control. In fact, the improvement in coverage control is rather dramatic in some cases.

In addition, we note that the remainder term $R_3$ can be further reduced by enlarging the sets $\widehat T_k$. For example, in practice one can arrange the entries of
\[
S_k :=
\widetilde\bTheta_k
\E_{I(-k)}[\bX_i(D_i-\bX_i^\top\widehat\bgamma_k)]
\]
in decreasing order of absolute value and include the indices of, say, the $L$ largest entries in $\widehat T_{k,1}$, provided that they are not already contained in $\widehat T_{k,0}$. This strategy is motivated by Lemma \ref{lem: auxiliary inequalities} and Assumption \ref{as: nodewise lasso}, which imply that the vectors $S_k$ approximate the corresponding vectors $\bgamma_0-\widehat\bgamma_k$. Allowing $L$ to grow beyond what is permitted by Assumption \ref{as: sparsity}.\ref{as: sparse choice} would reduce $R_3$ at the expense of increasing $R_1$ and $R_2$, which reflects a familiar bias--variance trade-off. In this paper, however, we simply set $L=0$ and leave the question of how to choose $L$ so as to balance these effects for future work.

Finally, in the exactly sparse case with non-zero coefficients of $\bgamma_0$ being sufficiently separated away from zero, so that $R_3=0$, it is actually possible that the triple Lasso estimator remains asymptotically normal even when the double Lasso estimator fails to be asymptotically normal. This can occur when $s_{\bgamma}$ grows sufficiently slowly relative to $s_{\bphi}$, and
$$
\frac{s_{\bgamma}s_{\bphi}(\log p)^2}{n} \to \infty\quad\text{but}\quad \frac{\sqrt{s_{\bgamma}}s_{\bphi}\log p}{n}\to 0\quad\text{and}\quad \frac{s_{\bgamma}^{3/2}\sqrt{s_{\bphi}}(\log p)^{3/2}}{n}\to 0
$$
since in this case we have $R_1\to0$ and $R_2\to0$ but $R_4\to\infty$ as $n\to\infty$.

\section{Extensions}\label{sec: extensions}
In this section, we provide a general recursive formula for constructing a moment function satisfying the Neyman orthogonality condition to any order in a Z-estimation problem. We then apply the general formula to derive the second-order Neyman orthogonal moment function in an M-estimation problem with a single-index structure.
\subsection{General Formula}
Consider a $Z$-estimation problem
\begin{equation}\label{eq: z estimation problem}
\begin{cases}
\E[\bar{f}(\bZ,\beta_{0},\btheta_{0})]=0,\\
\E[\bar{u}(\bZ,\beta_{0},\btheta_{0})]=\bzero_p,
\end{cases}
\end{equation}
where $\bZ$ is an observable random vector, $\beta_0\in\R$ is the parameter of interest, $\btheta_0\in\R^p$ is a vector of nuisance parameters, and $(\bar f,\bar u)$ is a pair of known functions, with $\bar u$ being $\R^p$-valued. In this subsection, we describe a method to construct a $k$th order Neyman orthogonal moment function for estimating $\beta_0$, for any integer $k\geq 1$. 

To construct our moment function, we proceed recursively. Let $\bZ^{(1)},\bZ^{(2)},\dots$ be independent copies of $\bZ$ and denote $\bZ^k = (\bZ^{(1)},\dots,\bZ^{(k)})$ if $k\geq 1$ and $\bZ^k = \bZ$ if $k=0$. Fix any $k\geq 1$ and suppose that we have already constructed moment functions $F\colon\R\times\R^q\to\R$ and $U\colon\R\times\R^q\to\R^q$ of the form $F(\beta,\bet) = \E[f(\bZ^{k-1},\beta,\bet)]$ and $U(\beta,\bet) = \E[u(\bZ^{k-1},\beta,\bet)]$ such that the pair $(\beta_0,\bet_0)$ solves the system of moment equations
$$
\begin{cases}
F(\beta,\bet) = 0,\\
U(\beta,\bet)=\bzero_q,
\end{cases}
$$
where $\bet_0 = (\eta_{0,1},\dots,\eta_{0,q})^\top \in\R^q$ is a nuisance parameter, and $F$ satisfies the $(k-1)$th order Neyman orthogonality condition:
$$
\nabla_{\bet}^{m}F(\beta_0,\bet_0) = \bzero_q^{\otimes m},\quad\text{for all }m\in[k-1],
$$ 
where $\nabla_{\bet}^m F(\beta_0,\bet_0)$ is a tensor in $(\R^q)^{\otimes m}:=\R^{q\times \dots \times q}$ defined by
$$
(\nabla_{\bet}^mF(\beta_0,\bet_0))_{j_1,\dots, j_m} = \frac{\partial^mF(\beta_0,\bet_0)}{\partial\eta_{j_1}\dots,\partial\eta_{j_m}},\quad\text{for all }j_1,\dots,j_m \in [q], 
$$ 
and for any vector $\ba = (a_1,\dots,a_q)^\top$ in $\R^q$, we use $\ba^{\otimes m}$ to denote the tensor in $(R^q)^{\otimes m}$ defined by 
$$
(\ba^{\otimes m})_{j_1,\dots, j_m} = a_{j_1}\dots a_{j_m},\quad \text{for all }j_1,\dots,j_m \in [q].
$$
Now, let 
\begin{equation}\label{eq: a0 and b0 definition}
\bbA_0:=\frac{\partial U(\beta_0,\bet_0)^\top}{\partial\bet}\quad\text{and}\quad \bbB_0:= (\bbA_0^{-1})^{\otimes k}\nabla_{\bet}^kF(\beta_0,\bet_0),
\end{equation}
where the latter includes the mode-wise tensor-matrix product notation: for any matrix $\mathbf C = (C_{j_1,j_2})_{j_1,j_2=1}^q$ in $\R^{q\times q}$ and any tensor $\mathbf D = (D_{j_1,\dots,j_m})_{j_1,\dots,j_m=1}^q$ in $(\R^q)^{\otimes m}$, the product $\mathbf C^{\otimes m}\mathbf D$ is the tensor in $(\R^q)^{\otimes m}$ defined by
$$
(\mathbf C^{\otimes m}\mathbf D)_{j_1,\dots,j_m} = \sum_{i_1,\dots,i_m=1}^q C_{j_1,i_1}\dots C_{j_m, i_m}D_{i_1,\dots,i_m},\quad\text{for all }j_1,\dots,j_m \in [q].
$$
Then define the moment functions $\tilde F\colon \R\times \R^{q + q^2 + q^k}\to\R$ and $\tilde U\colon \R\times \R^{q + q^2 + q^k}\to\R^{q + q^2 + q^k}$ by setting 
$$
\tilde F(\beta,\tilde\bet) := F(\beta,\bet) - \frac{1}{k!} \bbB[(U(\beta,\bet))^{\otimes k}],
$$
and
$$
\tilde U(\beta,\tilde\bet) := 
\left(\begin{array}{c}
U(\beta,\bet)\\
\vec(\partial U(\beta,\bet)^\top/\partial\bet - \bbA)\\
\vec(\bbA^{\otimes k}\bbB - \nabla^k_{\bet}F(\beta,\bet))
\end{array}\right),
$$
where $\tilde\bet = (\bet^\top,\vec(\bbA)^\top,\vec(\bbB)^\top)^\top$ and we used the contraction of tensors notation: for any tensors $\mathbf D = (D_{j_1,\dots,j_m})_{j_1,\dots,j_m=1}^q$ and $\mathbf E = (E_{j_1,\dots,j_m})_{j_1,\dots,j_m=1}^q$ in $(\R^q)^{\otimes m}$, the contraction $\mathbf D[\mathbf E]$ is a scalar in $\R$ defined by
$$
\mathbf D[\mathbf E] = \sum_{j_1,\dots,j_m=1}^q D_{j_1,\dots,j_m}E_{j_1,\dots,j_m}.
$$
Note in passing that $\tilde F$ and $\tilde U$ take the form $\tilde F(\beta,\tilde\bet) = \E[\tilde f(\bZ^{k},\beta,\tilde\bet)]$ and $\tilde U(\beta,\tilde \bet) = \E[u(\bZ^{k},\beta,\tilde\bet)]$, maintaining the recursion.
The following lemma shows that $\tilde F$ is the desired moment function. The proof of this lemma can be found in the Appendix.
\begin{lem}\label{lem: higher-order neyman orthogonal score}
As long as the function $\bet\mapsto U(\beta_0,\bet)$ is continuously differentiable, the function $\bet\mapsto F(\beta_0,\bet)$ is $k$ times continuously differentiable, and the matrix $\bbA_0$ is invertible, the pair $(\beta_0,\tilde\bet_0)$ solves the system of moment equations
$$
\begin{cases}
\tilde F(\beta,\tilde\bet) = 0,\\
\tilde U(\beta,\tilde\bet) = \bzero_{q + q^2 + q^k},
\end{cases}
$$
and $\tilde F$ satisfies the $k$th order Neyman orthogonality condition:
$$
\nabla_{\tilde\bet}^m \tilde F(\beta_0,\tilde\bet_0) = \bzero_{q + q^2 + q^k}^{\otimes m},\quad\text{for all }m\in[k],
$$
where $\tilde\bet_0 := (\bet_0^\top,\vec(\bbA_0)^\top,\vec(\bbB_0)^\top)^\top$.
\end{lem}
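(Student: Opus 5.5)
The plan is to treat the two claims separately. The moment equations hold essentially by construction, and orthogonality follows from a multivariate Taylor/Leibniz calculation. This calculation exploits the facts that $U(\beta_0,\bet_0)=\bzero_q$ and that the correction term is a $k$-fold product of $U$.

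\textbf{Step 1: moment equations.} At $\tilde\bet_0$, the first block of $\tilde U$ is $U(\beta_0,\bet_0)=\bzero_q$ by hypothesis. The second block vanishes because $\bbA_0=\partial U(\beta_0,\bet_0)^\top/\partial\bet$ by definition. The third block vanishes because $\bbA_0^{\otimes k}\bbB_0=\bbA_0^{\otimes k}(\bbA_0^{-1})^{\otimes k}\nabla^k_{\bet}F(\beta_0,\bet_0)=\nabla^k_{\bet}F(\beta_0,\bet_0)$. This uses the mode-wise identity $\mathbf C^{\otimes k}(\mathbf D^{\otimes k}\mathbf E)=(\mathbf C\mathbf D)^{\otimes k}\mathbf E$, which is immediate from the index formula. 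Finally, $\tilde F(\beta_0,\tilde\bet_0)=F(\beta_0,\bet_0)-\frac1{k!}\bbB_0[\bzero_q^{\otimes k}]=0$.

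\textbf{Step 2: derivatives with respect to $\bbA$ and $\bbB$.} Write $\tilde F=F-G$ with $G(\bet,\bbB):=\frac1{k!}\bbB[U(\beta_0,\bet)^{\otimes k}]$. $F$ does not depend on $(\bbA,\bbB)$, $G$ does not depend on $\bbA$, and $G$ is linear in $\bbB$. Hence any mixed derivative of order $m\le k$ involving $\bbA$ vanishes identically. Any derivative involving $\bbB$ at least twice also vanishes. A derivative of order $m$ involving $\bbB$ exactly once and $\bet$ $(m-1)$ times equals $\frac1{k!}\nabla^{m-1}_{\bet}$ of the entries of $U^{\otimes k}$. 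By the Leibniz rule, each term of this derivative is a product of $k$ factors, each being a derivative of a component of $U$. Since at most $m-1\le k-1$ differentiations are distributed among $k$ factors, at least one factor is undifferentiated, equals $U(\beta_0,\bet_0)=\bzero_q$, and kills the term.

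\textbf{Step 3: pure $\bet$-derivatives (the main step).} We need $\nabla^m_{\bet}F(\beta_0,\bet_0)=\frac1{k!}\nabla^m_{\bet}\bbB_0[U^{\otimes k}]$ at $\bet_0$ for $m\le k$. For $m\le k-1$, the left side is zero by the $(k-1)$th order orthogonality of $F$, and the right side is zero by the same Leibniz argument as in Step 2. For $m=k$, the only surviving Leibniz terms are those in which each of the $k$ factors receives exactly one derivative. Writing $u_{i}$ for the components of $U(\beta_0,\cdot)$ and $A_{j,i}=\partial u_i/\partial\eta_j$ (so that $\bbA_0=(A_{j,i})$, consistent with $\partial U^\top/\partial\bet$), we get
\[
\frac{\partial^k}{\partial\eta_{j_1}\cdots\partial\eta_{j_k}}\frac1{k!}\sum_{i_1,\dots,i_k}B_{i_1\dots i_k}u_{i_1}\cdots u_{i_k}\Big|_{\bet_0}=\frac1{k!}\sum_{\sigma\in S_k}\sum_{i_1,\dots,i_k}B_{i_1\dots i_k}\prod_{r=1}^kA_{j_{\sigma(r)},i_r}.
\]
The key point is that $\bbB_0$ is a symmetric tensor, because $\nabla^k_{\bet}F$ is symmetric (since $F$ is $C^k$) and the mode-wise product with $(\bbA_0^{-1})^{\otimes k}$ preserves symmetry. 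Therefore each permutation contributes the same value $(\bbA_0^{\otimes k}\bbB_0)_{j_1\dots j_k}=(\nabla^k_{\bet}F(\beta_0,\bet_0))_{j_1\dots j_k}$, and the $k!$ terms cancel the factor $1/k!$. Thus $\nabla^k_{\bet}\tilde F(\beta_0,\tilde\bet_0)=0$. The main obstacle is getting the index and transpose conventions of $\bbA_0$ and of the mode-wise product right, so that the Leibniz sum really produces $\bbA_0^{\otimes k}\bbB_0$ rather than its transpose analogue. I would fix the convention $(\partial U^\top/\partial\bet)_{j,i}=\partial U_i/\partial\eta_j$ at the outset and check the case $k=1$ (and the Lemma \ref{lem: novel score} special case) against it.

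\textbf{Step 4: assembly.} Combining Steps 2 and 3, every partial derivative of order $m\in[k]$ with respect to $\tilde\bet$, whether pure in $\bet$ or mixed with $\bbA$ or $\bbB$, vanishes at $\tilde\bet_0$. The required smoothness is only that $U$ is $C^1$ in the pieces actually differentiated. The Leibniz argument differentiates each factor at most once in the surviving terms. However, the vanishing of the higher-derivative terms formally requires those derivatives to exist. I would note that each such term contains an undifferentiated factor $U(\beta_0,\bet_0)=\bzero_q$, and justify the expansion via a Taylor argument: $U(\beta_0,\bet_0+\bh)=\bbA_0^\top\bh+o(\|\bh\|)$ gives $\bbB[U^{\otimes k}]=\bbB[(\bbA_0^\top\bh)^{\otimes k}]+o(\|\bh\|^k)$, which avoids needing higher derivatives of $U$.
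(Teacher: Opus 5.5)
Your proposal is correct and follows the paper's proof. The moment equations are checked blockwise as you do, derivatives involving $\bbA$ or $\bbB$ vanish because $\tilde F$ does not depend on $\bbA$ and is linear in $\bbB$ with coefficient $U^{\otimes k}$ vanishing to order $k$ at $\bet_0$, and the pure $\bet$-derivatives reduce to the identity $\bbA_0^{\otimes k}\bbB_0=\nabla^k_{\bet}F(\beta_0,\bet_0)$. The one difference is that the paper handles Step 3 with the directional expansion $\bbB_0[(U(\beta_0,\bet_0+t\bh))^{\otimes k}]=t^k(\bbA_0^{\otimes k}\bbB_0)[\bh^{\otimes k}]+o(t^k)$, which is your Step 4 fallback, so it needs neither the coordinate-wise Leibniz/permutation count nor higher derivatives of $U$.
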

\begin{rem}
Here, we provide intuition for our formula without relying on tensor notation by considering the case with $k=2$ and assuming that the moment function $F$ already satisfies the first-order Neyman orthogonality condition; see \cite{CCDDHNR18} for the construction of the first-order Neyman orthogonal moment function starting from the original moment functions in \eqref{eq: z estimation problem}.

Intuitively, in order to estimate $\beta_0$, we would like to use the moment function $\beta\mapsto F(\beta,\bet_0)$. However, $\bet_0$ is unknown and we have to replace it by the corresponding estimator $\widehat\bet$, which provides an approximating moment function $\beta\mapsto F(\beta,\widehat\bet)$. The key idea behind the construction of moment functions satisfying Neyman orthogonality conditions is to improve the approximating function $\beta\mapsto F(\beta,\widehat\bet)$ by considering Taylor's expansion of $F(\beta_0,\widehat\bet)$ around $\widehat\bet = \bet_0$, so that
\begin{equation}\label{eq: taylor expansion second order}
F(\beta_0,\bet_0) = F(\beta_0,\widehat\bet) - \frac{1}{2}(\widehat\bet - \bet_0)^\top \frac{\partial^2 F(\beta_0,\bet_0)}{\partial\bet \partial \bet^\top}(\widehat\bet - \bet_0) + o(\|\widehat\bet - \bet_0\|_2^2),
\end{equation}
where we used the fact that $F$ satisfies the first-order Neyman orthogonality condition. The difference of the first two terms on the right-hand side of this equation provides a better approximation to $F(\beta_0,\bet_0)$ than $F(\beta_0,\widehat\bet)$ does itself but one of the problems here is that the difference $\widehat\bet - \bet_0$ is not observed. To solve this problem, we consider Taylor's expansion of $U(\beta_0,\widehat\bet)$ around $\widehat\bet = \bet_0$:
$$
U(\beta_0,\widehat\bet) = \frac{\partial U(\beta_0,\bet_0)}{\partial \bet^\top}(\widehat\bet - \bet_0) + o(\|\widehat\bet - \bet_0\|_2),
$$
where we used the fact that $U(\beta_0,\bet_0) = \bzero_q$. Solving this system of equations for $\widehat\bet - \bet_0$ and substituting the solution to \eqref{eq: taylor expansion second order} in turn yields
$$
F(\beta_0,\bet_0) = F(\beta_0,\widehat\bet) - \frac{1}{2}U(\beta_0,\widehat\bet)^\top \bbB_0 U(\beta_0,\widehat\bet) + o(\|\widehat\bet - \bet_0\|_2^2),
$$
where we denoted
\begin{equation}\label{eq: a0 and b0 simplification}
\bbA_0:=\frac{\partial U(\beta_0,\bet_0)^\top}{\partial \bet}\quad\text{and}\quad \bbB_0 := \bbA_0^{-1} \frac{\partial^2 F(\beta_0,\bet_0)}{\partial\bet \partial \bet^\top} (\bbA_0^\top)^{-1}.
\end{equation}
This yields the moment function 
$$
\tilde F(\beta,(\bet^\top,\vec(\bbA)^\top,\vec(\bbB)^\top)^\top) = F(\beta,\bet) - \frac{1}{2}U(\beta,\bet)^\top \bbB U(\beta,\bet)
$$
that is second-order Neyman orthogonal with respect to $\bet$ by construction. Fortunately, it is also second-order Neyman orthogonal with respect to $\vec(\bbA)$ and $\vec(\bbB)$, and thus fully second-order Neyman orthogonal. The construction provided in Lemma \ref{lem: higher-order neyman orthogonal score} generalizes the idea described above to the $k$th order Neyman orthogonality for any integer $k$ using the tensor notation.
\qed
\end{rem}

\subsection{Application to M-Estimation with Single-Index Structure}
We now apply the general formula obtained in the previous subsection to derive the second-order Neyman orthogonal moment function for estimating $\beta_0$ in the M-estimation problem with a single-index structure
$$
(\beta_0,\btheta_0):=\argmin_{(\beta,\btheta)\in\R\times\R^p}\E[m(D\beta+\bX^\top\btheta,\bY)],
$$
where $m\colon\R\times\mathcal Y\to\R$ is a known smooth loss function, $\bY\in\mathcal Y$ is one or more outcome variables, $D\in\R$ is a regressor of interest, $\bX\in\R^p$ is a vector of controls, $\beta_0\in\R$ is the parameter of interest, and $\btheta_0\in\R^p$ is a vector of nuisance parameters.

To this end, denote the first, second, and third derivatives of $m$ with respect to its first argument by $m_1'$, $m_{11}''$, and $m_{111}'''$, respectively, and let $\bmu_0\in\R^p$ be a solution to the moment equation
$$
\E[m_{11}''(D\beta_0 + \bX^\top\theta_0,\bY)(D-\bX^\top\bmu)\bX] = \bzero_p.
$$
Then the first-order Neyman orthogonal moment function for estimating $\beta_0$ is $F\colon\R\times\R^{2p}\to\R$ given by
$$
F(\beta,\bet):=\E[m_1'(D\beta + \bX^\top\btheta,\bY)(D-\bX^\top\bmu)],\quad \bet:=(\btheta^\top,\bmu^\top)^\top;
$$
e.g. see \cite{GBRD14} or \cite{chetverikov2025selecting}. Thus, setting
$$
U(\beta,\bet):=\left(\begin{array}{c}
\E[m_1'(D\beta + \bX^\top\btheta,\bY)\bX]\\
\E[m_{11}''(D\beta + \bX^\top\btheta,\bY)(D-\bX^\top\bmu)\bX]
\end{array}\right),
$$
we fit our problem into a setting of the previous subsection with $\bet_0:= (\btheta_0^\top,\bmu_0^\top)^\top$.

Now, to apply the general formula from the previous subsection, let
\begin{align*}
& \mathbf G_0 := \E[m_{11}''(D\beta_0 + \bX^\top\btheta_0,\bY)\bX\bX^\top],\\
& \mathbf H_0 := \E[m_{111}'''(D\beta_0 + \bX^\top\btheta_0,\bY)(D-\bX^\top\bmu_0)\bX\bX^\top].
\end{align*}
Then
$$
\frac{\partial U(\beta_0,\bet_0)}{\partial\bet^\top}
=
\begin{pmatrix}
\mathbf G_0 & \bzero\\
\mathbf H_0 & -\mathbf G_0
\end{pmatrix},
\qquad
\nabla_{\bet}^2 F(\beta_0,\bet_0)
=
\begin{pmatrix}
\mathbf H_0 & -\mathbf G_0\\
-\mathbf G_0 & \bzero
\end{pmatrix}.
$$
Hence, with $\bbA_0$ and $\bbB_0$ defined by \eqref{eq: a0 and b0 definition}, which simplifies to \eqref{eq: a0 and b0 simplification} for $k=2$, we have
\[
\bbB_0
=
\begin{pmatrix}
-\,\mathbf G_0^{-1}\mathbf H_0\mathbf G_0^{-1} & \mathbf G_0^{-1}\\[1mm]
\mathbf G_0^{-1} & \bzero
\end{pmatrix}.
\]
Therefore, the second-order Neyman orthogonal moment function is
\begin{align*}
\tilde F(\beta,\tilde\bet)
&=
\E[m_1'(D\beta + \bX^\top\btheta,\bY)(D-\bX^\top\bmu)] \\
&\quad
-\E[m_1'(D\beta + \bX^\top\btheta,\bY)\bX]^\top
\mathbf G^{-1}
\E[m_{11}''(D\beta + \bX^\top\btheta,\bY)(D-\bX^\top\bmu)\bX] \\
&\quad
+\frac12\,
\E[m_1'(D\beta + \bX^\top\btheta,\bY)\bX]^\top
\mathbf G^{-1}\mathbf H\mathbf G^{-1}
\E[m_1'(D\beta + \bX^\top\btheta,\bY)\bX],
\end{align*}
where $\tilde\bet := (\btheta^\top,\bmu^\top,\vec(\mathbf G)^\top,\vec(\mathbf H)^\top)^\top$ and $\tilde\bet_0 := (\btheta_0^\top,\bmu_0^\top,\vec(\mathbf G_0)^\top,\vec(\mathbf H_0)^\top)^\top$. It is also straightforward to verify that $\beta_0$ solves the desired moment equation $\tilde F(\beta,\tilde\bet_0) = 0$.

\begin{rem}
It is useful to note that the moment function derived above reduces to the moment function $\psi^{TL}$ in Section \ref{sec: triple lasso} if we set $m(\cdot,\bY) = (\bY - \cdot)^2$ to obtain the model from Section \ref{sec: triple lasso}. Note also that one can construct an estimator based on this moment function analogous to the triple Lasso estimator in Section \ref{sec: triple lasso} but we leave such a development to future work.
\qed
\end{rem}

\section{Monte Carlo Simulations}\label{sec: simulations}

In this section, we study the finite-sample behavior of the triple Lasso estimator
relative to the standard cross-fitted double Lasso estimator
in the linear regression model from Section \ref{sec: triple lasso}. In our Monte Carlo exercises, we vary the
sparsity of the nuisance vector $\bgamma_0$, the correlation structure of the
controls $\bX$, and the sample/problem size, while keeping the target parameter
fixed at $\beta_0 = 1$.

\subsection{Design}

For each Monte Carlo replication, we generate i.i.d.\ observations
$(\bX_i,D_i,Y_i)$, $i=1,\dots,n$, from model \eqref{eq:
model}--\eqref{eq:first-stage} with $\beta_0 = 1$, $\nu\sim\cN(0,1)$ and
$\varepsilon\sim\cN(0,1)$ independent of each other and of $\bX$. The controls
follow a Gaussian design,
\[
\bX \sim \cN(\bzero,\bSigma_0),
\qquad \bSigma_0:=\bSigma_0(\rho):=\sbr[1]{\rho^{|j-k|}}_{j,k=1}^p,
\]
so that the scalar $\rho \in \{0,0.2,0.4,0.6,0.8\}$ governs the correlation
between controls. Note that with $\bX\sim \cN(\bzero,\bSigma_0)$ and $\bSigma_0$
of the Toeplitz form given above, the inverse $\bOmega_0=\bSigma_0^{-1}$ takes a
tri-diagonal form. Joint normality therefore leads to any single control $X_j$
being conditionally normally distributed given the other controls $\bX_{-j}$.
Unpacking $\bOmega_0$, we get
\[
\left.X_j \middle| \bX_{-j}=\bx_{-j}\right.\sim
\begin{cases}
\cN(\rho x_2, 1-\rho^2), & j=1, \\
\cN\left(\dfrac{\rho}{1+\rho^2}(x_{j-1}+x_{j+1}),
\dfrac{1-\rho^2}{1+\rho^2}\right), & 2 \le j \le p-1, \\
\cN(\rho x_{p-1}, 1-\rho^2), & j=p.
\end{cases}
\]
In particular, the conditional means are sparse linear functions of the
conditioning variables.

We consider the sample sizes $n \in \{500,1000,2000\}$ and limit attention to
the high-dimensional regime $p=n/2$. The structure of the outcome nuisance
vector $\btheta_0$ is held fixed across designs and is approximately sparse with
geometrically decaying entries,
\[
\theta_{0,j} = (0.5)^{\,j-1}, \quad j=1,\ldots,p.
\]
The treatment nuisance vector $\bgamma_0$ also has (at least) geometrically
decaying entries
\[
\gamma_{0,j} = 
\begin{cases}
  (0.5)^{\,j-1}, & j \leq s_{\bgamma},\\
  0, & j > s_{\bgamma},
\end{cases}
\]
but its sparsity varies across three regimes:
\begin{enumerate}[label=(\roman*)]
\item \emph{Exact sparsity:} $s_{\bgamma} = 2$;
\item \emph{Intermediate sparsity:} $s_{\bgamma} = 5$; and,
\item \emph{Approximate sparsity:} $s_{\bgamma} = p$.
\end{enumerate}
For each of the $3\times5\times3=45$ design points $(n,\rho,s_{\bgamma})$, we run
$2{,}000$ Monte Carlo replications.

The approximate sparsity of $\btheta_0$ is already challenging for the double
lasso. Moving from exact to approximate sparsity of $\bgamma_0$ gradually
weakens the environment in which even our second-order orthogonal procedure is
expected to perform well since approximate sparsity design implies larger effective sparsity index $s_{\bgamma}$.

\subsection{Estimation and Implementation}

We compare two estimators. First, the double Lasso estimator is implemented as the
cross-fitted residual-on-residual estimator using the DML1 aggregation rule of
\cite{CCDDHNR18}. In each fold, we estimate the treatment regression $D$ on
$\bX$ and the reduced-form regression $Y$ on $\bX$ by lasso on the training
sample, evaluate the fitted values on the holdout sample, and form the
fold-specific orthogonal score based on the residuals.

Second, the triple Lasso estimator is implemented as the cross-fitted adjusted-score
estimator. Relative to the double Lasso estimator, it augments the holdout score by the
estimated second-order correction term built from estimates of selected rows of
the inverse control correlation matrix. Operationally, after estimating
$\widehat{\bgamma}_k$ and $\widehat{\bphi}_k$ on the training fold, we define
the selected support
\(
\widehat T_k = \{j : \widehat\gamma_{k,j} \neq 0\}
\),
estimate only the rows of $\bTheta_0 = \bSigma_0^{-1}$ indexed by $\widehat T_k$
using node-wise Lasso, and then evaluate the adjusted score on the holdout
observations. We then use DML1 aggregation to produce the triple lasso estimate.

Both procedures use the same random $K=5$ folds in each replication. All Lasso
penalties are determined by plug-in rules of the Bickel--Ritov--Tsybakov type
\citep{bickel_simultaneous_2009}. Specifically, let $n_K := (K-1)n/K$ denote the
training-sample size and introduce the Belloni--Chen--Chernozhukov--Hansen
baseline penalty level
\begin{equation}\label{eq:bcch_penalty_for_glmnet_lasso}
\lambda(n_{\text{obs}},p_{\text{pen}})
:=
\frac{c}{\sqrt n_{\text{obs}}}\Phi^{-1}\left(1 - \frac{\alpha(n_{\text{obs}},p_{\text{pen}})}{2p_{\text{pen}}}\right),
\end{equation}
where $c:=1.1$ and
\(
\alpha(n_{\text{obs}},p_{\text{pen}}):= 0.1/\log(\max\{p_{\text{pen}},n_{\text{obs}}\}),
\)
as suggested by \citet{belloni_sparse_2012}, with $n_{\text{obs}}$ and
$p_{\text{pen}}$ being placeholders for the numbers of observations and
penalized parameters, respectively. The fold-invariant penalty levels are then
set as
\begin{equation}\label{eq:brt_penalties_simulations}
\lambda_\gamma := \lambda(n_K,p)\sigma_\nu,
\qquad
\lambda_\phi := \lambda(n_K,p)\sigma_e,
\qquad
\lambda_{\psi_j}(\rho) := \lambda(n_K,p-1)\sigma_{X_j|\bX_{-j}}(\rho),
\end{equation}
where
\(
\sigma^2_e := \beta_0^2 \sigma_\nu^2 + \sigma_\varepsilon^2 = 2
\)
is the variance of the reduced-form error 
\(
e_i = \beta_0 \nu_i + \varepsilon_i,
\)
and 
\[
  \sigma^2_{X_j\mid \bX_{-j}}(\rho):=\frac{1-\rho^2}{1+\bone\{1<j<p\}\rho^2}
\]
is the conditional variance of control $X_j$ given all other controls
$\bX_{-j}$. Here, $\lambda_\gamma$ is used in the treatment Lasso, $\lambda_\phi$
in the outcome reduced-form Lasso, and the $\{\lambda_{\psi_j}(\rho)\}_{j=1}^p$
are used in the node-wise Lasso step of the triple Lasso estimator.\footnote{ The
(infeasible) penalties in \eqref{eq:brt_penalties_simulations} were chosen for
computational simplicity. A more realistic comparison would use tuning parameter
choices which are feasible in practice. One option is to replace the unknown
standard deviations with (initally conservative) proxies to produce feasible
analogs of the \citet{bickel_simultaneous_2009} penalties. One could refine such
proxies in a possibly iterative manner borrowing ideas from \citet[Algorithm
A.1]{belloni_sparse_2012}. Alternatively, one could use the bootstrap after
cross-validation method proposed in \citet{chetverikov2025selecting}, which
takes into account the correlation structure.}

All Lasso estimators are fit using \texttt{glmnet} in \texttt{R} with standardized
controls.\footnote{The \texttt{glmnet} definition of Lasso is based on \emph{one
half} square loss, which cancels out a ``2'' in the original
\citet{belloni_sparse_2012} baseline penalty, thus leading to our
\eqref{eq:bcch_penalty_for_glmnet_lasso}.} The treatment and reduced-form Lasso
regressions include intercepts; node-wise regressions are run without intercepts.
The reported standard errors are heteroskedasticity-robust score-based standard
errors computed from the corresponding cross-fitted influence function.

\subsection{Performance Metrics}

For each estimator $\widehat\beta$ we report the Monte Carlo
\begin{enumerate}[label=(\roman*)]
\item squared bias, $(\E[\widehat\beta]-\beta_0)^2$;
\item variance, $\var(\widehat\beta)$;
\item mean squared error, $\E[(\widehat\beta-\beta_0)^2]$;
\item coverage of the nominal $95\%$ confidence interval
\(
[\widehat\beta \pm 1.96 \cdot \widehat{\mathrm{se}}(\widehat\beta)];
\)
\item mean confidence-interval length, $2 \cdot 1.96 \cdot
\widehat{\mathrm{se}}(\widehat\beta)$; and, finally,
\item the Monte Carlo distribution of the studentized statistic
\(
(\widehat\beta-\beta_0)/\widehat{\mathrm{se}}(\widehat\beta)
\)
to be contrasted with the $\cN(0,1)$ distribution.
\end{enumerate}
Error bars indicate $\pm 1.96$ Monte Carlo standard errors. To facilitate
comparison, for the studentized statistics, we plot the kernel densities instead
of histograms.\footnote{All kernel densities are created using the
\texttt{ggplot2} with \texttt{geom\_density}. In expectation of an approximately
normal distribution, we use a Gaussian kernel and the \citet[Equation
(3.31)]{silverman_density_1986} rule-of-thumb bandwidth (both
\texttt{geom\_density }defaults).}

\subsection{Results}

Figures \ref{fig:sim-bias}--\ref{fig:sim-density-rho0} show the performance of
the triple and double Lasso estimators based on the stated performance metrics in turn.

A central pattern from the figures is that triple Lasso estimator substantially reduces
bias precisely in the designs where one would expect second-order effects to
matter most. This is most visible in Figure \ref{fig:sim-bias}, which plots the \emph{squared bias}.
For example, under approximate sparsity with $(n,p,\rho)=(1000,500,0)$, the
squared bias falls from $0.00216$ for the double Lasso estimator to $0.00025$ for the triple
Lasso estimator; with $(n, p,\rho)=(2000,1000,0)$, it falls from $0.00082$ to $0.00009$.
Similar reductions appear throughout the intermediate-sparsity designs. Note that these findings do not contradict the asymptotic theory in Section \ref{sec: triple lasso}, which suggests that the gains of the triple Lasso estimator are largest in the exactly sparse designs with consistent screening. Indeed, our approximately sparse design corresponds to larger effective sparsity index relative to our exact sparsity design, which hurts the double Lasso estimator more than it hurts the triple Lasso estimator by construction.

The \emph{variance} of the triple Lasso estimator is typically somewhat larger (Figure
\ref{fig:sim-variance}), which is the expected cost of estimating the additional
score adjustment, but the increase is typically modest relative to the bias
reduction in the difficult designs.

Figure \ref{fig:sim-mse} shows that these bias gains translate into meaningful
improvements in \emph{mean squared error} whenever the double Lasso estimator is materially
biased. The gains are especially pronounced at low and moderate $\rho$ values and under
intermediate or approximate sparsity. For instance, with
$(n,p,\rho,s_{\bgamma})=(500,250,0,\text{interm.})$, the MSE declines from
$0.00630$ to $0.00275$, and with $(1000,500,0,\text{approx.})$ it declines
from $0.00301$ to $0.00129$. As $n$ grows, the gap narrows, and in the easiest
designs, where $\bgamma_0$ is exactly sparse and regressor correlation is high,
the two methods become similar and the double Lasso estimator can occasionally have slightly
smaller MSE because of its lower variance.

Figure \ref{fig:sim-cover} shows the empirical \emph{coverage}. The double Lasso estimator
exhibits substantial undercoverage in the harder designs, again reflecting
residual bias. For example, at $(500,250,0,\text{approx.})$, its empirical $95\%$
coverage is $0.630$; at $(1000,500,0,\text{approx.})$ it is $0.672$; and at
$(2000,1000,0,\text{approx.})$ it is still only $0.750$. The triple Lasso estimator moves
coverage much closer to the nominal level in those same cells, to $0.909$,
$0.923$, and $0.930$, respectively. Even when the improvement in MSE becomes
small, the coverage results indicate that the second-order adjustment remains
useful for inference.

Figure \ref{fig:sim-ci} shows the corresponding tradeoff in \emph{interval
length}. The triple Lasso confidence intervals are systematically longer than double
lasso intervals, which is consistent with the modest increase in variance. The
longer intervals should not be viewed as a drawback per se. Rather, they reflect
the additional uncertainty captured by the second-order adjustment and are
therefore the means by which coverage is brought closer to its nominal level in
designs where the first-order double Lasso procedure understates uncertainty.

Finally, Figure \ref{fig:sim-density-rho0} examines the \emph{studentized
statistics} in the low-correlation designs $(\rho=0)$. The density of the double
Lasso $t$-statistic is visibly shifted away from the standard normal benchmark,
whereas the triple Lasso statistic appears more nearly centered and generally
more aligned with the $\cN(0,1)$ reference curve. The underlying Monte Carlo
moments tell a similar story: for example, under approximate sparsity with
$(n,p,\rho)=(1000,500,0)$, the mean of the double Lasso studentized statistic is
$1.53$, while the mean for triple Lasso is $0.50$. This pattern is consistent
with our second-order orthogonality result.

In Appendix \ref{sec:additional-simulations}, we report the corresponding
densities for $\rho\in\{0.2,0.4,0.6,0.8\}$. Across many of these designs, the
triple Lasso studentized statistic also appears better centered and often closer
to the $\cN(0,1)$ benchmark than the double Lasso statistic, although the
magnitude of the difference varies across cells.

Overall, the simulations support the main theoretical message of the paper. When
nuisance estimation errors are sufficiently small, triple and double Lasso estimators
behave similarly. When second-order terms are non-negligible, however, the triple
Lasso estimator delivers a clear reduction in bias and a substantial improvement in
inference.

\begin{singlespace}
  \bibliographystyle{ecta}
  \bibliography{lasso}   
\end{singlespace}

\begin{figure}[H]
\caption{Monte Carlo squared bias. Rows correspond to the three sparsity regimes
for $\bgamma_0$, and columns correspond to
$(n,p)\in\{(500,250),(1000,500),(2000,1000)\}$. Triple lasso sharply lowers
squared bias in the intermediate and approximate sparsity designs.}
\centering
\includegraphics[width=\textwidth,height=0.88\textheight,keepaspectratio,trim=0 0 0 50,clip]{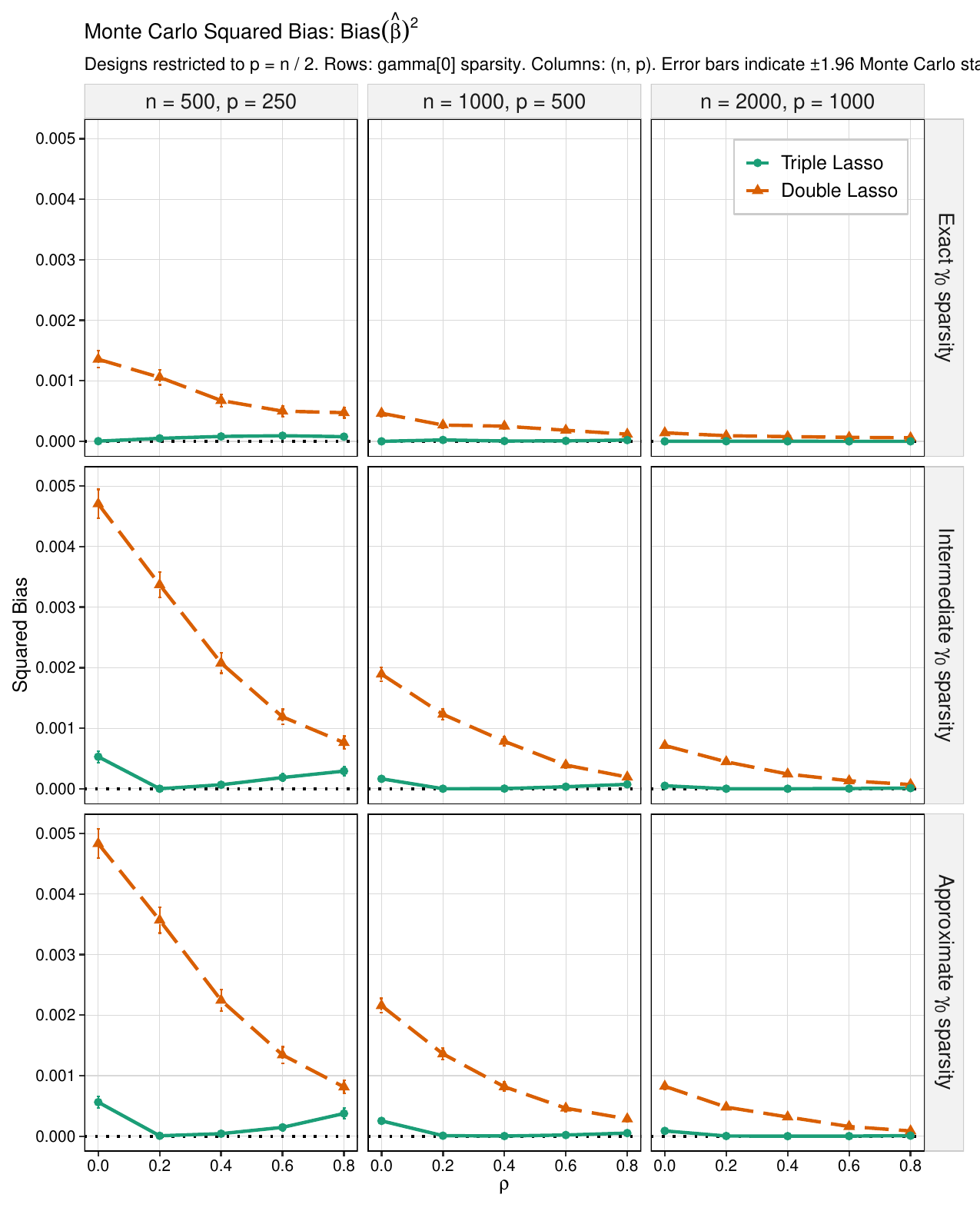}
\label{fig:sim-bias}
\end{figure}

\begin{figure}[H]
\caption{Monte Carlo variance. Rows correspond to the three sparsity regimes for
$\bgamma_0$, and columns correspond to
$(n,p)\in\{(500,250),(1000,500),(2000,1000)\}$. Triple lasso typically incurs a
moderate variance increase relative to double lasso.}
\centering
\includegraphics[width=\textwidth,height=0.88\textheight,keepaspectratio,trim=0 0 0 50,clip]{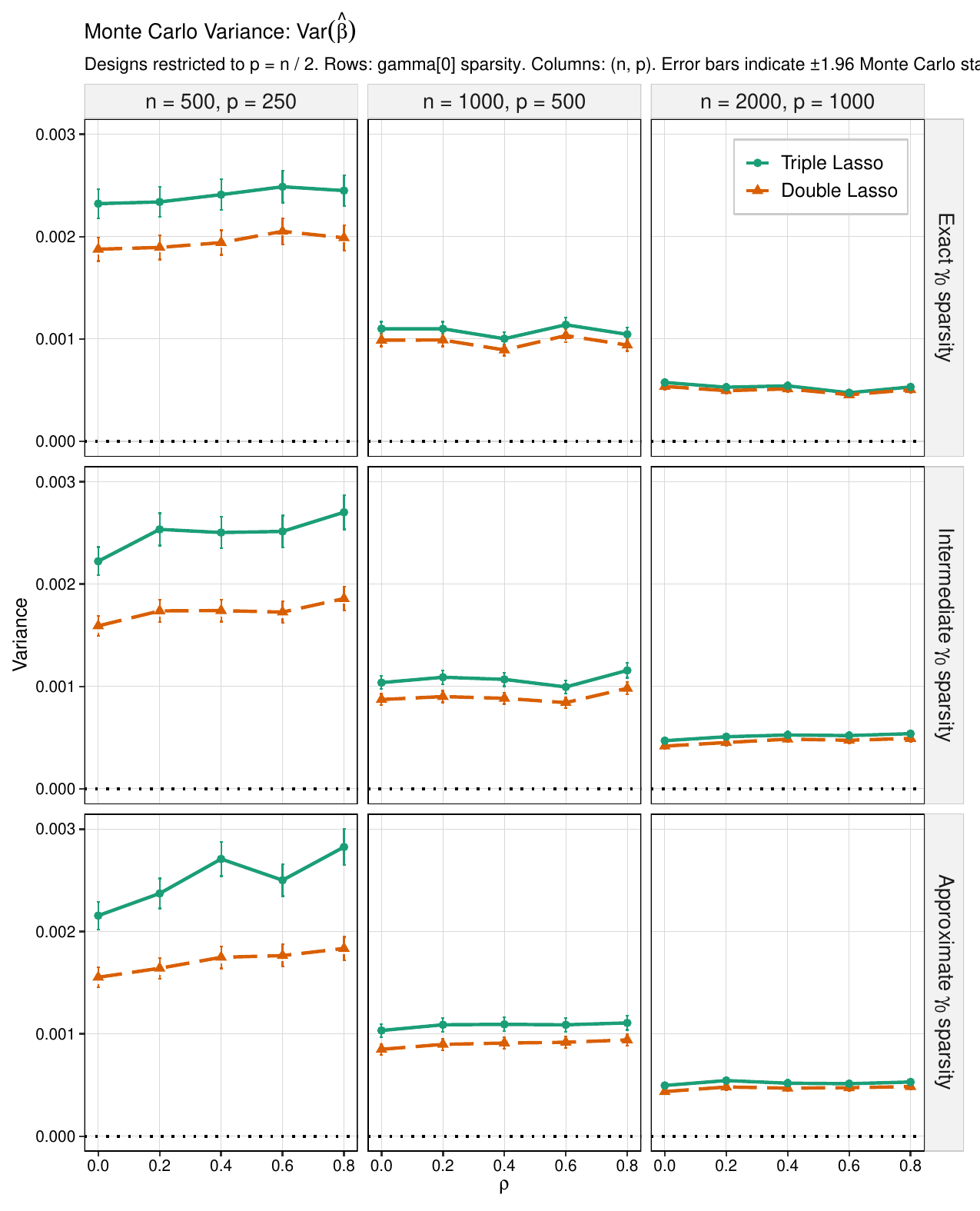}
\label{fig:sim-variance}
\end{figure}

\begin{figure}[H]
\caption{Monte Carlo mean squared error. Triple lasso typically improves MSE in the harder designs, especially under intermediate and approximate sparsity.}
\centering
\includegraphics[width=\textwidth,height=0.88\textheight,keepaspectratio,trim=0 0 0 50,clip]{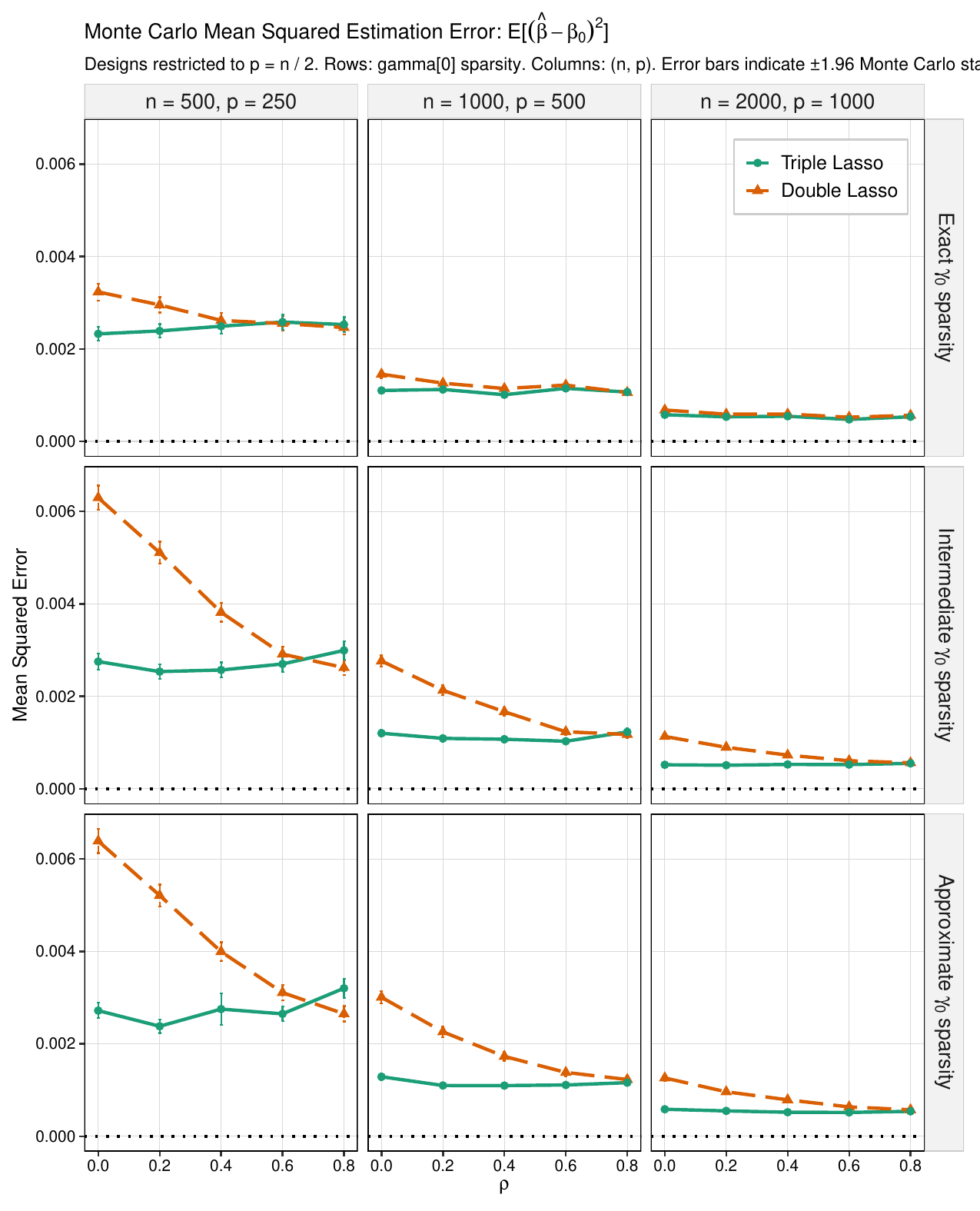}
\label{fig:sim-mse}
\end{figure}

\begin{figure}[H]
\caption{Monte Carlo coverage probability for nominal $95\%$ confidence intervals. Triple lasso brings coverage much closer to the nominal level when double lasso undercovers because of residual bias.}
\centering
\includegraphics[width=\textwidth,height=0.88\textheight,keepaspectratio,trim=0 0 0 50,clip]{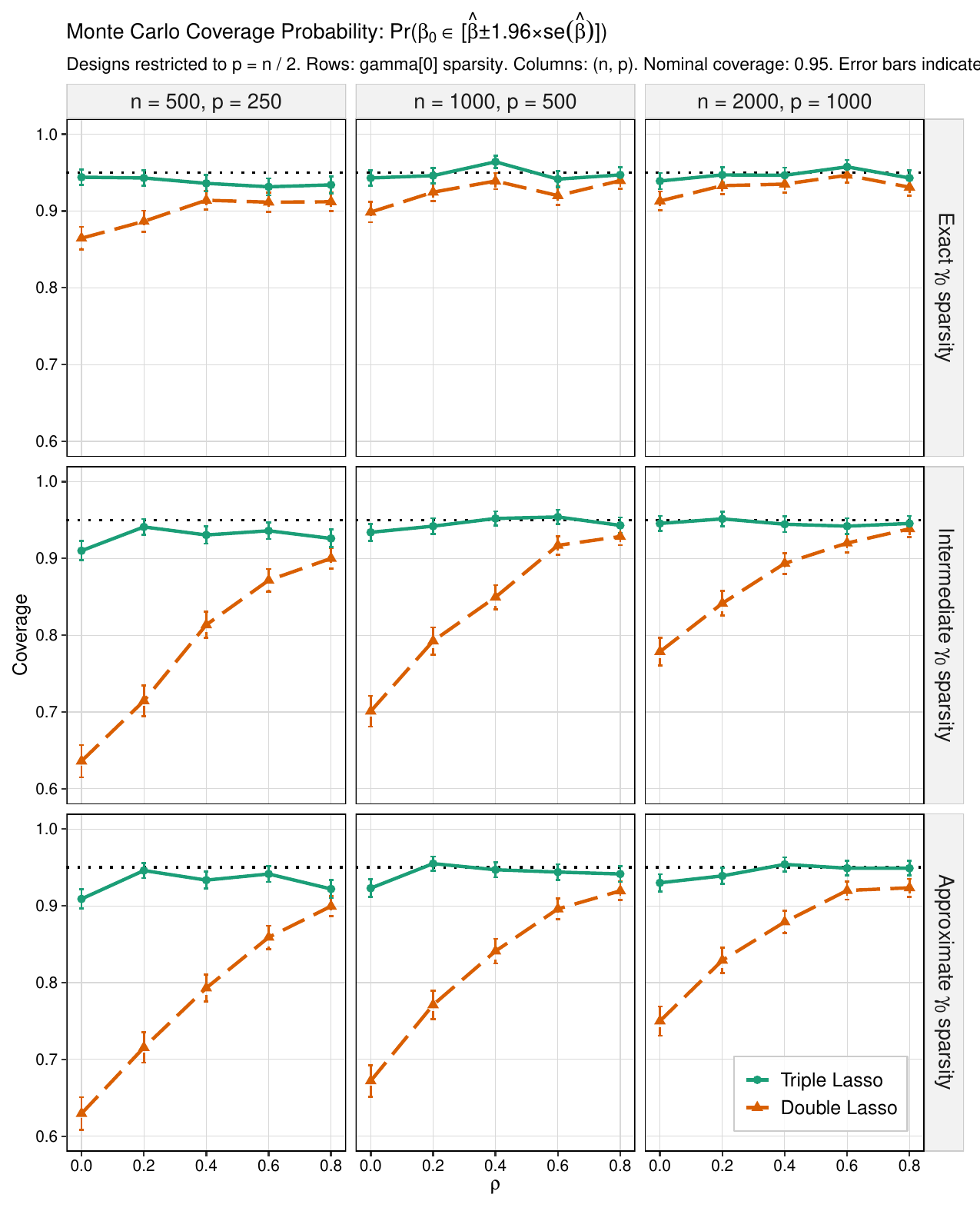}
\label{fig:sim-cover}
\end{figure}

\begin{figure}[H]
\caption{Monte Carlo mean confidence interval length. Triple lasso intervals are somewhat longer, reflecting the additional uncertainty from the second-order correction, but the increase in length is accompanied by substantially more reliable coverage in difficult designs.}
\centering
\includegraphics[width=\textwidth,height=0.88\textheight,keepaspectratio,trim=0 0 0 50,clip]{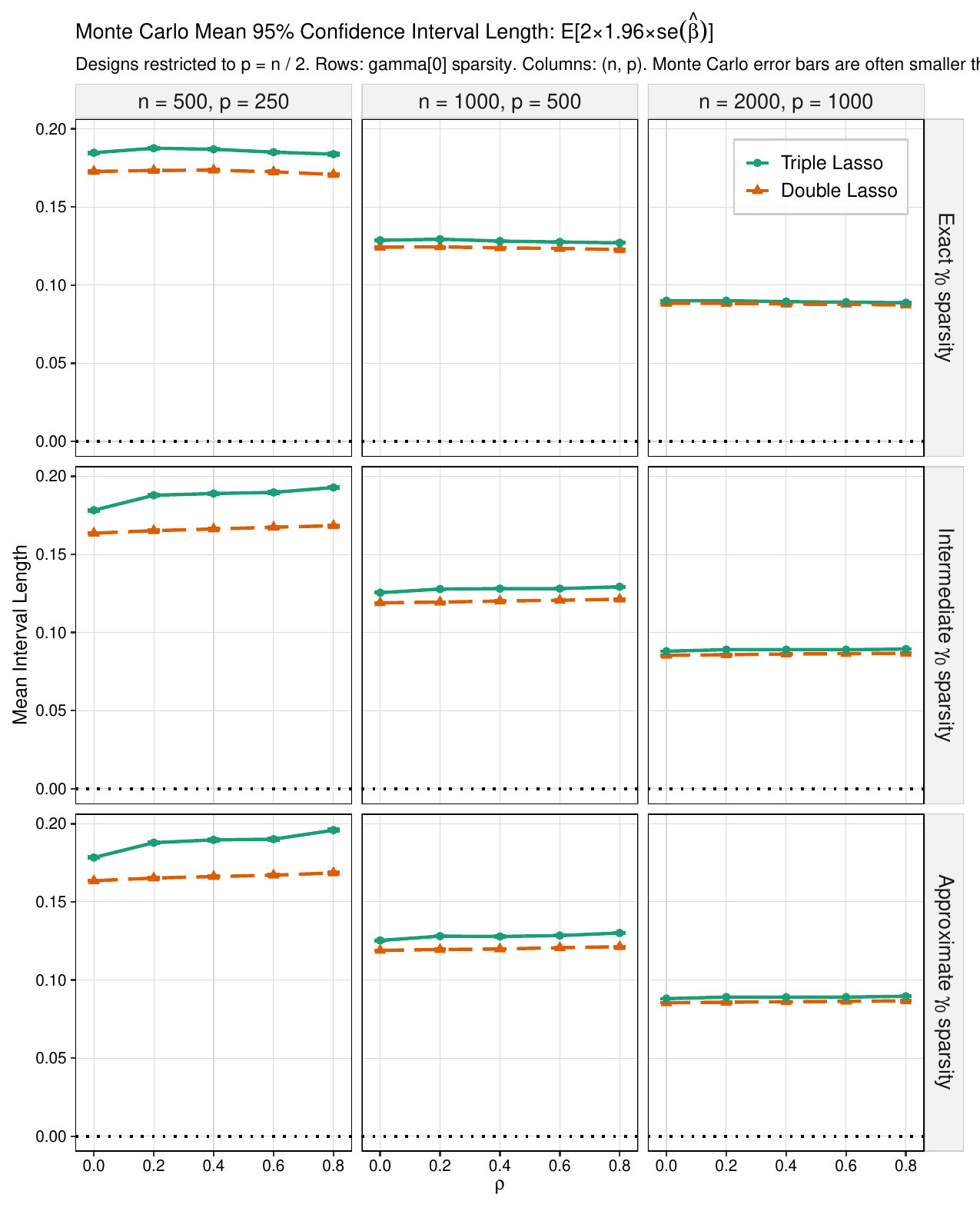}
\label{fig:sim-ci}
\end{figure}

\begin{figure}[H]
\caption{Kernel densities of studentized estimates for $\rho=0$. The dotted line
is the standard normal density. Triple lasso is visibly better centered in the
less sparse designs and closer to standard normal than double lasso for all
designs. These findings are consistent with the improved coverage in Figure
\ref{fig:sim-cover}.}
\centering
\includegraphics[width=\textwidth,height=0.88\textheight,keepaspectratio,trim=0 0 0 40,clip]{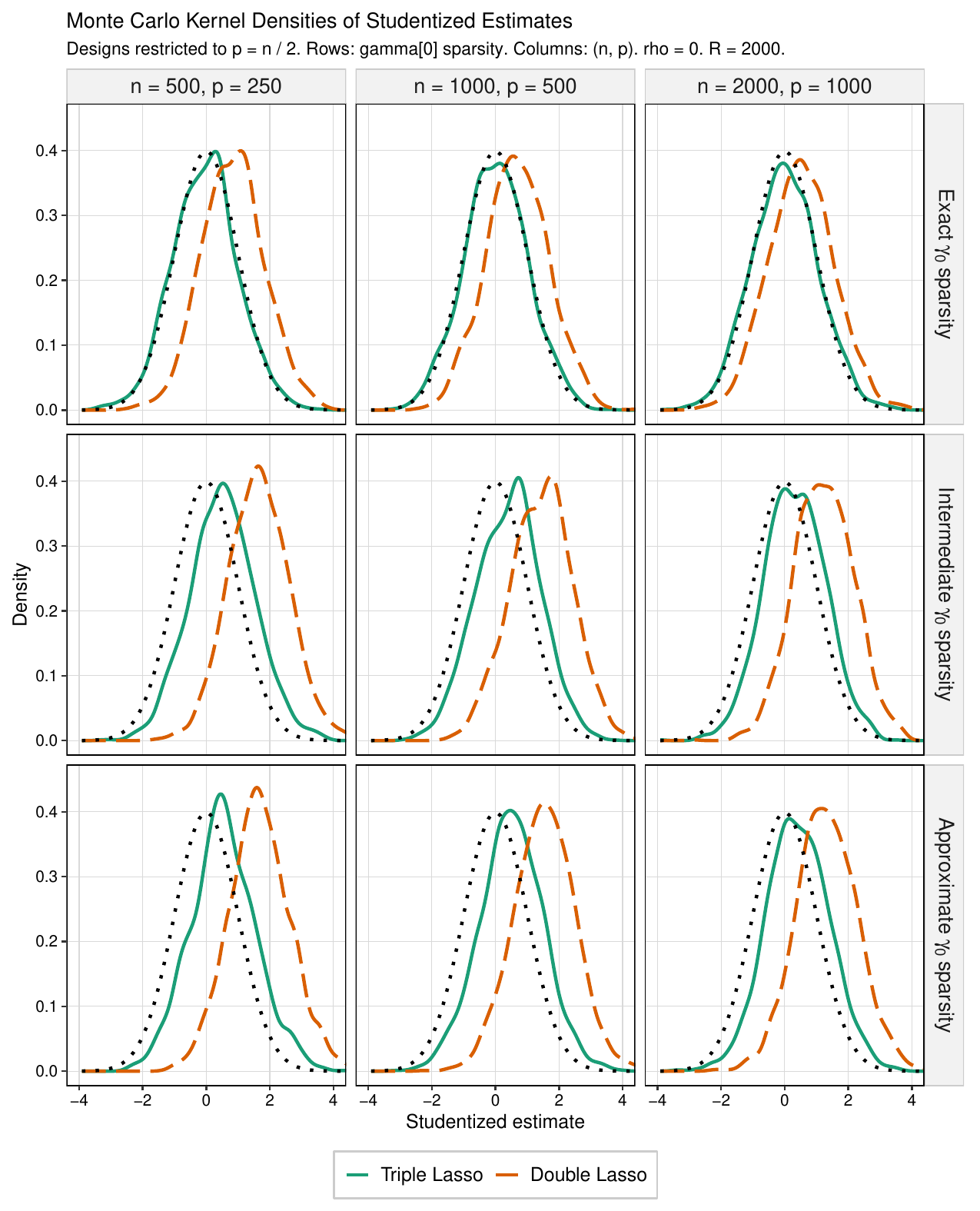}
\label{fig:sim-density-rho0}
\end{figure}

\newpage
\appendix

\part*{Appendix}
\section{Proofs for Results from Main Text}
\begin{proof}[Proof of Lemma \ref{lem: novel score}]
Simple calculations show that
$$
\frac{\partial\psi^{DL}(\beta_0,\bgamma_0,\bphi_0)}{\partial\bgamma} = \beta_0\E[(D-\bX^\top\bgamma_0)\bX] - \E[(Y - \bX^\top\bphi_0 - \beta_0(D - \bX^\top\bgamma_0))\bX] = \bzero_p,
$$
$$
\frac{\partial\psi^{DL}(\beta_0,\bgamma_0,\bphi_0)}{\partial\bphi} = -\E[(D-\bX^\top\bgamma_0)\bX] = \bzero_p,
\quad \frac{\partial^2\psi^{DL}(\beta_0,\bgamma_0,\bphi_0)}{\partial\bphi\partial\bphi^\top} = \bzero_{p\times p},
$$
$$
\frac{\partial^2\psi^{DL}(\beta_0,\bgamma_0,\bphi_0)}{\partial\bgamma\partial\bgamma^\top} = - 2\beta_0\E[\bX\bX^\top],
\quad \frac{\partial^2\psi^{DL}(\beta_0,\bgamma_0,\bphi_0)}{\partial\bgamma\partial\bphi^\top} = \E[\bX\bX^\top].
$$
Also,
\begin{align*}
\frac{\partial\psi^{ADJ}(\beta_0,\bgamma_0,\bphi_0,\bTheta_0)}{\partial\bgamma} 
& = \beta_0\E[\bX\bX^\top]\bTheta_0\E[\bX(D-\bX^\top\bgamma_0)] \\
&\quad - \E[\bX\bX^\top]\bTheta_0\E[\bX(Y - \bX^\top\bphi_0 - \beta_0(D - \bX^\top\bgamma_0))] = \bzero_p,
\end{align*}
$$
\frac{\partial\psi^{ADJ}(\beta_0,\bgamma_0,\bphi_0,\bTheta_0)}{\partial\bphi} = -\E[\bX\bX^\top]\bTheta_0 \E[\bX(D-\bX^\top\bgamma_0)] = \bzero_p,
$$
$$
\frac{\partial^2\psi^{ADJ}(\beta_0,\bgamma_0,\bphi_0,\bTheta_0)}{\partial\bgamma\partial\bgamma^\top} = -2\beta_0\E[\bX\bX^\top],
\quad \frac{\partial^2\psi^{ADJ}(\beta_0,\bgamma_0,\bphi_0,\bTheta_0)}{\partial\bphi\partial\bphi^\top} = \bzero_{p\times p},
$$
$$
\frac{\partial^2\psi^{ADJ}(\beta_0,\bgamma_0,\bphi_0,\bTheta_0)}{\partial\bgamma\partial\bphi^\top} = \E[\bX\bX^\top]\bTheta_0\E[\bX\bX^\top] = \E[\bX\bX^\top].
$$
In addition, by similar calculations, 
$$
\frac{\partial\psi^{ADJ}(\beta_0,\bgamma_0,\bphi_0,\bTheta_0)}{\partial\vec(\bTheta)} = \bzero_{p^2},
\quad \frac{\partial^2\psi^{ADJ}(\beta_0,\bgamma_0,\bphi_0,\bTheta_0)}{\partial\vec(\bTheta)\partial\vec(\bTheta)^\top} = \bzero_{p^2\times p^2},
$$
$$
\frac{\partial^2\psi^{ADJ}(\beta_0,\bgamma_0,\bphi_0,\bTheta_0)}{\partial\vec(\bTheta)\partial\bgamma^\top} = \bzero_{p^2\times p},
\quad \frac{\partial^2\psi^{ADJ}(\beta_0,\bgamma_0,\bphi_0,\bTheta_0)}{\partial\vec(\bTheta)\partial\bphi^\top} = \bzero_{p^2\times p}.
$$
Combining these identities gives the asserted claim.
\end{proof}

\begin{proof}[Proof of Theorem \ref{thm: asymptotic normality}]
Fix $k\in[K]$ and denote $\widehat\btheta_k := \widehat\bphi_k - \beta_0\widehat\bgamma_k$ and $\mathbb D_k := \{(\bX_i, D_i, Y_i)\}_{i\in I(-k)}$. Also, denote
\begin{align*}
& N_1 := \E_{I(k)}[\varepsilon_i\nu_i],\quad N_2 := - (\widehat\btheta_k - \btheta_0)^\top \E_{I(k)}[\bX_i\nu_i], \\
& N_3 := - (\widehat\bgamma_k - \bgamma_0)^\top \E_{I(k)}[\bX_i \varepsilon_i], \quad N_4 := (\widehat\btheta_k - \btheta_0)^\top \E_{I(k)}[\bX_i\bX_i^\top](\widehat\bgamma_k - \bgamma_0), \\
&N_5 := - \E_{I(k)}[\varepsilon_i \bX_i^\top]\widehat\bTheta_k\E_{I(k)}[\bX_i\nu_i],\quad N_6 :=  (\widehat\btheta_k - \btheta_0)^\top \E_{I(k)}[\bX_i\bX_i^\top]\widehat\bTheta_k \E_{I(k)}[\bX_i\nu_i], \\
& N_7 :=   (\widehat\bgamma_k - \bgamma_0)^\top \E_{I(k)}[\bX_i\bX_i^\top] \widehat\bTheta_k^\top \E_{I(k)}[\bX_i \varepsilon_i], \\
& N_8 := - (\widehat\btheta_k - \btheta_0)^\top \E_{I(k)}[\bX_i\bX_i^\top]\widehat\bTheta_k \E_{I(k)}[\bX_i\bX_i^\top](\widehat\bgamma_k - \bgamma_0).
\end{align*}
In addition, denote
\begin{align*}
& D_1 := \E_{I(k)}[\nu_i^2],\quad D_2:= D_3:= -(\widehat\bgamma_k - \bgamma_0)^\top \E_{I(k)}[\bX_i\nu_i], \\
& D_4 := (\widehat\bgamma_k - \bgamma_0)^\top \E_{I(k)}[\bX_i\bX_i^\top](\widehat\bgamma_k - \bgamma_0),\quad D_5 := -\E_{I(k)}[\nu_i\bX_i^\top]\widehat\bTheta_k\E_{I(k)}[\bX_i\nu_i], \\
& D_6 :=  (\widehat\bgamma_k - \bgamma_0)^\top\E_{I(k)}[\bX_i\bX_i^\top]\widehat\bTheta_k\E_{I(k)}[\bX_i\nu_i],
\  D_7 := (\widehat\bgamma_k - \bgamma_0)^\top\E_{I(k)}[\bX_i\bX_i^\top]\widehat\bTheta_k^\top\E_{I(k)}[\bX_i\nu_i], \\
& D_8 := -(\widehat\bgamma_k - \bgamma_0)^\top\E_{I(k)}[\bX_i\bX_i^\top]\widehat\bTheta_k\E_{I(k)}[\bX_i\bX_i^\top](\widehat\bgamma_k - \bgamma_0).
\end{align*}
Then
$$
\widehat\beta_{k}^{TL} - \beta_0 = \frac{N_1 + N_2 + N_3 + N_4  + N_5 + N_6 + N_7 + N_8}{D_1 + D_2 + D_3 + D_4  + D_5 + D_6 + D_7 + D_8}.
$$
We now proceed in two steps. In the first step, we show that
$$
N_1 + N_2 + N_3 + N_4 + N_5 + N_6 + N_7 + N_8 = \E_{I(k)}[\varepsilon_i\nu_i] + R_1 + R_2 + R_3
$$
for some $R_1$, $R_2$, and $R_3$ satisfying \eqref{eq: r1 and r2} and \eqref{eq: r3}. In the second step, we show that
$$
D_1 + D_2 + D_3 + D_4 + D_5 + D_6 + D_7 + D_8 = \E[\nu^2] + o_P(1).
$$
Combining these bounds and aggregating the estimators gives the asserted claims of the theorem.

\medskip
{\bf Step 1.}
Observe that
\begin{align*}
\E[N_2^2\mid \mathbb D_k] 
& \overset{(i)}{=} (\widehat\btheta_k - \btheta_0)^\top \E[\nu^2 \bX\bX^{\top}](\widehat\btheta_k - \btheta_0)/|I(k)| \\
& \overset{(ii)}{\lesssim} (\widehat\btheta_k - \btheta_0)^\top \E[\bX\bX^{\top}](\widehat\btheta_k - \btheta_0) / n \overset{(iii)}{\lesssim} s_{\bphi}\log p / n^2,
\end{align*}
where (i) follows from $\E[\nu\bX] = \bzero_p$, (ii) from Assumptions \ref{as: subsample} and \ref{as: bounded}, and (iii) from Lemma \ref{lem: l2 bounds}. Thus, $|N_2| \lesssim_P \sqrt{s_{\bphi}\log p}/n$ by Markov's inequality. Also, $|N_3| \lesssim_P \sqrt{s_{\bgamma}\log p}/n$ by the same argument. In addition,
\begin{align*}
&\E[N_5^2\mid \mathbb D_k, \{(\bX_i,D_i)\}_{i\in I(k)}]  \overset{(i)}{=} \E_{I(k)}[\nu_i\bX_i^\top]\widehat\bTheta_k^\top \E_{I(k)}[\bX_i\bX_i^\top]\widehat\bTheta_k \E_{I(k)}[\bX_i\nu_i] / |I(k)| 
\overset{(ii)}{\lesssim}_P s_{\bgamma}/n^2,
\end{align*}
where (i) follows from $\E[\varepsilon \mid \bX] = 0$ and Assumption \ref{as: bounded} and (ii) from Assumption \ref{as: subsample} Lemma \ref{lem: rudelson 3}. Thus, $|N_5| \lesssim_P \sqrt{s_{\bgamma}} / n$ by Markov's inequality. Moreover, denoting
$$
N_{6,1}:=(\widehat\btheta_k - \btheta_0)^\top(\E_{I(k)}[\bX_i\bX_i^\top] - \E[\bX\bX^\top])\widehat\bTheta_k\E_{I(k)}[\bX_i\nu_i],
$$
$$
N_{6,2}:= (\widehat\btheta_k - \btheta_0)^\top \E[\bX\bX^\top]\widehat\bTheta_k\E_{I(k)}[\bX_i\nu_i],
$$
we have $N_6 = N_{6,1} + N_{6,2}$. Here, $|N_{6,1}| \lesssim_P s_{\bgamma}\sqrt{s_{\bphi}}\log p / n^{3/2}$ by Lemma \ref{lem: rudelson}. Also, denoting
$$
N_{6,2,1}:=(\widehat\btheta_k - \btheta_0)^\top\E[\bX\bX^\top](\widehat\bTheta_k\E[\bX\bX^\top] - \bI_{\widehat T_k})\widehat\bTheta_k^\top\E[\bX\bX^\top](\widehat\btheta_k - \btheta_0) / n,
$$
$$
N_{6,2,2}:=(\widehat\btheta_k - \btheta_0)^\top\E[\bX\bX^\top]\bI_{\widehat T_k}\widehat\bTheta_k^\top\E[\bX\bX^\top](\widehat\btheta_k - \btheta_0) / n,
$$
we have
\begin{align*}
\E[N_{6,2}^2\mid \mathbb D_k]
& \overset{(i)}{=} (\widehat\btheta_k - \btheta_0)^\top\E[\bX\bX^\top]\widehat\bTheta_k\E[\nu^2\bX\bX^\top]\widehat\bTheta_k^\top\E[\bX\bX^\top](\widehat\btheta_k - \btheta_0) / |I(k)| \\
& \overset{(ii)}{\lesssim} (\widehat\btheta_k - \btheta_0)^\top\E[\bX\bX^\top]\widehat\bTheta_k\E[\bX\bX^\top]\widehat\bTheta_k^\top\E[\bX\bX^\top](\widehat\btheta_k - \btheta_0) / n 
\overset{(iii)}{=} N_{6,2,1} + N_{6,2,2},
\end{align*}
where (i) follows from $\E[\nu\bX]=\bzero_p$, (ii) from Assumptions \ref{as: subsample} and \ref{as: bounded}, and (iii) from the definitions of $N_{6,2,1}$ and $N_{6,2,2}$. Here,
\begin{align*}
|N_{6,2,1}|
& \overset{(i)}{\lesssim}_P \sqrt{s_{\bphi}\log p /n} \| (\widehat\bTheta_k\E[\bX\bX^\top] - \bI_{\widehat T_k})\widehat\bTheta_k^\top\E[\bX\bX^\top](\widehat\btheta_k - \btheta_0) \|_2 / n \\
& \overset{(ii)}{\lesssim}_P \sqrt{s_{\bphi}\log p /n} \| (\widehat\bTheta_k\E[\bX\bX^\top] - \bI_{\widehat T_k})\widehat\bTheta_k^\top\|_2 \| \sqrt{s_{\bphi}\log p /n} / n 
 \overset{(iii)}{\lesssim}_P s_{\bgamma}s_{\bphi}(\log p)^{3/2}/n^{5/2},
\end{align*}
where (i) follows from Lemma \ref{lem: l2 bounds} and Assumptions \ref{as: sparsity}.\ref{as: sparse eigenvalue} and \ref{as: sparsity}.\ref{as: sparse choice}, (ii) from the same argument as that in (i), and (iii) from Lemma \ref{lem: auxiliary inequalities} and Assumptions \ref{as: sparsity}.\ref{as: sparse choice} and \ref{as: nodewise lasso}.
Also,
\begin{align*}
|N_{6,2,2}|
&\overset{(i)}{\lesssim}_P \sqrt{s_{\bphi}\log p / n}\| \bI_{\widehat T_k}\widehat\bTheta_k^\top\E[\bX\bX^\top](\widehat\btheta_k - \btheta_0) \|_2 / n \\
&\overset{(ii)}{\lesssim}_P \sqrt{s_{\bphi}\log p / n} \| \bI_{\widehat T_k}\widehat\bTheta_k^\top\|_2 \sqrt{s_{\bphi}\log p / n} / n 
\overset{(iii)}{\lesssim}_P \sqrt{s_{\bgamma}}s_{\bphi}\log p / n^2,
\end{align*}
where (i) follows from Lemma \ref{lem: l2 bounds} and Assumptions \ref{as: sparsity}.\ref{as: sparse eigenvalue} and \ref{as: sparsity}.\ref{as: sparse choice}, (ii) from the same argument as that in (i), and (iii) from Assumptions \ref{as: sparsity}.\ref{as: sparse choice} and \ref{as: nodewise lasso}. Thus,
\begin{align*}
|N_6| & \overset{(i)}{\lesssim}_P s_{\bgamma}\sqrt{s_{\bphi}}\log p / n^{3/2} + \sqrt{s_{\bgamma}s_{\bphi}}(\log p)^{3/4}/n^{5/4} + s_{\bgamma}^{1/4}\sqrt{s_{\bphi}}\sqrt{\log p} / n  \overset{(ii)}{\lesssim}  s_{\bgamma}^{1/4}\sqrt{s_{\bphi}}\sqrt{\log p} / n,
\end{align*}
where (i) follows from the bounds above and Markov's inequality and (ii) from Assumption \ref{as: growth conditions}. Further,
\begin{align}
& \E[N_7^2\mid \mathbb D_k, \{(\bX_i,D_i\}_{i\in I(k)}] \nonumber\\
&\qquad \overset{(i)}{=} (\widehat\bgamma_k - \bgamma_0)^\top \E_{I(k)}[\bX_i\bX_i^\top] \widehat\bTheta_k^\top \E_{I(k)}[\bX_i\bX_i^\top]\widehat\bTheta_k \E_{I(k)}[\bX_i\bX_i^\top](\widehat\bgamma_k - \bgamma_0)  / |I(k)|\nonumber\\
&\qquad \overset{(ii)}{\lesssim} (\widehat\bgamma_k - \bgamma_0)^\top \E_{I(k)}[\bX_i\bX_i^\top] \widehat\bTheta_k^\top \E_{I(k)}[\bX_i\bX_i^\top]\widehat\bTheta_k \E_{I(k)}[\bX_i\bX_i^\top](\widehat\bgamma_k - \bgamma_0) / n \nonumber\\
&\qquad \overset{(iii)}{\lesssim}_P \| \widehat\bTheta_k \E_{I(k)}[\bX_i\bX_i^\top](\widehat\bgamma_k - \bgamma_0)\|_2^2 / n \label{eq: used in denominator}\\
&\qquad \overset{(iv)}{\lesssim} \| (\widehat\bTheta_k \E_{I(k)}[\bX_i\bX_i^\top] - \bI_{\widehat T_k})(\widehat\bgamma_k - \bgamma_0)\|_2^2 / n + \| \bI_{\widehat T_k}(\widehat\bgamma_k - \bgamma_0)\|_2^2 / n\nonumber\\
&\qquad \overset{(v)}{\lesssim}_P s_{\bgamma}\log p \|\widehat\bgamma_k - \bgamma_0\|_1^2 / n^2 + \|\widehat\bgamma_k - \bgamma_0\|_2^2 /  n
\overset{(vi)}{\lesssim}_P s_{\bgamma}^3 (\log p)^2 / n^3 + s_{\bgamma}\log p / n^2,\nonumber
\end{align}
where (i) follows from $\E[\varepsilon\mid \bX] = 0$ and Assumption \ref{as: bounded}, (ii) from Assumption \ref{as: subsample}, (iii) from the argument in the proof of Lemma \ref{lem: rudelson 3}, (iv) from the triangle inequality, (v) from Lemma \ref{lem: auxiliary inequalities}, H\"{o}lder's inequality, and Assumption \ref{as: sparsity}.\ref{as: sparse choice}, and (vi) from Assumption \ref{as: sparsity}.\ref{as: lasso estimation}. Thus, 
$
|N_7| \lesssim_P s_{\bgamma}^{3/2}\log p / n^{3/2} + \sqrt{s_{\bgamma}\log p} / n
$ 
by Markov's inequality. Finally, denoting
$$
N_{8,1} := (\widehat\btheta_k - \btheta_0)^\top \E_{I(k)}[\bX_i\bX_i^\top](\bI_{\widehat T_k} -\widehat\bTheta_k \E_{I(k)}[\bX_i\bX_i^\top])(\widehat\bgamma_k - \bgamma_0),
$$
$$
N_{8,2} := - (\widehat\btheta_k - \btheta_0)^\top \E_{I(k)}[\bX_i\bX_i^\top]\bI_{\widehat T_k}(\widehat\bgamma_k - \bgamma_0),
$$
we have $N_8 = N_{8,1} + N_{8,2}$. Here,
$|N_{8,1}| \lesssim_P s_{\bgamma}^{3/2}\sqrt{s_{\bphi}}(\log p)^{3/2} / n^{3/2}$
by Lemma \ref{lem: rudelson 2}. Also, 
\begin{align*}
|N_{8,2} + N_4| 
&\overset{(i)}{=} |(\widehat\btheta_k - \btheta_0)^\top\E_{I(k)}[\bX_i\bX_i^\top](\bgamma_0 - \bgamma_{0,\widehat T_k})| \\
&\overset{(ii)}{\leq} \|(\E_{I(k)}[\bX_i\bX_i^\top])^{1/2}(\widehat\btheta_0 - \btheta_0)\|_2  \|(\E_{I(k)}[\bX_i\bX_i^\top])^{1/2}(\bgamma_0 - \bgamma_{0,\widehat T_k})\|_2 \\
&\overset{(iii)}{\lesssim}_P \sqrt{s_{\bphi}\log p/n} \sqrt{(\bgamma_0 - \bgamma_{0,\widehat T_k})^\top\E[\bX\bX^\top](\bgamma_0 - \bgamma_{0,\widehat T_k})}, \\
&\overset{(iv)}{\lesssim}_P \sqrt{s_{\bphi}\log p /n}(\|\bar\bgamma_0 - \bar\bgamma_{0,\widehat T_k}\|_2
+ \sqrt{\E[|\bX^\top(\bgamma_0 - \bar\bgamma_0)|^2]}
+ \| \bgamma_0 - \bar\bgamma_0 \|_2),
\end{align*}
where (i) follows from $\bI_{\widehat T_k}\widehat\bgamma_k = \widehat \bgamma_k$, (ii) from the Cauchy-Schwarz inequality, (iii) from Lemma \ref{lem: l2 bounds} and Markov's and Jensen's inequalities, and (iv) from Lemma \ref{lem: approximation error}. Combining presented bounds and recalling the definition of $N_1$ completes this step.

\medskip
{\bf Step 2.}
Observe that $D_1 = \E[\nu^2] + o_P(1)$ by Chebyshev's inequality and Assumption \ref{as: moments}. Also, $D_2 = D_3 = O_P(\sqrt{s_{\bgamma}\log p} / n) = o_P(1)$, where the second equality follows from the same argument as that used to bound $N_2$ and $N_3$ and the third from Assumption \ref{as: growth conditions}. In addition, $D_4 = O_P(s_{\bgamma}\log p / n) = o_P(1)$, where the first equality follows from Lemma \ref{lem: l2 bounds} and the second from Assumption \ref{as: growth conditions}. Moreover,
\begin{align*}
|D_5|
&\overset{(i)}{\leq} \| \E_{I(k)}[\bX_{i,\widehat T_k}\nu_i \|_2\|\widehat \bTheta_k \E_{I(k)}[\bX_i\nu_i] \|_2 \overset{(ii)}{\lesssim}_P \sqrt{\widehat T_k/n}\sqrt{s_{\bgamma}/n}
\overset{(iii)}{\lesssim}_P s_{\bgamma}/n \overset{(iv)}{=}o(1),
\end{align*}
where (i) follows from the Cauchy-Schwarz inequality, (ii) from Markov's and Jensen's inequalities, Assumption \ref{as: bounded}, and Lemma \ref{lem: another l2 bound}, (iii) from Assumption \ref{as: sparsity}.\ref{as: sparse choice}, and (iv) from Assumption \ref{as: growth conditions}. Further,
$$
D_6 \overset{(i)}{\lesssim}_P s^{3/2}_{\bgamma}\log p / n^{3/2} +  s_{\bgamma}^{3/4}\sqrt{\log p}/n
\overset{(ii)}{=} o(1),
$$
where (i) follows from the same argument as that used to bound $N_6$ with $s_{\bphi}$ replaced by $s_{\bgamma}$ and (ii) from Assumption \ref{as: growth conditions}. Next,
\begin{align*}
|D_7|
&\overset{(i)}{\leq} \| \widehat\bTheta_k\E_{I(k)}[\bX_i\bX_i^\top](\widehat\bgamma_k - \bgamma_0) \|_2 \| \E_{I(k)}[\bX_{i,\widehat T_k}\nu_i]\|_2 \\
&\overset{(ii)}{\lesssim}_P \left(s^{3/2}_{\bgamma}\log p/n + \sqrt{s_{\bgamma}\log p / n}\right)\sqrt{s_{\bgamma}/n} \overset{(iii)}{=} o(1),
\end{align*}
where (i) follows from the Cauchy-Schwarz inequality, (ii) from Markov's and Jensen's inequalities, Assumption \ref{as: bounded}, and the same argument as that used to bound $N_7$ in \eqref{eq: used in denominator}, and (iii) from Assumption \ref{as: growth conditions}. Finally, $D_8 = D_{8,1} + D_{8,2}$, where 
$$
D_{8,1} := (\widehat\bgamma_k - \bgamma_0)^\top \E_{I(k)}[\bX_i\bX_i^\top](\bI_{\widehat T_k} -\widehat\bTheta_k \E_{I(k)}[\bX_i\bX_i^\top])(\widehat\bgamma_k - \bgamma_0),
$$
$$
D_{8,2} := - (\widehat\bgamma_k - \bgamma_0)^\top \E_{I(k)}[\bX_i\bX_i^\top]\bI_{\widehat T_k}(\widehat\bgamma_k - \bgamma_0),
$$
Then $D_{8,1} = O_P(s_{\bgamma}^{2}(\log p)^{3/2}/n^{3/2}) = o_p(1)$, where the first equality follows from the same argument as that used to bound $N_{8,1}$ and the second from Assumption \ref{as: growth conditions}. Also,
\begin{align*}
|D_{8,2}|
& \overset{(i)}{\leq} \| (\E_{I(k)}[\bX_i\bX_i^\top])^{1/2}(\widehat\bgamma_k - \bgamma_0) \|_2\| (\E_{I(k)}[\bX_i\bX_i^\top])^{1/2} \bI_{\widehat T_k} (\widehat\bgamma_k - \bgamma_0) \|_2 \\
& \overset{(ii)}{\lesssim}_P \sqrt{s_{\bgamma}\log p / n} \sqrt{s_{\bgamma}\log p / n} 
\overset{(iii)}{=} o(1), 
\end{align*}
where (i) follows from the Cauchy-Schwarz inequality, (ii) from Lemma \ref{lem: l2 bounds} and Assumptions \ref{as: sparsity}.\ref{as: sparse approximations}, \ref{as: sparsity}.\ref{as: lasso estimation}, \ref{as: sparsity}.\ref{as: sparse eigenvalue}, and \ref{as: sparsity}.\ref{as: sparse choice}, and (iii) from Assumption \ref{as: growth conditions}. Combining presented bounds completes this step.
\end{proof}

\begin{proof}[Proof of Lemma \ref{lem: higher-order neyman orthogonal score}]
We have $\tilde F(\beta_0,\tilde\bet_0) = 0$ since $F(\beta_0,\bet_0) = 0$ and $U(\beta_0,\bet_0) = \bzero_q$. Also, the first component of $\tilde U(\beta_0,\tilde\bet_0)$ is equal to $\bzero_q$ since $U(\beta_0,\bet_0) = \bzero_q$. The second component of $\tilde U(\beta_0,\tilde\bet_0)$ is equal to $\bzero_{q^2}$ by the definition of $\bbA_0$. The third component of $\tilde U(\beta_0,\tilde\bet_0)$ is equal to $\bzero_{q^k}$ by the definition of $B_0$ and the properties of the mode-wise tensor-matrix products. This yields the first asserted claim.

To prove the second asserted claim, note first that since $\tilde F(\beta,\tilde\bet)$ does not depend on $\bbA$, any component of the tensor $\nabla_{\tilde\bet}^m \tilde F(\beta_0,\tilde\bet_0)$ that includes derivatives with respect to components of $\bbA$ will be zero. Similarly, any component of this tensor that includes derivatives with respect to components of $\bbB$ will be zero as well since $U(\beta_0,\bet_0) = \bzero_q$. Thus, it suffices to prove that $\nabla_{\bet}^m \tilde F(\beta_0,\tilde\bet_0) = \bzero_q^{\otimes m}$ for all $m\in[k]$. To do so, fix any perturbation
$\bh=(\bh_{\bet}^\top,\bzero_{q^2}^\top,\bzero_{q^k}^\top)^\top
\in \R^{q+q^2+q^k}$, and expand $t\mapsto \tilde F(\beta_0,\tilde\bet_0 + t\bh)$ around $t = 0$.
To prepare for the expansion, we have
\begin{equation}\label{eq: tensor start}
U(\beta_0,\bet_0+t\bh_{\bet})=t \bbA_0^{\top}\bh_{\bet}+o(t)
\end{equation}
by the definition of $\bbA_0$, and so
\begin{align}
\bbB_0[(U(\beta_0,\bet_0+t\bh_{\bet}))^{\otimes k}]
&\overset{(i)}{=}
t^k\bbB_0[(\bbA_0^{\top}\bh_{\bet})^{\otimes k}]+o(t^k) \nonumber\\
& \overset{(ii)}{=} t^k (\bbA_0^{\otimes k}\bbB_0)[\bh_{\bet}^{\otimes k}] + o(t^k)
\overset{(iii)}{=} t^k \nabla_{\bet}^kF(\beta_0,\bet_0)[\bh_{\bet}^{\otimes k}] + o(t^k),\label{eq: tensor second step}
\end{align}
where (i) follows from \eqref{eq: tensor start}, (ii) follows from tensor properties, and (iii) from the definition of $\bbB_0$.
Now, since $F$ satisfies the $(k-1)$th order Neyman orthogonality condition, Taylor's expansion of $t\mapsto F(\beta_0,\bet_0 + t\bh_{\bet})$ around $t = 0$ yields
$$
F(\beta_0,\bet_0 + t\bh_{\bet}) = F(\beta_0,\bet_0)
 + \frac{t^k}{k!}\nabla_{\bet}^k F(\beta_0,\bet_0)[\bh_{\bet}^{\otimes k}] + o(t^k).
$$
Combining it with \eqref{eq: tensor second step} in turn gives
$$
\tilde F(\beta_0,\tilde\bet_0+t\bh_{\bet}) = F(\beta_0,\bet_0) + o(t^k).
$$
Hence, $\nabla_{\bet}^m \tilde F(\beta_0,\tilde\bet_0) = \bzero_q^{\otimes m}$ for all $m\in[k]$, yielding the second asserted claim and completing the proof of the lemma.
\end{proof}

\section{Auxiliary Lemmas}
\begin{lem}\label{lem: l2 bounds}
Under Assumption \ref{as: sparsity}, for all $k\in[K]$, we have
\begin{align*}
&\| (\E[\bX\bX^\top])^{1/2}(\widehat\bgamma_k - \bgamma_0) \|_2^2 \lesssim_P s_{\bgamma}\log p / n, \  \| (\E_{I(k)}[\bX_i\bX_i^\top])^{1/2}(\widehat\bgamma_k - \bgamma_0) \|_2^2 \lesssim_P s_{\bgamma}\log p / n, \\
&\| (\E[\bX\bX^\top])^{1/2}(\widehat\btheta_k - \btheta_0) \|_2^2 \lesssim_P s_{\bphi}\log p / n, \  \| (\E_{I(k)}[\bX_i\bX_i^\top])^{1/2}(\widehat\btheta_k - \btheta_0) \|_2^2 \lesssim_P s_{\bphi}\log p / n, 
\end{align*}
where we denoted $\widehat\btheta_k := \widehat\bphi_k - \beta_0\widehat\bgamma_k$.
\end{lem}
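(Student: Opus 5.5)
The plan is to prove the four bounds by splitting each estimation error into an estimation part and an approximation part. I write the split out for $\bgamma$; the bounds for $\btheta$ are analogous.

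\textbf{Decomposition.} Write $\widehat\bgamma_k - \bgamma_0 = (\widehat\bgamma_k - \bar\bgamma_0) + (\bar\bgamma_0 - \bgamma_0)$. For $\btheta$, recall $\widehat\btheta_k = \widehat\bphi_k - \beta_0\widehat\bgamma_k$ and $\btheta_0 = \bphi_0 - \beta_0\bgamma_0$, and set $\bar\bphi_0 = \bar\btheta_0 + \beta_0\bar\bgamma_0$. Then
\[
\widehat\btheta_k - \btheta_0 = (\widehat\bphi_k - \bar\bphi_0) - \beta_0(\widehat\bgamma_k - \bar\bgamma_0) + (\bar\btheta_0 - \btheta_0).
\]
Because $\beta_0$ is a fixed scalar and $s_{\bgamma}\le s_{\bphi}$, the triangle inequality reduces everything to bounding each piece separately in the population norm and in the empirical norm.

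\textbf{Approximation parts.} In the population norm, Assumption \ref{as: sparsity}.\ref{as: sparse approximations} gives the bounds $s_{\bgamma}/n$ and $s_{\btheta}/n$ directly. In the empirical norm, the quantity
\[
\E_{I(k)}[|\bX_i^\top(\bar\bgamma_0-\bgamma_0)|^2]
\]
is a nonnegative average of i.i.d.\ terms whose mean is $\lesssim s_{\bgamma}/n$. Markov's inequality then gives $\lesssim_P s_{\bgamma}/n$, and the same argument applies to $\btheta$.

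\textbf{Estimation parts, population norm.} The vector $\widehat\bgamma_k-\bar\bgamma_0$ is $\lesssim_P s_{\bgamma}$-sparse by Assumption \ref{as: sparsity}.\ref{as: sparse vectors} and \ref{as: sparse estimators}. Its sparsity is therefore $\le s_{\bphi}\ell_n$ with probability tending to one. The sparse eigenvalue bound in Assumption \ref{as: sparsity}.\ref{as: sparse eigenvalue} combined with the $\ell_2$ rate in Assumption \ref{as: sparsity}.\ref{as: lasso estimation} yields
\[
(\widehat\bgamma_k-\bar\bgamma_0)^\top\E[\bX\bX^\top](\widehat\bgamma_k-\bar\bgamma_0)\lesssim_P s_{\bgamma}\log p/n.
\]
The same holds for $\widehat\bphi_k-\bar\bphi_0$, whose sparsity is $\lesssim_P s_{\bphi}$.

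\textbf{Estimation parts, empirical norm.} This is the main obstacle, because it requires controlling the empirical Gram matrix on $I(k)$. The step I would use is to bound
\[
\big|\ba^\top(\E_{I(k)}[\bX_i\bX_i^\top]-\E[\bX\bX^\top])\ba\big| \le \|\E_{I(k)}[\bX_i\bX_i^\top]-\E[\bX\bX^\top]\|_{\max}\|\ba\|_1^2 .
\]
The max-norm deviation is $\lesssim_P \sqrt{\log p/n}$ by Hoeffding's inequality and a union bound over $p^2$ entries, using bounded regressors (Assumption \ref{as: bounded}). With the $\ell_1$ rates from Assumption \ref{as: sparsity}.\ref{as: lasso estimation}, this yields an extra term of order $s_{\bgamma}^2(\log p)^{3/2}/n^{3/2}$, which is $\lesssim s_{\bgamma}\log p/n$ provided $s_{\bgamma}\sqrt{\log p/n}\lesssim 1$.

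That proviso is not part of Assumption \ref{as: sparsity}. If it is unavailable, I would instead exploit cross-fitting: $\widehat\bgamma_k$ is independent of the observations in $I(k)$. Conditioning on $\mathbb D_k$ then gives
\[
\E\big[\E_{I(k)}[|\bX_i^\top(\widehat\bgamma_k-\bgamma_0)|^2]\,\big|\,\mathbb D_k\big]=\|(\E[\bX\bX^\top])^{1/2}(\widehat\bgamma_k-\gamma_0)\|_2^2 .
\]
Markov's inequality applied conditionally transfers the population bound to the empirical one. This conditional argument is cleaner and needs no growth condition, so it is the route I would actually take. It also covers the approximation part, so the empirical bounds follow from the population ones for both $\bgamma$ and $\btheta$.
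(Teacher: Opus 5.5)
Your proposal is correct and follows essentially the paper's argument: you split off the sparse approximations $\bar\bgamma_0$ and $\bar\bphi_0$, bound the estimation part through the sparse eigenvalue condition plus the $\ell_2$ rate and the approximation part through Assumption~\ref{as: sparsity}.\ref{as: sparse approximations}, and then pass to the $I(k)$-empirical norm by Markov's inequality conditional on $\mathbb D_k$, using cross-fitting independence. The only differences are cosmetic: you decompose $\widehat\btheta_k-\btheta_0$ directly into three pieces instead of first bounding $\widehat\bphi_k-\bphi_0$ and then applying the triangle inequality, and the max-norm detour is unnecessary, as you yourself concluded.
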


\begin{proof}[Proof of Lemma \ref{lem: l2 bounds}]
Fix $k\in[K]$ and observe that
$$
\| (\E[\bX\bX^\top])^{1/2}(\widehat\bgamma_k - \bgamma_0) \|_2 \leq \| (\E[\bX\bX^\top])^{1/2}(\widehat\bgamma_k - \bar\bgamma_0) \|_2 + \| (\E[\bX\bX^\top])^{1/2}(\bar\bgamma_0 - \bgamma_0) \|_2
$$
by the triangle inequality. Here,
\begin{align*}
\| (\E[\bX\bX^\top])^{1/2}(\widehat\bgamma_k - \bar\bgamma_0) \|_2^2 
& \overset{(i)}{=}(\widehat\bgamma_k - \bar\bgamma_0)^\top\E[\bX\bX^\top](\widehat\bgamma_k - \bar\bgamma_0) \\
& \overset{(ii)}{\lesssim}_P\|\widehat\bgamma_k - \bar\bgamma_0\|_2^2 \overset{(iii)}{\lesssim}_Ps_{\bgamma}\log p /n,
\end{align*}
where (i) follows from the definition of the $\ell_2$ norm, (ii) from Assumptions \ref{as: sparsity}.\ref{as: sparse vectors}, \ref{as: sparsity}.\ref{as: sparse estimators}, and \ref{as: sparsity}.\ref{as: sparse eigenvalue}, and (iii) from Assumption \ref{as: sparsity}.\ref{as: lasso estimation}. Also,
\begin{align*}
\| (\E[\bX\bX^\top])^{1/2}(\bar\bgamma_0 - \bgamma_0) \|_2^2 
& \overset{(i)}{=} (\bar\bgamma_0 - \bgamma_0)^\top\E[\bX\bX^\top](\bar\bgamma_0 - \bgamma_0) \\
& \overset{(ii)}{=}\E[| \bX^\top(\bar\bgamma_0 - \bgamma_0) |^2] \overset{(iii)}{\lesssim} s_{\bgamma}/n,
\end{align*}
where (i) follows from the definition of the $\ell_2$ norm, (ii) from noting that the transpose of a scalar is equal to the same scalar, and (iii) follows from Assumption \ref{as: sparsity}.\ref{as: sparse approximations}. Combining these bounds gives the first asserted claim. In turn, the second asserted claim follows from applying Markov's inequality conditional on $ \{(\bX_i, D_i, Y_i)\}_{i\in I(-k)}$ and using the first asserted claim.

Next, observe that
$$
\| (\E[\bX\bX^\top])^{1/2}(\widehat\bphi_k - \bphi_0) \|_2^2 \lesssim_P s_{\bphi}\log p / n, \  \| (\E_{I(k)}[\bX_i\bX_i^\top])^{1/2}(\widehat\bphi_k - \bphi_0) \|_2^2 \lesssim_P s_{\bphi}\log p / n
$$
by the same arguments as those used to prove the first and second asserted claims. The third and fourth asserted claims now follow via the triangle inequality since $\widehat\btheta_k - \btheta_0 = (\widehat\bphi_k - \bphi_0) - \beta_0(\widehat\bgamma_k - \bgamma_0)$ and $s_{\bphi} = s_{\bgamma} + s_{\btheta} \geq s_{\bgamma}$.
\end{proof}

\begin{lem}\label{lem: auxiliary inequalities}
Under Assumptions \ref{as: bounded}, \ref{as: subsample}, and \ref{as: nodewise lasso}, for all $k\in[K]$, we have
$$
\|\widehat\bTheta_k \E[\bX\bX^\top] - \bI_{\widehat T_k}\|_{\max}^2 \lesssim_P \log p/n
\quad\text{and}\quad \| \widehat\bTheta_k\E_{I(k)}[\bX_i\bX_i^\top] - \bI_{\widehat T_k} \|_\max \lesssim_P \log p/n.
$$
\end{lem}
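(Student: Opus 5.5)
The plan is to reduce both displays to Assumption \ref{as: nodewise lasso} by a row-by-row argument, using that $\widehat\bTheta_k$ is $\widetilde\bTheta_k$ with the rows outside $\widehat T_k$ set to zero. Write $\bI_{\widehat T_k}$ for the diagonal matrix with ones exactly in the positions of $\widehat T_k$. Then $\widehat\bTheta_k = \bI_{\widehat T_k}\widetilde\bTheta_k$. Hence for any $p\times p$ matrix $\mathbf M$,
\[
\widehat\bTheta_k\mathbf M-\bI_{\widehat T_k}=\bI_{\widehat T_k}\bigl(\widetilde\bTheta_k\mathbf M-\bI_p\bigr),
\]
so $\|\widehat\bTheta_k\mathbf M-\bI_{\widehat T_k}\|_{\max}\le\|\widetilde\bTheta_k\mathbf M-\bI_p\|_{\max}$. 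This removes the sparsification from the problem, and it suffices to bound $\|\widetilde\bTheta_k\mathbf M-\bI_p\|_{\max}$ for $\mathbf M=\E[\bX\bX^\top]$ and $\mathbf M=\E_{I(k)}[\bX_i\bX_i^\top]$.

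Both cases use the same decomposition:
\[
\widetilde\bTheta_k\mathbf M-\bI_p=\bigl(\widetilde\bTheta_k\E_{I(-k)}[\bX_i\bX_i^\top]-\bI_p\bigr)+\widetilde\bTheta_k\bigl(\mathbf M-\E_{I(-k)}[\bX_i\bX_i^\top]\bigr).
\]
The first term is controlled directly by Assumption \ref{as: nodewise lasso}.\ref{as: max approximation nodewise}. For the second term I will use H\"older's inequality in the form
\[
\|\mathbf C\mathbf E\|_{\max}\le\|\mathbf C\|_\infty\|\mathbf E\|_{\max},
\]
together with Assumption \ref{as: nodewise lasso}.\ref{as: bounded row nodewise}. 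This reduces everything to bounding $\|\mathbf M-\E_{I(-k)}[\bX_i\bX_i^\top]\|_{\max}$.

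For $\mathbf M=\E[\bX\bX^\top]$, the entries $X_{i,j}X_{i,l}$ are bounded by $C^2$ (Assumption \ref{as: bounded}). Hoeffding's inequality and a union bound over the $p^2$ pairs, with $|I(-k)|\gtrsim n$ by Assumption \ref{as: subsample}, give a max-norm deviation of order $\sqrt{\log p/n}$ in probability. Squaring yields the first claim. For $\mathbf M=\E_{I(k)}[\bX_i\bX_i^\top]$, I will compare both empirical Gram matrices to $\E[\bX\bX^\top]$. The two subsamples are independent, each has size of order $n$, and the same Hoeffding and union-bound argument applies to each.

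The main obstacle is the second claim exactly as worded: an unsquared $\log p/n$ rate in max norm. The H\"older and Hoeffding route above only delivers $\sqrt{\log p/n}$, because a single entry of $\E_{I(k)}[\bX_i\bX_i^\top]-\E[\bX\bX^\top]$ generically fluctuates at order $n^{-1/2}$. To reach $\log p/n$, I would first try to exploit cancellation inside each row product $\widetilde\bTheta_{k,j}\bigl(\E_{I(k)}[\bX_i\bX_i^\top]-\E[\bX\bX^\top]\bigr)e_l$. The relevant observation is that conditionally on $\mathbb D_k$, the quantity $\widetilde\bTheta_{k,j}\bX_i X_{i,l}$ is a bounded i.i.d. average. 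Applying Bernstein's inequality with its conditional variance $\E[(\widetilde\bTheta_{k,j}\bX)^2X_l^2]$ might then give a sharper rate. I expect this step to fail to give better than $\sqrt{\log p/n}$ unless that variance is itself of order $\log p/n$. If it does fail, what the argument proves is the second claim at the squared rate, matching the form of the first claim.
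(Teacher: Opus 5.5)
Your proposal is essentially the paper's proof. The paper also passes from $\widehat\bTheta_k$ to $\widetilde\bTheta_k$ by row-zeroing, splits off $\widetilde\bTheta_k\E_{I(-k)}[\bX_i\bX_i^\top]-\bI_p$ (controlled by Assumption~\ref{as: nodewise lasso}), and bounds the remainder by $\|\widetilde\bTheta_k\|_\infty$ times a Hoeffding/union-bound max-norm deviation, handling $\E_{I(k)}[\bX_i\bX_i^\top]$ the same way.

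As you suspected, this route gives the second display only at the squared rate $\|\widehat\bTheta_k\E_{I(k)}[\bX_i\bX_i^\top]-\bI_{\widehat T_k}\|_{\max}^2\lesssim_P\log p/n$. That is also all the paper's own proof delivers, and it is exactly what is used later, in the bound on $N_7$ and in Lemma~\ref{lem: rudelson 2}. The missing square in the statement is a typo: the unsquared claim fails in general because individual entries fluctuate at order $n^{-1/2}$. So the Bernstein detour is unnecessary.
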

\begin{proof}[Proof of Lemma \ref{lem: auxiliary inequalities}]
Fix $k\in[K]$ and observe that
\begin{equation}\label{eq: max inequality}
\| \E_{I(-k)}[\bX_i\bX_i^\top] - \E[\bX\bX^\top] \|_{\max}^2 \overset{(i)}{\lesssim}_P \log p/|I(-k)| \overset{(ii)}{\lesssim} \log p / n,
\end{equation}
where (i) follows from Assumption \ref{as: bounded}, Hoeffding's inequality, and the union bound and (ii) from Assumption \ref{as: subsample}. Thus,
\begin{align}
& \|\widehat\bTheta_k \E[\bX\bX^\top] - \bI_{\widehat T_k}\|_{\max}
 \overset{(i)}{\leq} \|\widetilde\bTheta_k \E[\bX\bX^\top] - \bI_{p}\|_{\max} \nonumber\\
& \qquad \overset{(ii)}{\leq} \|\widetilde\bTheta_k (\E[\bX\bX^\top] - \E_{I(-k)}[\bX_i\bX_i^\top])\|_{\max}
+ \| \widetilde\bTheta_k \E_{I(-k)}[\bX_i\bX_i^\top] - \bI_p \|_{\max} \nonumber\\
& \qquad \overset{(iii)}{\lesssim}_P \| \widetilde\bTheta_k \|_{\infty} \| \E[\bX\bX^\top] - \E_{I(-k)}[\bX_i\bX_i^\top] \|_{\max}
+ \sqrt{\log p/ n}  \overset{(iv)}{\lesssim}_P \sqrt{\log p/ n},\label{eq: inverse estimation bound 1}
\end{align}
where (i) follows from the definition of $\widehat\bTheta_k$, (ii) from the triangle inequality, (iii) from H\"{o}lder's inequality and Assumption \ref{as: nodewise lasso}, and (iv) from Assumption \ref{as: nodewise lasso} and the bound in \eqref{eq: max inequality}. This gives the first asserted claim.

To prove the second asserted claim, note that 
$$
\| \E_{I(k)}[\bX_i\bX_i^\top] - \E[\bX\bX^\top] \|_{\max}^2 \lesssim_P \log p / n
$$
by the same argument as that leading to \eqref{eq: max inequality} and proceed as in the proof of the first asserted claim.
\end{proof}

\begin{lem}\label{lem: another l2 bound}
Under Assumptions \ref{as: bounded}--\ref{as: nodewise lasso}, for all $k\in[K]$, we have 
$
\|\widehat\bTheta_k \E_{I(k)}[\bX_i\nu_i]\|_2^2 \lesssim_P s_{\bgamma}/n.
$
\end{lem}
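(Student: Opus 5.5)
The plan is to condition on the training data $\mathbb D_k$, which fixes $\widehat\bTheta_k$ and $\widehat T_k$, and then compute a conditional second moment. Since the observations in $I(k)$ are independent of $\mathbb D_k$ and $\E[\nu\bX]=\bzero_p$, the vector $\E_{I(k)}[\bX_i\nu_i]$ is conditionally a mean-zero average of i.i.d.\ terms. Hence
\[
\E\bigl[\|\widehat\bTheta_k\E_{I(k)}[\bX_i\nu_i]\|_2^2\,\big|\,\mathbb D_k\bigr]
=\frac{1}{|I(k)|}\tr\bigl(\widehat\bTheta_k\E[\nu^2\bX\bX^\top]\widehat\bTheta_k^\top\bigr)
=\frac{1}{|I(k)|}\sum_{j\in\widehat T_k}\widehat\bTheta_{k,j}\E[\nu^2\bX\bX^\top]\widehat\bTheta_{k,j}^\top ,
\]
where $\widehat\bTheta_{k,j}$ denotes the $j$th row. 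Only rows with $j\in\widehat T_k$ survive because of the row sparsification. By Assumption \ref{as: subsample}, $|I(k)|^{-1}\lesssim n^{-1}$, and by Assumption \ref{as: sparsity}.\ref{as: sparse choice}, $|\widehat T_k|\lesssim_P s_{\bgamma}$. It therefore suffices to show that each surviving quadratic form is $O_P(1)$ uniformly in $j$.

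For a single row, use $\E[\nu^2\mid\bX]\le C$ (Assumption \ref{as: bounded}) to write
\[
\widehat\bTheta_{k,j}\E[\nu^2\bX\bX^\top]\widehat\bTheta_{k,j}^\top=\E\bigl[\E[\nu^2\mid\bX](\bX^\top\widehat\bTheta_{k,j}^\top)^2\bigr]\le C\,\E\bigl[(\bX^\top\widehat\bTheta_{k,j}^\top)^2\bigr].
\]
The bound $\|\bX\|_\infty\le C$ then gives $|\bX^\top\widehat\bTheta_{k,j}^\top|\le C\|\widehat\bTheta_{k,j}\|_1\le C\|\widetilde\bTheta_k\|_\infty$. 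By Assumption \ref{as: nodewise lasso}, this is $O_P(1)$ uniformly in $j$, because $\|\cdot\|_\infty$ is the maximal row $\ell_1$ norm. This crude Hölder bound avoids any eigenvalue condition on $\bTheta_0$, and it is the step that uses the row-$\ell_1$ control of the node-wise estimator.

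Combining the pieces yields
\[
\E\bigl[\|\widehat\bTheta_k\E_{I(k)}[\bX_i\nu_i]\|_2^2\mid\mathbb D_k\bigr]\lesssim_P \frac{|\widehat T_k|\,\|\widetilde\bTheta_k\|_\infty^2}{n}\lesssim_P \frac{s_{\bgamma}}{n}.
\]
The conditional Markov inequality then delivers the claim. If $\E[\cdot\mid\mathbb D_k]\le Y_n s_{\bgamma}/n$ with $Y_n=O_P(1)$, then $P(\cdot>MY_n s_{\bgamma}/n\mid\mathbb D_k)\le 1/M$, which is the standard argument used elsewhere in the proof of Theorem \ref{thm: asymptotic normality}. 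The main obstacle is only the bookkeeping here: the bounding quantities $|\widehat T_k|$ and $\|\widetilde\bTheta_k\|_\infty$ are themselves random but $\mathbb D_k$-measurable. One must therefore apply Markov conditionally and then unconditionally combine with the $O_P(1)$ statements, rather than take unconditional expectations, which might not be finite.
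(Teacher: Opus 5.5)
Your proof is correct and matches the paper's argument step for step. You condition on $\mathbb D_k$, reduce the conditional second moment to $\tr(\widehat\bTheta_k\E[\nu^2\bX\bX^\top]\widehat\bTheta_k^\top)/|I(k)|$, bound it by $|\widehat T_k|\,\|\widetilde\bTheta_k\|_\infty^2/n$ via Assumptions \ref{as: bounded}, \ref{as: subsample}, \ref{as: nodewise lasso}, and \ref{as: sparsity}.\ref{as: sparse choice}, and then finish with conditional Markov. Your row-wise H\"older bound $\E[(\bX^\top\widehat\bTheta_{k,j}^\top)^2]\le C^2\|\widetilde\bTheta_k\|_\infty^2$ simply spells out the step the paper compresses into its step (v).
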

\begin{proof}[Proof of Lemma \ref{lem: another l2 bound}]
Fix $k\in[K]$ and denote $\mathbb D_k := \{(\bX_i, D_i, Y_i)\}_{i\in I(-k)}$. Then
\begin{align*}
\E[ \|\widehat\bTheta_k \E_{I(k)}[\bX_i\nu_i]\|_2^2 \mid \mathbb D_k]
& \overset{(i)}{=} \E[ \E_{I(k)}[\nu_i\bX_i^\top]\widehat\bTheta_k^\top\widehat\bTheta_k\E_{I(k)}[\bX_i\nu_i] \mid \mathbb D_k] \\
& \overset{(ii)}{=} \tr(\E[\widehat\bTheta_k\E_{I(k)}[\bX_i\nu_i] \E_{I(k)}[\nu_i\bX_i^\top]\widehat\bTheta_k^\top \mid \mathbb D_k]) \\
& \overset{(iii)}{=} \tr(\widehat\bTheta_k\E[\nu^2 \bX\bX^\top]\widehat\bTheta_k^\top)/|I(k)| \\
& \overset{(iv)}{\lesssim} \tr(\widehat\bTheta_k\E[\bX\bX^\top]\widehat\bTheta_k^\top)/n 
\overset{(v)}{\lesssim}_P |\widehat T_k|/n
\overset{(vi)}{\lesssim}_P s_{\bgamma}/n,
\end{align*}
where (i) follows from the definition of the $\ell_2$ norm, (ii) from properties
of the trace operator, (iii) from $\E[\nu\bX] = \bzero_p$, (iv) from Assumptions
\ref{as: subsample} and \ref{as: bounded}, (v) from Assumptions \ref{as:
bounded} and \ref{as: nodewise lasso}, and (vi) from Assumption \ref{as:
sparsity}.\ref{as: sparse choice}. The asserted claim follows from combining
this bound with Markov's inequality.
\end{proof}

\begin{lem}\label{lem: rudelson}
Under Assumptions \ref{as: bounded}--\ref{as: growth conditions}, for all $k\in[K]$, we have
$$
|(\widehat\btheta_k - \btheta_0)^\top(\E_{I(k)}[\bX_i\bX_i^\top] - \E[\bX\bX^\top])\widehat\bTheta_k\E_{I(k)}[\bX_i\nu_i]| \lesssim_P s_{\bgamma}\sqrt{s_{\bphi}}\log p / n^{3/2},
$$
where we denoted $\widehat\btheta_k := \widehat\bphi_k - \beta_0\widehat\bgamma_k$.
\end{lem}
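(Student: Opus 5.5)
The plan is to use Hölder-type inequalities, exploiting that the vector $\bm v_k:=\widehat\bTheta_k\E_{I(k)}[\bX_i\nu_i]$ is supported on $\widehat T_k$, since all rows of $\widehat\bTheta_k$ outside $\widehat T_k$ are zero. By Lemma \ref{lem: another l2 bound}, $\|\bm v_k\|_2\lesssim_P\sqrt{s_{\bgamma}/n}$. By Assumption \ref{as: sparsity}.\ref{as: sparse choice}, $|\widehat T_k|\lesssim_P s_{\bgamma}$, so also $\|\bm v_k\|_1\le\sqrt{|\widehat T_k|}\,\|\bm v_k\|_2\lesssim_P s_{\bgamma}/\sqrt n$. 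Write $\bbA_k:=\E_{I(k)}[\bX_i\bX_i^\top]-\E[\bX\bX^\top]$. By Assumption \ref{as: bounded}, Hoeffding's inequality and the union bound (exactly as in \eqref{eq: max inequality}), $\|\bbA_k\|_{\max}\lesssim_P\sqrt{\log p/n}$.

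Next I would split $\widehat\btheta_k-\btheta_0=(\widehat\btheta_k-\bar\btheta_0)+(\bar\btheta_0-\btheta_0)$ and treat the two pieces separately.

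\emph{Approximation-error piece.} Here I use Hölder, $|\ba^\top\bbA_k\bm v_k|\le\|\ba\|_1\|\bbA_k\|_{\max}\|\bm v_k\|_1$. With Assumption \ref{as: sparsity}.\ref{as: sparse approximations} this gives the bound
\[
\frac{s_{\btheta}}{\sqrt n}\cdot\sqrt{\frac{\log p}{n}}\cdot\frac{s_{\bgamma}}{\sqrt n}.
\]
This is already of the required order $s_{\bgamma}\sqrt{s_{\bphi}}\log p/n^{3/2}$ up to a factor $\sqrt{s_{\btheta}/\log p}$. I would sharpen it by using only the $\ell_2$ bound $\|\bar\btheta_0-\btheta_0\|_2\lesssim\sqrt{s_{\btheta}/n}$ together with a restricted bound on $\bbA_k$ from the $\widehat T_k$-side; see the main obstacle below.

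\emph{Sparse piece.} The vector $\widehat\btheta_k-\bar\btheta_0$ has support $S_k$ with $|S_k|\lesssim_P s_{\bphi}$, by Assumptions \ref{as: sparsity}.\ref{as: sparse vectors} and \ref{as: sparsity}.\ref{as: sparse estimators}. Hence the term is bounded by
\[
\|\widehat\btheta_k-\bar\btheta_0\|_2\,\big\|(\bbA_k)_{S_k,\widehat T_k}\big\|_{op}\,\|\bm v_k\|_2 .
\]
The supports $S_k$ and $\widehat T_k$ are determined by $\mathbb D_k$, hence independent of the $I(k)$ sample. So, conditionally on $\mathbb D_k$, I would bound the submatrix norm by a Rudelson-type (matrix Bernstein) inequality for bounded vectors. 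Since $\|\bX\|_\infty\le C$ and the sparse eigenvalues are bounded (Assumption \ref{as: sparsity}.\ref{as: sparse eigenvalue}), this aims at $\|(\bbA_k)_{S_k,\widehat T_k}\|_{op}\lesssim_P\sqrt{s_{\bgamma}\log p/n}$. The needed ingredient is that each row of the $\widehat T_k$-block of $\bX\bX^\top$ has $\ell_2$-norm at most of order $\sqrt{s_{\bgamma}}$, while the sparse eigenvalue condition controls the variance proxy; the resulting logarithmic factor should be absorbable via Assumption \ref{as: growth conditions}. Combining this with $\|\widehat\btheta_k-\bar\btheta_0\|_2\lesssim_P\sqrt{s_{\bphi}\log p/n}$ (Assumption \ref{as: sparsity}.\ref{as: lasso estimation} and Lemma \ref{lem: l2 bounds}) and with $\|\bm v_k\|_2\lesssim_P\sqrt{s_{\bgamma}/n}$ gives
\[
\sqrt{\frac{s_{\bphi}\log p}{n}}\cdot\sqrt{\frac{s_{\bgamma}\log p}{n}}\cdot\sqrt{\frac{s_{\bgamma}}{n}}
=\frac{s_{\bgamma}\sqrt{s_{\bphi}}\log p}{n^{3/2}},
\]
which is the claimed rate.

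\emph{Main obstacle.} The hard step is the restricted operator-norm bound $\|(\bbA_k)_{S,T}\|_{op}\lesssim_P\sqrt{s_{\bgamma}\log p/n}$ with only $s_{\bgamma}$, not $s_{\bphi}$, inside the root. The naive entrywise bound $\sqrt{|S||T|}\,\|\bbA_k\|_{\max}$ loses a factor $\sqrt{s_{\bphi}}$. I would first try matrix Bernstein applied to the $|S|\times|T|$ random matrices $\bX_{i,S}\bX_{i,T}^\top$. Their operator norm is at most $\|\bX_{i,S}\|_2\|\bX_{i,T}\|_2$, and the variance proxies $\E[\|\bX_T\|_2^2\bX_S\bX_S^\top]\lesssim s_{\bgamma}\lambda_{\max}$ and $\E[\|\bX_S\|_2^2\bX_T\bX_T^\top]\lesssim s_{\bphi}\lambda_{\max}$ are controlled by Assumption \ref{as: sparsity}.\ref{as: sparse eigenvalue}. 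The second proxy carries $s_{\bphi}$, so if it dominates I would exploit the asymmetry differently: pass first to the vector $\bbA_k\bm v_k$ restricted to $S_k$, condition on the $I(k)$-design through $\bm v_k$ only via its norm, and use the growth condition $\sqrt{s_{\bgamma}}s_{\bphi}\log p/n=o(1)$ to absorb any leftover $\sqrt{s_{\bphi}/s_{\bgamma}}$ factor into the stated order.
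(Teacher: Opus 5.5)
Your proposal has a genuine gap at the step you flag as the main obstacle, and it is not a technicality: the bound $\|(\bbA_k)_{S_k,\widehat T_k}\|_{op}\lesssim_P\sqrt{s_{\bgamma}\log p/n}$ is false in general. Because $S_k$ and $\widehat T_k$ are $\mathbb D_k$-measurable, you may treat them as fixed. Take i.i.d.\ Rademacher controls, so that $\E[\bX\bX^\top]=\bI_p$. For $j\in\widehat T_k$, the single column $(\bbA_k)_{S_k,j}$ has about $|S_k|$ entries $\E_{I(k)}[X_{i,l}X_{i,j}]$, each of variance $1/|I(k)|$. Its Euclidean norm is therefore of order $\sqrt{|S_k|/n}$, which is $\sqrt{s_{\bphi}/n}$ when $|S_k|\asymp s_{\bphi}$.

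So the $s_{\bphi}$ variance proxy you found in matrix Bernstein is sharp. Any chain of the form $\|\widehat\btheta_k-\bar\btheta_0\|_2\,\|(\bbA_k)_{S_k,\widehat T_k}\|_{op}\,\|\boldsymbol{v}_k\|_2$ loses a factor of order $\sqrt{s_{\bphi}/s_{\bgamma}}$, up to logarithms. Assumption \ref{as: growth conditions} cannot absorb this factor: it only makes certain quantities $o(1)$ and does not bound $s_{\bphi}/s_{\bgamma}$. For example, $s_{\bgamma}=1$, $s_{\bphi}\asymp n^{1/3}$ and $\log p\asymp\log n$ satisfy it, while $\sqrt{s_{\bphi}/s_{\bgamma}}\to\infty$.

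The approximation-error piece is also left open. H\"older is off by $\sqrt{s_{\btheta}/\log p}$. Pairing $\|\bar\btheta_0-\btheta_0\|_2$ with an operator norm of the $p\times|\widehat T_k|$ block of $\bbA_k$ is worse, since that block has norm of order $\sqrt{p/n}$. Your fallback of passing to $\bbA_k\boldsymbol{v}_k$ also puts the fixed vector on the wrong side: $\boldsymbol{v}_k$ depends on the $I(k)$ sample, whereas $\widehat\btheta_k-\btheta_0$ does not.

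The missing idea is never to take a supremum over the $\widehat\btheta_k$-side. Let $\ba:=\widehat\btheta_k-\btheta_0$; like $\widehat T_k$, it is fixed given $\mathbb D_k$. Since $\bbA_k$ is symmetric and $\boldsymbol{v}_k$ vanishes off $\widehat T_k$,
\[
|\ba^\top\bbA_k\boldsymbol{v}_k|\le\|\boldsymbol{w}_k\|_2\,\|\boldsymbol{v}_k\|_2,
\qquad
\boldsymbol{w}_k:=\E_{I(k)}[\bX_{i,\widehat T_k}\bX_i^\top\ba]-\E[\bX_{\widehat T_k}\bX^\top\ba\mid\mathbb D_k].
\]
Conditionally on $\mathbb D_k$, $\boldsymbol{w}_k$ is a centered average of i.i.d.\ vectors. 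Assumptions \ref{as: bounded}, \ref{as: subsample}, \ref{as: sparsity}.\ref{as: sparse choice} and Lemma \ref{lem: l2 bounds} then give
\[
\E[\|\boldsymbol{w}_k\|_2^2\mid\mathbb D_k]
\le\frac{\E[\|\bX_{\widehat T_k}\|_2^2(\bX^\top\ba)^2\mid\mathbb D_k]}{|I(k)|}
\le\frac{C^2|\widehat T_k|\,\ba^\top\E[\bX\bX^\top]\ba}{|I(k)|}
\lesssim_P\frac{s_{\bgamma}s_{\bphi}\log p}{n^2}.
\]
Combined with Lemma \ref{lem: another l2 bound}, this gives $s_{\bgamma}\sqrt{s_{\bphi}\log p}/n^{3/2}$, within the stated rate. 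It needs no sparse/approximation split and no random-matrix inequality.

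For comparison, the paper bounds only the $\widehat T_k\times\widehat T_k$ block of $\bbA_k$ via Rudelson's inequality, then applies Cauchy--Schwarz against $\|\widehat\btheta_k-\btheta_0\|_2$. As written, that step does not control the coordinates of $\bbA_k\boldsymbol{v}_k$ outside $\widehat T_k$. You were right that the off-block interaction is the crux; the conditional second-moment argument above, not an operator-norm bound, is what handles it.
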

\begin{proof}[Proof of Lemma \ref{lem: rudelson}]
Fix $k\in[K]$ and denote $\mathbb D_k := \{(\bX_i, D_i, Y_i)\}_{i\in I(-k)}$. Then
\begin{align*}
& \E[\| \E_{I(k)}[\bX_{i,\widehat T_k}\bX_{i,\widehat T_k}^\top] - \E[\bX_{\widehat T_k}\bX_{\widehat T_k}^\top \mid \mathbb D_k] \|_2\mid \mathbb D_k] \\
& \qquad  \overset{(i)}{\lesssim} \frac{|\widehat T_k|\log p}{|I(k)|} +  \sqrt{\frac{|\widehat T_k|\| \E[\bX_{\widehat T_k}\bX_{\widehat T_k}^\top\mid \mathbb D_k] \|_2\log p}{|I(k)|}} 
 \overset{(ii)}{\lesssim_P} \sqrt{s_{\bgamma}\log p / n},
\end{align*}
where (i) follows from Lemma 6.2 in \cite{BCCH15} and Assumption \ref{as: bounded} and (ii) from Assumptions \ref{as: subsample}, \ref{as: sparsity}.\ref{as: sparse eigenvalue}, \ref{as: sparsity}.\ref{as: sparse choice}, and \ref{as: growth conditions}. Thus,
\begin{equation}\label{eq: rudelson main}
\| \E_{I(k)}[\bX_{i,\widehat T_k}\bX_{i,\widehat T_k}^\top] - \E[\bX_{\widehat T_k}\bX_{\widehat T_k}^\top \mid \mathbb D_k] \|_2 \lesssim_P \sqrt{s_{\bgamma}\log p / n}
\end{equation}
by Markov's inequality. Therefore,
$$
\|(\E_{I(k)}[\bX_i\bX_i^\top] - \E[\bX\bX^\top])\widehat\bTheta_k\E_{I(k)}[\bX_i\nu_i]\|_2 \lesssim_P \sqrt{s_{\bgamma}\log p / n} \sqrt{s_{\bgamma}/n}
$$
by Lemma \ref{lem: another l2 bound}. Also, $\|\widehat\btheta_k - \btheta_0\|_2 \lesssim_P \sqrt{s_{\bphi}\log p / n}$ by the triangle inequality and Assumptions \ref{as: sparsity}.\ref{as: sparse approximations} and \ref{as: sparsity}.\ref{as: lasso estimation}. Combining these bounds and using the Cauchy-Schwarz inequality yields the asserted claim.
\end{proof}

\begin{lem}\label{lem: rudelson 2}
Under Assumptions \ref{as: bounded}--\ref{as: growth conditions}, for all $k\in[K]$, we have
$$
|(\widehat\btheta_k - \btheta_0)^\top \E_{I(k)}[\bX_i\bX_i^\top](\bI_{\widehat T_k} -\widehat\bTheta_k \E_{I(k)}[\bX_i\bX_i^\top])(\widehat\bgamma_k - \bgamma_0)| \lesssim_P s_{\bgamma}^{3/2}\sqrt{s_{\bphi}}(\log p)^{3/2} / n^{3/2},
$$
where we denoted $\widehat\btheta_k := \widehat\bphi_k - \beta_0\widehat\bgamma_k$.
\end{lem}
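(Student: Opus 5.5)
We bound the bilinear form by an $\ell_2$--$\ell_2$ Cauchy--Schwarz step. Write $\widehat\bSigma_k := \E_{I(k)}[\bX_i\bX_i^\top]$ and $\mathbf M_k := \bI_{\widehat T_k} - \widehat\bTheta_k\widehat\bSigma_k$. The rows of $\mathbf M_k$ whose indices lie outside $\widehat T_k$ are identically zero, because $\widehat\bTheta_k$ has zero rows there and so does $\bI_{\widehat T_k}$. Hence $\mathbf M_k(\widehat\bgamma_k-\bgamma_0)$ is supported on $\widehat T_k$. For each $j\in\widehat T_k$, Hölder's inequality gives
$$
|(\mathbf M_k(\widehat\bgamma_k-\bgamma_0))_j| \le \|\mathbf M_k\|_{\max}\|\widehat\bgamma_k-\bgamma_0\|_1.
$$

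Step one controls $\|\mathbf M_k\|_{\max}$. The second claim of Lemma \ref{lem: auxiliary inequalities} is written with $\lesssim_P\log p/n$, but its proof (which repeats the argument for the first claim) actually gives the squared bound $\|\mathbf M_k\|_{\max}^2\lesssim_P\log p/n$. We also have $\|\widehat\bgamma_k-\bgamma_0\|_1\le\|\widehat\bgamma_k-\bar\bgamma_0\|_1+\|\bar\bgamma_0-\bgamma_0\|_1\lesssim_P s_{\bgamma}\sqrt{\log p/n}$ by Assumption \ref{as: sparsity}.\ref{as: sparse approximations}--\ref{as: lasso estimation}. Combining these with $|\widehat T_k|\lesssim_P s_{\bgamma}$ from Assumption \ref{as: sparsity}.\ref{as: sparse choice}, we get
$$
\|\mathbf M_k(\widehat\bgamma_k-\bgamma_0)\|_2\lesssim_P \sqrt{s_{\bgamma}}\sqrt{\log p/n}\, s_{\bgamma}\sqrt{\log p/n}=s_{\bgamma}^{3/2}\log p/n .
$$

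Step two bounds $\|\widehat\bSigma_k(\widehat\btheta_k-\btheta_0)\|_2$ after restricting to the coordinates in $\widehat T_k$. Since the vector $\mathbf M_k(\widehat\bgamma_k-\bgamma_0)$ lives on $\widehat T_k$, only the $\widehat T_k$-rows of $\widehat\bSigma_k(\widehat\btheta_k-\btheta_0)$ matter. By Cauchy--Schwarz in the $\widehat\bSigma_k$ inner product,
$$
\|(\widehat\bSigma_k(\widehat\btheta_k-\btheta_0))_{\widehat T_k}\|_2\le \|\widehat\bSigma_k^{1/2}(\widehat\btheta_k-\btheta_0)\|_2\,\lambda_{\max}\bigl(\E_{I(k)}[\bX_{i,\widehat T_k}\bX_{i,\widehat T_k}^\top]\bigr)^{1/2}.
$$
The first factor is $\lesssim_P\sqrt{s_{\bphi}\log p/n}$ by Lemma \ref{lem: l2 bounds}. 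The second factor is $\lesssim_P 1$, because $\widehat T_k$ is determined by $\mathbb D_k$ and is independent of the sample $I(k)$; this lets us combine \eqref{eq: rudelson main} from the proof of Lemma \ref{lem: rudelson} (the Rudelson-type bound, which is $o_P(1)$ under Assumption \ref{as: growth conditions}) with the sparse eigenvalue bound in Assumption \ref{as: sparsity}.\ref{as: sparse eigenvalue}, applied to the set $\widehat T_k$ of size $\lesssim_P s_{\bgamma}\le s_{\bphi}\ell_n$.

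Multiplying the two steps gives
$$
\sqrt{s_{\bphi}\log p/n}\cdot s_{\bgamma}^{3/2}\log p/n=s_{\bgamma}^{3/2}\sqrt{s_{\bphi}}(\log p)^{3/2}/n^{3/2},
$$
which is the asserted rate. The main obstacle is step one: it depends on the squared max-norm rate for the \emph{holdout} Gram matrix. To get it, I would decompose $\widehat\bTheta_k\widehat\bSigma_k-\bI_{\widehat T_k}=\widehat\bTheta_k(\widehat\bSigma_k-\E_{I(-k)}[\bX_i\bX_i^\top])+(\widehat\bTheta_k\E_{I(-k)}[\bX_i\bX_i^\top]-\bI_{\widehat T_k})$. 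The first term is controlled by the row $\ell_1$ bound in Assumption \ref{as: nodewise lasso} together with Hoeffding's inequality plus a union bound on both subsamples. The second term is controlled by Assumption \ref{as: nodewise lasso} directly, since zeroing rows outside $\widehat T_k$ does not increase the max norm.
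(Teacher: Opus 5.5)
Your proposal is correct and is essentially the paper's proof. The paper also bounds $\|\mathbf M_k(\widehat\bgamma_k-\bgamma_0)\|_2^2\le|\widehat T_k|\,\|\mathbf M_k\|_{\max}^2\|\widehat\bgamma_k-\bgamma_0\|_1^2$ by H\"older, using, as you correctly note, the squared max-norm rate that the proof of Lemma \ref{lem: auxiliary inequalities} actually gives; it then applies Cauchy--Schwarz in the $\E_{I(k)}[\bX_i\bX_i^\top]$ inner product with Lemma \ref{lem: l2 bounds}, \eqref{eq: rudelson main}, and the sparse eigenvalue condition, and the only difference is that it attaches the sparse-eigenvalue factor to the $\mathbf M_k(\widehat\bgamma_k-\bgamma_0)$ side rather than, as you do, to the $\widehat\btheta_k-\btheta_0$ side, which is cosmetic (your version just makes explicit the support restriction to $\widehat T_k$ that the paper uses implicitly).
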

\begin{proof}[Proof of Lemma \ref{lem: rudelson 2}]
Fix $k\in[K]$ and denote $\mathbb D_k := \{(\bX_i, D_i, Y_i)\}_{i\in I(-k)}$ and $\ba_k:=(\bI_{\widehat T_k} -\widehat\bTheta_k \E_{I(k)}[\bX_i\bX_i^\top])(\widehat\bgamma_k - \bgamma_0)$. Then
\begin{align*}
|\ba_k^\top\E_{I(k)}[\bX_i\bX_i^\top]\ba_k|
& \overset{(i)}{\leq}|\ba_k^\top(\E_{I(k)}[\bX_i\bX_i^\top] - \E[\bX\bX^\top])\ba_k|
+ |\ba_k^\top \E[\bX\bX^\top]\ba_k| \\
&\overset{(ii)}{\lesssim}_P \sqrt{s_{\bgamma}\log p / n}\|\ba_k\|_2^2 + \|\ba_k\|_2^2
\overset{(iii)}{\lesssim} \|\ba_k\|_2^2 \\
&\overset{(iv)}{\lesssim}|\widehat T_k|\| \bI_{\widehat T_k} - \widehat\bTheta_k\E_{I(k)}[\bX_i\bX_i^\top] \|_{\max}^2 \|\widehat\bgamma_k - \bgamma_0\|_1^2
\overset{(v)}{\lesssim}_P s_{\bgamma}^3(\log p)^2/n^2,
\end{align*}
where (i) follows from the triangle inequality, (ii) from \eqref{eq: rudelson main} in the proof of Lemma \ref{lem: rudelson} and Assumptions \ref{as: sparsity}.\ref{as: sparse eigenvalue} and \ref{as: sparsity}.\ref{as: sparse choice}, (iii) from Assumptions \ref{as: growth conditions}, (iv) from H\"{o}lder's inequality, and (v) from Lemma \ref{lem: auxiliary inequalities} and Assumptions \ref{as: sparsity}.\ref{as: sparse approximations}, \ref{as: sparsity}.\ref{as: lasso estimation}, and \ref{as: sparsity}.\ref{as: sparse choice}. The asserted claim follows from combining this bound with Lemma \ref{lem: l2 bounds} and noting that
$$
|(\widehat\btheta_k - \btheta_0)^\top\E_{I(k)}[\bX_i\bX_i^\top]\ba_k| \leq \|(\E_{I(k)}[\bX_i\bX_i^\top])^{1/2}(\widehat\btheta_k - \btheta_0)\|_2\|(\E_{I(k)}[\bX_i\bX_i^\top])^{1/2}\ba_k\|_2
$$
by the Cauchy-Schwarz inequality.
\end{proof}

\begin{lem}\label{lem: rudelson 3}
Under Assumptions \ref{as: bounded}--\ref{as: growth conditions}, for all $k\in[K]$, we have
$$
|\E_{I(k)}[\nu_i\bX_i^\top]\widehat\bTheta_k^\top \E_{I(k)}[\bX_i\bX_i^\top] \widehat\bTheta_k\E_{I(k)}[\bX_i\nu_i]| \lesssim_P s_{\bgamma}/n
$$
\end{lem}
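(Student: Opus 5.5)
The plan is to reduce the sample Gram matrix to the population one, using the concentration bound \eqref{eq: rudelson main} from the proof of Lemma \ref{lem: rudelson}, and then invoke Lemma \ref{lem: another l2 bound}.

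Fix $k\in[K]$ and set $\bm v_k:=\widehat\bTheta_k\E_{I(k)}[\bX_i\nu_i]$, so that the quantity in question is the quadratic form $\bm v_k^\top\E_{I(k)}[\bX_i\bX_i^\top]\bm v_k$. This form is nonnegative, so the absolute value is harmless. By construction only the rows of $\widehat\bTheta_k$ indexed by $\widehat T_k$ are nonzero. Hence $\bm v_k$ is supported on $\widehat T_k$, and
\[
\bm v_k^\top\E_{I(k)}[\bX_i\bX_i^\top]\bm v_k=\bm v_{k,\widehat T_k}^\top\E_{I(k)}[\bX_{i,\widehat T_k}\bX_{i,\widehat T_k}^\top]\bm v_{k,\widehat T_k}.
\]
The set $\widehat T_k$ is determined by $\mathbb D_k$, which is independent of the observations in $I(k)$. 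Therefore $\E[\bX_{\widehat T_k}\bX_{\widehat T_k}^\top\mid\mathbb D_k]$ is just the population Gram submatrix indexed by the fixed set $\widehat T_k$.

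Next I would split the sample Gram submatrix into the population submatrix plus the deviation. By \eqref{eq: rudelson main}, the deviation has operator norm $\lesssim_P\sqrt{s_{\bgamma}\log p/n}$, which is $o_P(1)$ under Assumption \ref{as: growth conditions} because $s_{\bgamma}\le s_{\bphi}$. For the population submatrix, Assumption \ref{as: sparsity}.\ref{as: sparse choice} gives $|\widehat T_k|\lesssim_P s_{\bgamma}\le s_{\bphi}\ell_n$ with probability approaching one. The sparse eigenvalue condition in Assumption \ref{as: sparsity}.\ref{as: sparse eigenvalue} then bounds its largest eigenvalue by a constant. Combining the two pieces,
\[
\bm v_k^\top\E_{I(k)}[\bX_i\bX_i^\top]\bm v_k\lesssim_P\|\bm v_k\|_2^2 .
\]

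Finally, Lemma \ref{lem: another l2 bound} gives $\|\bm v_k\|_2^2\lesssim_P s_{\bgamma}/n$, which is the claim.

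The only delicate point is the eigenvalue step. We need $|\widehat T_k|\le s_{\bphi}\ell_n$ with high probability, and this holds because $|\widehat T_k|=O_P(s_{\bgamma})$ while $\ell_n\to\infty$. Conditioning on $\mathbb D_k$ is what makes \eqref{eq: rudelson main} applicable to the random index set $\widehat T_k$.
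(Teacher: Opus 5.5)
Your proposal is correct and follows essentially the same route as the paper. The paper also writes the sample Gram matrix as its population counterpart plus a deviation, bounds that deviation by \eqref{eq: rudelson main}, bounds the population part by the sparse eigenvalue condition together with $|\widehat T_k|\lesssim_P s_{\bgamma}$, and finishes with Lemma \ref{lem: another l2 bound}. Your remarks that $\widehat\bTheta_k\E_{I(k)}[\bX_i\nu_i]$ is supported on $\widehat T_k$, and that $|\widehat T_k|\le s_{\bphi}\ell_n$ with probability approaching one, only make explicit what the paper's step (ii) uses implicitly.
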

\begin{proof}[Proof of Lemma \ref{lem: rudelson 3}]
Fix $k\in[K]$ and denote $\mathbb D_k := \{(\bX_i, D_i, Y_i)\}_{i\in I(-k)}$. Also, denote $\ba_k:=\widehat\bTheta_k\E_{I(k)}[\bX_i\nu_i]$. The asserted claim follows by noting that
\begin{align*}
|\ba_k^\top\E_{I(k)}[\bX_i\bX_i^\top]\ba_k|
& \overset{(i)}{\leq}|\ba_k^\top(\E_{I(k)}[\bX_i\bX_i^\top] - \E[\bX\bX^\top])\ba_k|
+ |\ba_k^\top \E[\bX\bX^\top]\ba_k| \\
&\overset{(ii)}{\lesssim}_P \sqrt{s_{\bgamma}\log p / n}\|\ba_k\|_2^2 + \|\ba_k\|_2^2
\overset{(iii)}{\lesssim} \|\ba_k\|_2^2 
\overset{(iv)}{\lesssim}_P s_{\bgamma}/n,
\end{align*}
where (i) follows from the triangle inequality, (ii) from \eqref{eq: rudelson main} in the proof of Lemma \ref{lem: rudelson} and Assumptions \ref{as: sparsity}.\ref{as: sparse eigenvalue} and \ref{as: sparsity}.\ref{as: sparse choice}, (iii) from Assumptions \ref{as: growth conditions}, and (iv) from Lemma \ref{lem: another l2 bound}.
\end{proof}

\begin{lem}\label{lem: approximation error}
Under Assumption \ref{as: sparsity}, for all $k\in[K]$, we have
$$
|(\bgamma_0 - \bgamma_{0,\widehat T_k})^\top\E[\bX\bX^\top](\bgamma_0 - \bgamma_{0,\widehat T_k})| \lesssim_P \|\bar\bgamma_0 - \bar\bgamma_{0,\widehat T_k}\|_2^2
+ \E[|\bX^\top(\bgamma_0 - \bar\bgamma_0)|^2]
+ \| \bgamma_0 - \bar\bgamma_0 \|_2^2.
$$
\end{lem}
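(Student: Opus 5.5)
We decompose $\bgamma_0 - \bgamma_{0,\widehat T_k}$ into a part supported off $\widehat T_k$ coming from the sparse approximation and a part coming from the approximation error. Write $\ba_{\widehat T_k}$ for the vector obtained from $\ba$ by zeroing coordinates outside $\widehat T_k$, i.e. $\ba_{\widehat T_k} = \bI_{\widehat T_k}\ba$. Set $\bdelta := \bgamma_0 - \bar\bgamma_0$. Then
\[
\bgamma_0 - \bgamma_{0,\widehat T_k} = (\bar\bgamma_0 - \bar\bgamma_{0,\widehat T_k}) + \bdelta - \bdelta_{\widehat T_k}.
\]
Using $(\ba+\mathbf b+\mathbf c)^\top\bSigma_0(\ba+\mathbf b+\mathbf c)\le 3(\ba^\top\bSigma_0\ba + \mathbf b^\top\bSigma_0\mathbf b + \mathbf c^\top\bSigma_0\mathbf c)$, which holds since $\bSigma_0=\E[\bX\bX^\top]$ is positive semidefinite, it suffices to bound three quadratic forms separately.

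\textbf{First term.} The vector $\bar\bgamma_0 - \bar\bgamma_{0,\widehat T_k}$ is supported on the support of $\bar\bgamma_0$, which has size at most $s_{\bgamma}\le s_{\bphi}\le s_{\bphi}\ell_n$ by Assumption \ref{as: sparsity}.\ref{as: sparse vectors}. Assumption \ref{as: sparsity}.\ref{as: sparse eigenvalue} then gives a quadratic form $\lesssim \|\bar\bgamma_0 - \bar\bgamma_{0,\widehat T_k}\|_2^2$, and this bound is deterministic given $\widehat T_k$.

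\textbf{Second term.} We have $\bdelta^\top\bSigma_0\bdelta = \E[|\bX^\top(\bgamma_0-\bar\bgamma_0)|^2]$ by definition, so there is nothing to do.

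\textbf{Third term.} The vector $\bdelta_{\widehat T_k}$ is supported on $\widehat T_k$, and $|\widehat T_k|\lesssim_P s_{\bgamma}\le s_{\bphi}\ell_n$ with probability approaching one by Assumption \ref{as: sparsity}.\ref{as: sparse choice} (the $\lesssim_P$ constant is eventually dominated by $\ell_n\to\infty$). On that event, the sparse eigenvalue bound gives $\bdelta_{\widehat T_k}^\top\bSigma_0\bdelta_{\widehat T_k}\lesssim \|\bdelta_{\widehat T_k}\|_2^2\le\|\bgamma_0-\bar\bgamma_0\|_2^2$. Hence this term is $\lesssim_P \|\bgamma_0-\bar\bgamma_0\|_2^2$.

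\textbf{Main obstacle.} The only delicate point is the third term. There we must make sure $|\widehat T_k|$ falls within the sparsity level $s_{\bphi}\ell_n$ at which the sparse eigenvalue condition holds. This is what turns the deterministic inequality into a $\lesssim_P$ statement, and it is exactly why the lemma is stated in probability.

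Summing the three bounds gives the claim.
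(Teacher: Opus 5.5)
Your proposal is correct and follows essentially the same route as the paper. It uses the same three-way split into $\bar\bgamma_0-\bar\bgamma_{0,\widehat T_k}$, $\bgamma_0-\bar\bgamma_0$, and the restriction of $\bgamma_0-\bar\bgamma_0$ to $\widehat T_k$, and it applies the sparse eigenvalue condition on the support of $\bar\bgamma_0$ and on $\widehat T_k$. Two differences are cosmetic: the paper writes the quadratic form as a conditional expectation given the training fold, which is equivalent, and your remark that $\ell_n\to\infty$ absorbs the $\lesssim_P$ constant in $|\widehat T_k|$ spells out a step the paper leaves implicit.
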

\begin{proof}[Proof of Lemma \ref{lem: approximation error}]
Fix $k\in[K]$ and denote $\mathbb D_k := \{(\bX_i, D_i, Y_i)\}_{i\in I(-k)}$. Then
\begin{align*}
&|(\bgamma_0 - \bgamma_{0,\widehat T_k})^\top\E[\bX\bX^\top](\bgamma_0 - \bgamma_{0,\widehat T_k})|
\overset{(i)}{=}\E[| \bX^\top(\bgamma_0 - \bgamma_{0,\widehat T_k}) |^2\mid\mathbb D_k] \\
&\qquad \overset{(ii)}{\lesssim} \E[ |\bX^\top(\bar\bgamma_0 - \bar\bgamma_{0,\widehat T_k})|^2 \mid\mathbb D_k] + \E[| \bX^\top(\bgamma_0 - \bar\bgamma_0 - (\bgamma_{0,\widehat T_k} - \bar\bgamma_{0,\widehat T_k})) |^2 \mid\mathbb D_k] \\
&\qquad \overset{(iii)}{\lesssim} \E[| \bX^\top(\bar\bgamma_0 - \bar\bgamma_{0,\widehat T_k}) |^2\mid\mathbb D_k] + \E[| \bX^\top(\bgamma_0 - \bar\bgamma_0) |^2\mid\mathbb D_k] + \E[| \bX^\top(\bgamma_{0,\widehat T_k} - \bar\bgamma_{0,\widehat T_k}) |^2\mid\mathbb D_k] \\
&\qquad \overset{(iv)}{\lesssim}_P \|\bar\bgamma_0 - \bar\bgamma_{0,\widehat T_k}\|_2^2
+ \E[|\bX^\top(\bgamma_0 - \bar\bgamma_0)|^2]
+ \| \bgamma_0 - \bar\bgamma_0 \|_2^2,
\end{align*}
where (i) follows from noting that the transpose of a scalar is equal to the same scalar, (ii) from the triangle inequality, (iii) from the triangle inequality as well, and (iv) from Assumptions \ref{as: sparsity}.\ref{as: sparse vectors}, \ref{as: sparsity}.\ref{as: sparse eigenvalue}, and \ref{as: sparsity}.\ref{as: sparse choice}. The asserted claim follows.
\end{proof}

\section{Additional Simulation Results}\label{sec:additional-simulations}
Figures \ref{fig:sim-density-rho02}--\ref{fig:sim-density-rho08} contain kernel
density plots for the remaining values $\{0.2, 0.4, 0.6, 0.8\}$ of the parameter
$\rho$ dictating the correlation between controls. 

\begin{figure}[H]
\caption{Kernel densities of studentized estimates for $\rho=0.2$. The dotted line is the standard normal density.}
\centering
\includegraphics[width=\textwidth,height=0.88\textheight,keepaspectratio,trim=0 0 0 40,clip]{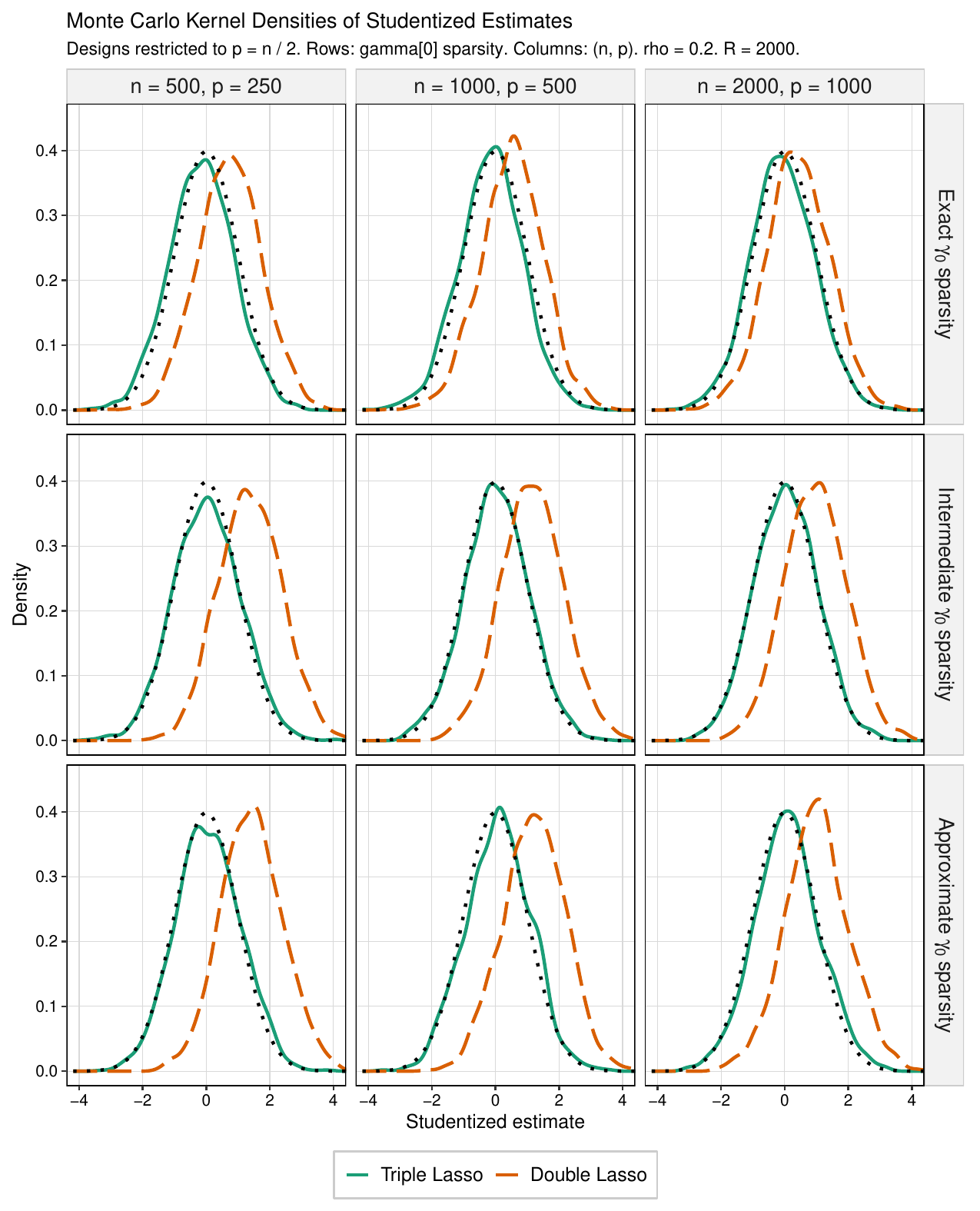}
\label{fig:sim-density-rho02}
\end{figure}

\begin{figure}[H]
\caption{Kernel densities of studentized estimates for $\rho=0.4$. The dotted line is the standard normal density.}
\centering
\includegraphics[width=\textwidth,height=0.88\textheight,keepaspectratio,trim=0 0 0 40,clip]{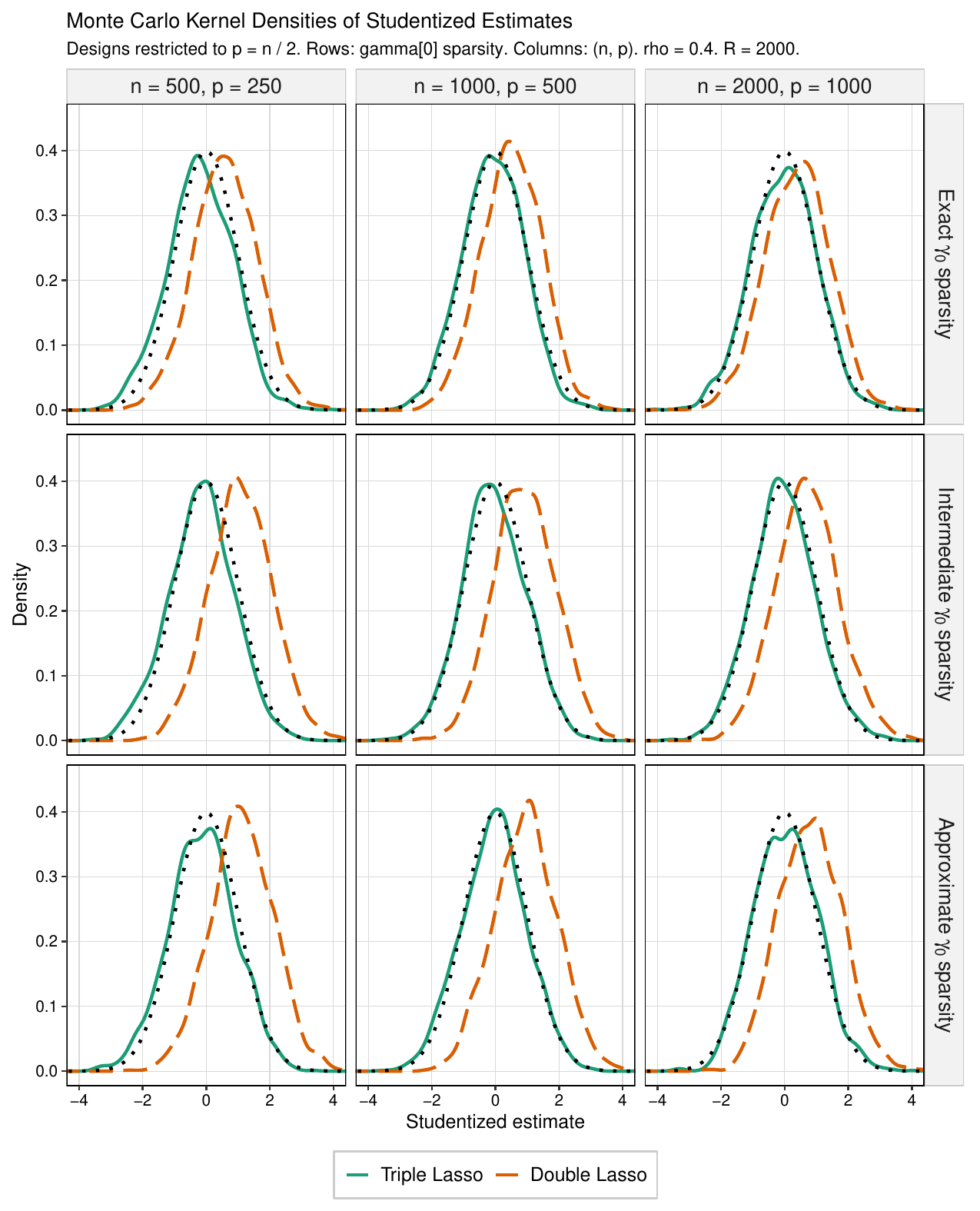}
\label{fig:sim-density-rho04}
\end{figure}

\begin{figure}[H]
\caption{Kernel densities of studentized estimates for $\rho=0.6$. The dotted line is the standard normal density.}
\centering
\includegraphics[width=\textwidth,height=0.88\textheight,keepaspectratio,trim=0 0 0 40,clip]{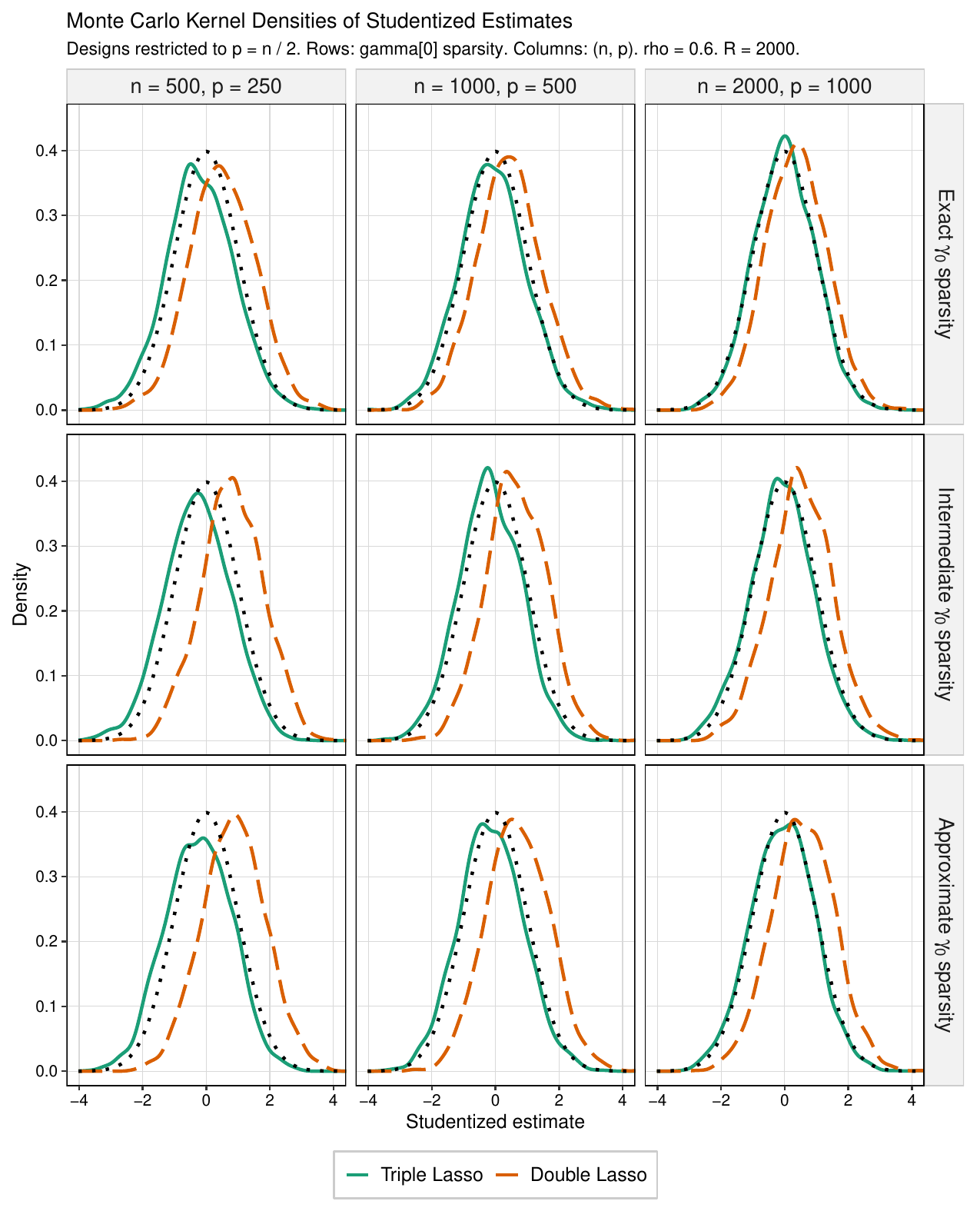}
\label{fig:sim-density-rho06}
\end{figure}

\begin{figure}[H]
\caption{Kernel densities of studentized estimates for $\rho=0.8$. The dotted line is the standard normal density.}
\centering
\includegraphics[width=\textwidth,height=0.88\textheight,keepaspectratio,trim=0 0 0 40,clip]{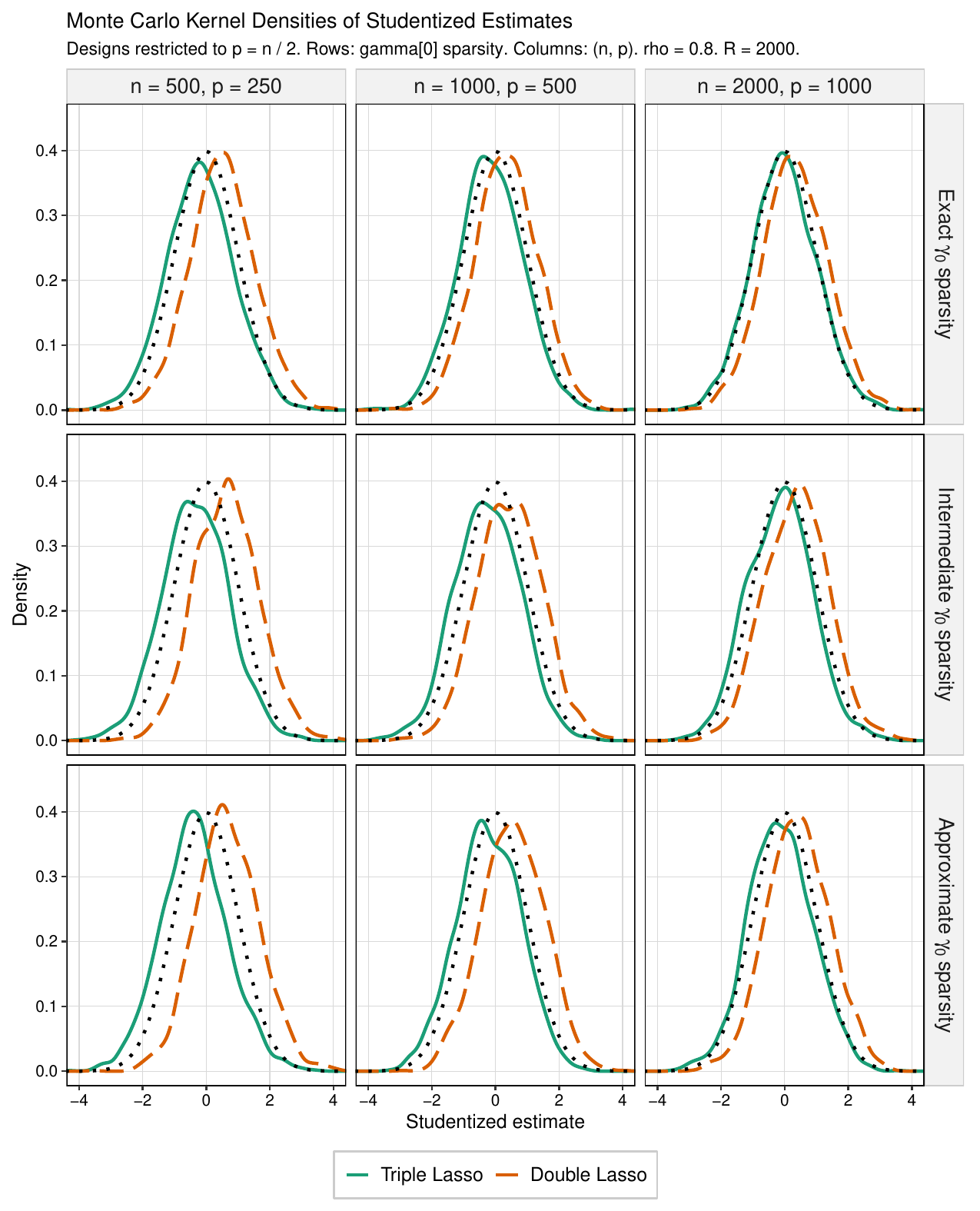}
\label{fig:sim-density-rho08}
\end{figure}

\end{document}